\newenvironment{compactenum}{\begin{enumerate}[nosep]}{\end{enumerate}}
\newenvironment{compactitem}{\begin{itemize}[nosep]}{\end{itemize}}
\newenvironment{inparaenum}{\begin{enumerate*}[label={\emph{(\roman*)}}]}{\end{enumerate*}}
\newtheoremstyle{cited}%
  {}
  {}
  {\itshape}
  {}
  {\bfseries}
  {\bf .}
  {.5em}
  {\thmname{#1} \thmnumber{#2} \thmnote{\normalfont#3}}
\theoremstyle{cited}
\newtheorem{citedthm}[thm]{Theorem}
\newtheorem{citedlem}[thm]{Lemma}
\newcommand{\LmonoBool}
{\ensuremath{\operatorname{\mathbf{L^{Bool}}}}
\xspace}
\newcommand{\Kmono}{\ensuremath{\operatorname{\mathbf{K}}}\xspace}
\newcommand{\Tmono}{\ensuremath{\operatorname{\mathbf{T}}}\xspace}
\newcommand{\Fmono}{\ensuremath{\operatorname{\mathbf{K4}}}\xspace}
\newcommand{\SFmono}{\ensuremath{\operatorname{\mathbf{S4}}}\xspace}
\newcommand{\SFive}{\ensuremath{\operatorname{\mathbf{S5}}}\xspace}
\newcommand{\KmonoHorn}{\ensuremath{\operatorname{\mathbf{K^{Horn}}}}\xspace}
\newcommand{\LmonoHorn}{\ensuremath{\operatorname{\mathbf{L^{Horn}}}}\xspace}
\newcommand{\TmonoHorn}{\ensuremath{\operatorname{\mathbf{T^{Horn}}}}\xspace}
\newcommand{\FmonoHorn}{\ensuremath{\operatorname{\mathbf{K4^{Horn}}}}\xspace}
\newcommand{\SFmonoHorn}{\ensuremath{\operatorname{\mathbf{S4^{Horn}}}}\xspace}
\newcommand{\KmonoHornBox}{\ensuremath{\operatorname{\mathbf{K^{Horn,\Box}}}}\xspace}
\newcommand{\TmonoHornBox}{\ensuremath{\operatorname{\mathbf{T^{Horn,\Box}}}}\xspace}
\newcommand{\SFmonoHornBox}{\ensuremath{\operatorname{\mathbf{S4^{Horn,\Box}}}}\xspace}
\newcommand{\FmonoHornBox}{\ensuremath{\operatorname{\mathbf{K4^{Horn,\Box}}}}\xspace}
\newcommand{\LmonoHornBox}{\ensuremath{\operatorname{\mathbf{L^{Horn,\Box}}}}\xspace}
\newcommand{\LmonoHornDia}{\ensuremath{\operatorname{\mathbf{L^{Horn,\Diamond}}}}\xspace}
\newcommand{\LmonoKrom}{\ensuremath{\operatorname{\mathbf{L^{Krom}}}}\xspace}
\newcommand{\LmonoKromBox}{\ensuremath{\operatorname{\mathbf{L^{Krom,\Box}}}}\xspace}
\newcommand{\LmonoKromDia}{\ensuremath{\operatorname{\mathbf{L^{Krom,\Diamond}}}}\xspace}
\newcommand{\LmonoCore}{\ensuremath{\operatorname{\mathbf{L^{core}}}}\xspace}
\newcommand{\KmonoCoreBox}{\ensuremath{\operatorname{\mathbf{K^{core,\Box}}}}\xspace}
\newcommand{\LmonoCoreBox}{\ensuremath{\operatorname{\mathbf{L^{core,\Box}}}}\xspace}
\newcommand{\LmonoCoreDia}{\ensuremath{\operatorname{\mathbf{L^{core,\Diamond}}}}\xspace}
\newcommand{\Lmono}{\ensuremath{\operatorname{\mathbf{L}}}\xspace}
\newcommand{\LBool}{\ensuremath{\operatorname{\mathbf{L^{Bool}}}}\xspace}
\newcommand{\LHorn}{\ensuremath{\operatorname{\mathbf{L^{Horn}}}}\xspace}
\newcommand{\LHornBox}{\ensuremath{\operatorname{\mathbf{L^{Horn,\Box}}}}\xspace}
\newcommand{\LHornDia}{\ensuremath{\operatorname{\mathbf{L^{Horn,\Diamond}}}}\xspace}
\newcommand{\LKrom}{\ensuremath{\operatorname{\mathbf{L^{Krom}}}}\xspace}
\newcommand{\LKromBox}{\ensuremath{\operatorname{\mathbf{L^{Krom,\Box}}}}\xspace}
\newcommand{\LKromDia}{\ensuremath{\operatorname{\mathbf{L^{Krom,\Diamond}}}}\xspace}
\newcommand{\LCore}{\ensuremath{\operatorname{\mathbf{L^{core}}}}\xspace}
\newcommand{\LCoreBox}{\ensuremath{\operatorname{\mathbf{L^{core,\Box}}}}\xspace}
\newcommand{\LCoreDia}{\ensuremath{\operatorname{\mathbf{L^{core,\Diamond}}}}\xspace}
\newcommand{\THornBox}{\ensuremath{\operatorname{\mathbf{T^{Horn,\Box}}}}\xspace}
\newcommand{\Gr}{\ensuremath{\mathbf{GR(1)}}\xspace}
\newcommand{\FHornBox}{\ensuremath{\operatorname{\mathbf{K4^{Horn,\Box}}}}\xspace}
\newcommand{\SFHornBox}{\ensuremath{\operatorname{\mathbf{S4^{Horn,\Box}}}}\xspace}
\newcommand{\diax}{{\ensuremath{\Diamond}\xspace}}
\newcommand{\boxx}{{\ensuremath{\Box}\xspace}}
\newcommand{\mmodels}{\Vdash}
\newcommand{\eqclass}[1]{\llbracket{#1}\rrbracket}
\newcommand{\Rx}{\mathrel{R}}
\newcommand{\Rprod}{\mathrel{R_{R_1\times R_2}}}
\newcommand{\NP}{\textsc{NP}}
\newcommand{\Pt}{\textsc{P}}
\newcommand{\NLOG}{\textsc{NLogSpace}}
\newcommand{\PSpace}{\textsc{PSpace}}
\newcommand{\HS}{\ensuremath{\mathbf{HS}}\xspace}
\newcommand{\LTL}{\ensuremath{\mathbf{LTL}}\xspace}
\newcommand{\QBF}{\ensuremath{\mathbf{QBF}}\xspace}
\newcommand{\mpequiv}{\equiv^p}
\newcommand{\meequiv}{\equiv^e}
\newcommand{\wlexp}{\prec^w}
\newcommand{\slexp}{\prec^s}
\begin{document}

\title[On Sub-Propositional Fragments of Modal Logic]{On Sub-Propositional Fragments of Modal Logic}

\author[D. Bresolin et al.]{Davide Bresolin}	
\address{University of Padova, Italy}	
\email{davide.bresolin@unipd.it}  

\author[]{Emilio Mu\~noz-Velasco}	
\address{Universidad de M\'alaga, Spain}	
\email{ejmunoz@uma.es}  

\author[]{Guido Sciavicco}	
\address{University of Ferrara, Italy }	
\email{guido.sciavicco@unife.it}  



\keywords{Modal Logic; Horn and sub-Horn fragments; Krom and sub-Krom fragments.}
\subjclass{Theory; Verification; Algorithms.}


\begin{abstract}
In this paper, we consider the well-known modal logics \Kmono, \Tmono, \Fmono, and \SFmono, and we study some of their sub-propositional fragments, namely the classical Horn fragment, the Krom fragment, the so-called core fragment, defined as the intersection of the Horn and the Krom fragments, plus their sub-fragments obtained by limiting the use of boxes and diamonds in clauses. We focus, first, on the relative expressive power of such languages: we introduce a suitable measure of expressive power, and we obtain a complex hierarchy that encompasses all fragments of the considered logics. Then, after observing the low expressive power, in particular, of the Horn fragments without diamonds, we study the computational complexity of their satisfiability problem, proving that, in general, it becomes polynomial.
\end{abstract}

\maketitle

\section{Introduction}\label{section:introduction}

Modal logic has been applied to computer science in several areas, such as artificial intelligence, distributed systems, and computational linguistics, and modal operators have been interpreted in different ways for different applications. %
Moreover, modal logic is paradigmatic of the whole variety of description logics~\cite{Baader:2003:DLH:885746}, temporal logics~\cite{temporal_logic_foundations}, spatial logics~\cite{DBLP:reference/spatial/2007}, among others. It is well-known that the basic (normal) modal logic  \Kmono, can be {\em semantically} defined as the logic of all directed relations, or {\em syntactically} defined via the single axiom $K$: $\Box(p\rightarrow q)\rightarrow (\Box p\rightarrow \Box q)$. Modal logics of particular classes of relations are often referred to as {\em axiomatic extensions} of \Kmono. In this paper we consider, besides \Kmono, some relevant extensions such as the logic \Fmono\ of all {\em transitive} relations (defined by the axiom $4$: $\Box p\rightarrow \Box\Box p$), the logic \Tmono\ of all {\em reflexive} relations (defined by the axiom $T$: $\Box p\rightarrow p$), and the logic \SFmono of all reflexive and transitive relations (preorders), defined by the combination of all axioms above. From a purely semantic perspective, these logics (together with \SFive, that is, the logic of all transitive, reflexive and symmetric relations) can be represented in a lattice as in Fig.~\ref{fig:logics}. The satisfiability problem for these modal logics is \PSpace-complete for all these logics, except for \SFive, for which it becomes \NP-complete~\cite{DBLP:journals/siamcomp/Ladner77}.

\begin{figure}[t]
\begin{center}
\begin{tikzpicture}[-,node distance=1.5cm,on grid]
\node[draw=none]   (T) {\Tmono};
\node[draw=none]   (K)[above right = of T] {\Kmono};
\node[draw=none]   (S4)[below right = of T] {\SFmono};
\node[draw=none]   (K4)[below right = of K] {\Fmono};
\node[draw=none]   (S5)[below = of S4] {\SFive};
\draw[-] (K) -- (T);
\draw[-] (K) -- (K4);
\draw[-] (K4) -- (S4);
\draw[-] (T) -- (S4);
\draw[-] (S4) -- (S5);
\end{tikzpicture}\label{fig:logics}
\end{center}
\caption{Classical axiomatic extensions of \Kmono\ and their semantical containment relation.}
\end{figure}

\medskip

The quest of computationally well-behaved restrictions of logics, and the observation that meaningful statements can be still expressed under sub-propositional restrictions has recently increased the interest in sub-propositional fragments of temporal description logics~\cite{DBLP:journals/tocl/ArtaleKRZ14}, temporal logics~\cite{DBLP:conf/lpar/ArtaleKRZ13}, and interval temporal logics~\cite{artale2015,DBLP:conf/jelia/BresolinMS14}. However, sub-propositional modal logics have received little or no attention, with the exception of~\cite{ChenLin94,del1987note,nguyen2004complexity}, which are limited to the Horn fragment. There are two standard ways to weaken the classical propositional language based on the clausal form of formulas: the {\em Horn fragment}, that only allows clauses with at most one positive literal~\cite{horn}, and the {\em Krom fragment}, that only allows clauses with at most two (positive or negative) literals~\cite{krom}. The {\em core fragment} combines both restrictions. Orthogonally, one can restrict a modal language in clausal form by allowing only diamonds or only boxes in positive literals, obtaining weaker fragments that we call, respectively, the {\em diamond fragment} and {\em box fragment}. By combining these two levels of restriction, one may obtain several sub-propositional fragments of modal logics, and, by extension, of description, temporal, and spatial logics.

\medskip

The satisfiability problem for classical propositional Horn logic is $\Pt$-complete~\cite{cookhorn}, while the classical propositional Krom logic satisfiability problem (also known as the 2-SAT problem) is $\NLOG$-complete~\cite{papa2003}, and the same holds for the core fragment (this results is an immediate strengthening of the classical $\NLOG$-hardness of 2-SAT). The satisfiability problem for quantified propositional logic (\QBF), which is \PSpace-complete in its general form, becomes \Pt\ when formulas are restricted to binary (Krom) clauses~\cite{asp79}. In~\cite{DBLP:conf/lpar/ArtaleKRZ13,ChenLin93}, the authors study different sub-propositional fragments of Linear Temporal Logic (\LTL). By excluding the operators Since and Until from the language, and keeping only the Next/Previous-time operators and the box version of Future and Past, it is possible to prove that the Krom and core fragments are \NP-hard, while the Horn fragment is still \PSpace-complete (the same complexity of the full language). Moreover, the complexity of the Horn, Krom, and core fragments without Next/Previous-time operators range from \NLOG\ (core), to \Pt\ (Horn), to \NP-hard (Krom). Where only a universal (anywhere in time) modality is allowed their complexity is even lower (from \NLOG\ to \Pt). Temporal extensions of the description logic DL-Lite have been studied under similar sub-propositional restrictions, and similar improvements in the complexity of various problems have been found~\cite{DBLP:journals/tocl/ArtaleKRZ14}. Sub-propositional fragments of the undecidable interval temporal logic \HS~\cite{interval_modal_logic}, have also been studied. The Horn, Krom, and core restrictions of \HS\ are still undecidable~\cite{DBLP:conf/jelia/BresolinMS14}, but weaker restrictions have shown positive results. In particular, the Horn fragment of \HS\ without diamonds becomes $\Pt$-complete in two interesting cases~\cite{artale2015,bresolin2017horn}: first, when it is interpreted over dense linear orders, and, second, when the semantics of its modalities are made reflexive. On the bases of these results, sub-propositional interval temporal extensions of description logics have been introduced in~\cite{artale2015}. Among other clausal forms of (temporal) logics, it is worth mentioning the fragment \Gr\ of \LTL~\cite{DBLP:journals/jcss/BloemJPPS12}. Lastly, the Horn fragments of modal logic \Kmono and of several of its axiomatic extensions have been considered in~\cite{ChenLin94,del1987note,nguyen2004complexity}; in particular, it is known that \Kmono, \Tmono, \Fmono, and \SFmono are all \PSpace-complete even under the Horn restriction, while \SFive, which is \NP-complete in the full propositional case, becomes \Pt-complete.

\medskip

In this paper, we consider sub-propositional fragments of the modal logics \Kmono, \Tmono, \Fmono, and \SFmono, and study their relative expressive power in a systematic way. We consider two different notions of relative expressive power for fragments of modal logic, and we provide several results that give rise to two different hierarchies, leaving only a few open problems. Then, motivated by the observation that the fragment Horn box (and the fragment core box) of the logics we consider are expressively very weak, we launch an investigation about the complexity of their satisfiability problem, and we prove a suitable small-model theorem for such fragments. When paired with the results in~\cite{Nguyen2000,Nguyen2016}, such a small-model result allows us to prove \Pt-completeness of the Horn box fragments of \Kmono, \Tmono, \Fmono, and \SFmono.
Finally, we give a deterministic polynomial time satisfiability-checking algorithm that covers all considered fragments and that it is easy to implement.
To the best of our knowledge, this is the first work where sub-Krom and sub-Horn fragments of \Kmono have been considered.

\medskip

Weak logics such as those mentioned in this paper can be applied in practice, but the main motivation to study them is that they are basic modal logics, and many other (practically useful) formalisms are built directly over them. Among these, we may mention {\em epistemic} logic, {\em deontic} logic, and {\em provability} logic~\cite{Gabbay:2013:HPL:2564760} (the latter, in particular, is a known variant of \SFmono), but, also, the entire range of description logics (starting from $\mathcal{\bf ALC}$, which is syntactic variant of \Kmono~\cite{vanHarmelen:2007:HKR:1557461}). Studying sub-propositional fragments, such as Horn-like fragments of logics is naturally motivated by the fact that Horn-like formulas can be seen as rules, and this is the basis of logic programming. {\em Modal} logic programming is a relatively important topic (see, e.g.,~\cite{Barringer1990,debart92,DBLP:journals/fuin/Nguyen03}). While SLD-resolution can be adapted to several modal logics under the Horn restriction~\cite{DBLP:journals/fuin/Nguyen03}, it is natural to imagine that a similar rule can be devised under the Horn box restriction that allow even more efficient deductions --- we briefly consider this problem in the last section of this paper. Modal logic programs can be used in modal deductive databases (which consist of facts and positive modal logic rules)~\cite{modal_databases}, and one may be interested in the fact that certain types of queries belong to a fragment of modal logic with better computational properties.

\medskip

The paper is organized as follows. After the necessary preliminaries, in Section 2, we study the relative expressive power of the various fragments in Section 3, and the complexity of the satisfiability problem for the Horn box fragments is studied in Section 4, before concluding.

\begin{figure*}[t!]
   \centering
\begin{tikzpicture}[>=latex,line join=bevel,scale=0.60,thick]
\begin{scope}
  \pgfsetstrokecolor{black}
  \definecolor{strokecol}{rgb}{1.0,0.0,0.0};
  \pgfsetstrokecolor{strokecol}
\end{scope}
\begin{scope}
  \pgfsetstrokecolor{black}
  \definecolor{strokecol}{rgb}{1.0,0.0,0.0};
  \pgfsetstrokecolor{strokecol}
\end{scope}
  \node (Core) at (336.0bp,106.0bp) [draw,draw=none] {$\LCore$};
  \node (Bool) at (335.0bp,250.0bp) [draw,draw=none] {$\LBool$};
  \node (CoreBox) at (285.0bp,34.0bp) [draw,draw=none] {$\LCoreBox$};
  \node (Krom) at (177.0bp,178.0bp) [draw,draw=none] {$\LKrom$};
  \node (HornBox) at (500.0bp,106.0bp) [draw,draw=none] {$\LHornBox$};
  \node (Horn) at (494.0bp,178.0bp) [draw,draw=none] {$\LHorn$};
  \node (KromDia) at (172.0bp,106.0bp) [draw,draw=none] {$\LKromDia$};
  \node (HornDia) at (606.0bp,106.0bp) [draw,draw=none] {$\LHornDia$};
  \node (KromBox) at (63.0bp,106.0bp) [draw,draw=none] {$\LKromBox$};
  \node (CoreDia) at (389.0bp,34.0bp) [draw,draw=none] {$\LCoreDia$};
  \draw [->,solid] (Krom) ..controls (175.21bp,151.98bp) and (174.55bp,142.71bp)  .. (KromDia);
  \draw [->,solid] (Horn) ..controls (496.14bp,151.98bp) and (496.94bp,142.71bp)  .. (HornBox);
  \draw [->,solid] (Horn) ..controls (535.91bp,150.81bp) and (553.92bp,139.55bp)  .. (HornDia);
  \draw [->,solid] (KromDia) ..controls (215.2bp,85.116bp) and (222.26bp,82.327bp)  .. (229.0bp,80.0bp) .. controls (273.42bp,64.667bp) and (288.24bp,65.91bp)  .. (CoreDia);
  \draw [->,solid] (HornDia) ..controls (566.65bp,85.093bp) and (560.19bp,82.312bp)  .. (554.0bp,80.0bp) .. controls (517.07bp,66.216bp) and (474.06bp,54.626bp)  .. (CoreDia);
  \draw [->,solid] (Bool) ..controls (279.09bp,224.23bp) and (246.9bp,209.97bp)  .. (Krom);
  \draw [->,solid] (KromBox) ..controls (104.67bp,85.136bp) and (111.49bp,82.341bp)  .. (118.0bp,80.0bp) .. controls (155.13bp,66.659bp) and (198.15bp,55.157bp)  .. (CoreBox);
  \draw [->,solid] (Krom) ..controls (234.17bp,151.83bp) and (263.35bp,138.99bp)  .. (Core);
  \draw [->,solid] (Horn) ..controls (438.57bp,152.44bp) and (408.8bp,139.25bp)  .. (Core);
  \draw [->,solid] (Core) ..controls (317.4bp,79.474bp) and (310.12bp,69.483bp)  .. (CoreBox);
  \draw [->,solid] (Bool) ..controls (391.5bp,224.12bp) and (425.12bp,209.32bp)  .. (Horn);
  \draw [->,solid] (Krom) ..controls (134.21bp,150.72bp) and (115.67bp,139.34bp)  .. (KromBox);
  \draw [->,solid] (Core) ..controls (355.39bp,79.389bp) and (363.05bp,69.277bp)  .. (CoreDia);
  \draw [->,solid] (HornBox) ..controls (456.57bp,85.077bp) and (449.63bp,82.322bp)  .. (443.0bp,80.0bp) .. controls (400.18bp,64.998bp) and (386.05bp,65.476bp)  .. (CoreBox);
%
%
\draw [dashed] plot [smooth] coordinates {(280bp,240bp) (380bp,140bp) (700bp,140bp) };
\draw [dashed] plot [smooth] coordinates {(230bp,20bp) (370bp,190bp) (690bp,220bp) };
\draw [dashed] plot [smooth] coordinates {(250bp,260bp) (320bp,220bp) (690bp,235bp) };
\node[rotate=2.5] at (630bp,155bp) {\begin{tabular}{l}\footnotesize{$\PSpace$-complete}~\cite{ChenLin94}\\ \footnotesize{$\mathbf L=\Kmono,\Tmono,\Fmono,\SFmono$}\end{tabular}};
\node[rotate=2.5] at (590bp,207bp) {$\in\Pt$~\cite{nguyen2004complexity}, $\mathbf L=\SFive$};
\node[rotate=2.5] at (570bp,245bp) {$\NP$-complete~\cite{nguyen2004complexity}, $\mathbf L=\SFive$};
\end{tikzpicture}
	\caption{Relative expressive power by containment and known complexity results.}
   \label{fig:exprpower}
\end{figure*}

\section{Syntax and Semantics}\label{section:preliminaries}

Let us fix a unary modal {\em similarity type} with a single modality and a denumerable set
$\mathcal P$ of propositional letters. The associated {\em modal language} \Kmono contains all and only the formulas generated by the following grammar:
\begin{equation}
\varphi::=\top\mid p\mid \neg\varphi\mid \varphi\vee\varphi\mid \diax\varphi\mid \boxx\varphi,\label{boolgrammar}
\end{equation}

\noindent where $p\in\mathcal P$, $\diax$ is called {\em diamond}, and $\boxx$ is called {\em box}. The other classical operators, such as $\rightarrow$ and $\wedge$, can be considered as abbreviations. The \emph{length} of a formula $\varphi$ is denoted by $|\varphi|$ and it is defined as the number of symbols in $\varphi$. By $md(\varphi)$, we denote the {\em modal depth} of $\varphi$, that is, the maximum number of nested boxes and diamonds in $\varphi$. Finally, $Cl(\varphi)$ denotes the \emph{closure} of the formula $\varphi$, that is, the set of all sub-formulas of $\varphi$. Notice that both the modal depth and the cardinality of the closure are bounded by $|\varphi|$.

\medskip

A {\em Kripke} {\em frame} is a relational structure $\mathcal F=(W,R)$, where the elements of $W\neq\emptyset$ are called {\em possible worlds}, and $R\in W\times W$ is the {\em accessibility} relation.
In the following, for any given accessibility relation $R$, we denote by $R^\circlearrowright$ its \emph{reflexive} closure, by $\overrightarrow R$ its \emph{transitive closure}, and by $R^*$ its \emph{reflexive and transitive closure}.
A {\em Kripke structure} over the frame $\mathcal F$ is a pair $M=(\mathcal F,V)$, where $V:W\rightarrow 2^\mathcal P$ is an {\em valuation function}, and we say that $M$ {\em models} $\varphi$ at the world $w$, denoted by $M,w\mmodels\varphi$, if and only if either one of the following holds:

\medskip

\begin{compactitem}
\item $\varphi=\top$;
\item $\varphi = p$, and $p\in V(w)$;
\item $\varphi= \neg\psi$, and $M,w\not\mmodels\varphi$;
\item $\varphi= \psi\vee\xi$, and $M,w\mmodels\psi$ or $M,w\mmodels\xi$;
\item $\varphi= \diax\psi$, and there exists $v$ such that $wR v$ and $M,v\mmodels\psi$;
\item $\varphi= \boxx\psi$, and for every $v$ such that $wR v$, it is the case that $M,v\mmodels\psi$.
\end{compactitem}

\medskip

%
%

\noindent In this case, we say that $M$ is a {\em model} of $\varphi$, and that $\varphi$ is {\em satisfiable}; in the following, we (improperly) use the terms model and structure as synonyms. We define the \emph{size} $|M|$ of the model $M$ as the cardinality of the set of worlds $W$.

\medskip

The modal language can be interpreted in classes of Kripke frames with specific properties. Notable examples are: the class of all {\em reflexive} Kripke frames, the class of all {\em transitive} Kripke frames, the class of all {\em reflexive and transitive} Kripke frames, and the class of all Kripke frames whose relation is an {\em equivalence} (that is, it is reflexive, transitive, and symmetric). The corresponding logics are usually called \Tmono, \Fmono, \SFmono, and \SFive, respectively, and the notion of satisfiability in \Tmono, \Fmono, \SFmono, and \SFive is modified accordingly.
%

\medskip

In order to define sub-propositional fragments of a modal logic $\mathbf L$ we start from the {\em clausal form} of $\mathbf L$-formulas, whose building blocks are the {\em positive literals}:
\begin{eqnarray}
\lambda  &::=& \top \mid p\mid \diax \lambda \mid \boxx \lambda,\label{poslit}
\end{eqnarray}

\noindent and we say that $\varphi$ is in {\em clausal form} if it can be generated by the following grammar:
\begin{eqnarray}
\varphi  &::=& \lambda \mid \neg\lambda\mid \varphi \land \varphi\mid\nabla (\neg\lambda_1\vee\ldots\vee\neg\lambda_n\vee\lambda_{n+1}\vee\ldots\vee\lambda_{n+m})\label{clause}
\end{eqnarray}

\noindent where $\nabla=\underbrace{\Box\Box\ldots}_{s}$\,, $s\ge 0$ and $\nabla (\neg\lambda_1\vee\ldots\vee\neg\lambda_n\vee\lambda_{n+1}\vee\ldots\vee\lambda_{n+m})$ is a \emph{clause}.

In the following, we use $\varphi_1,\varphi_2,\ldots$ to denote clauses, $\lambda_1,\lambda_2,\ldots$ to denote positive literals, and $\varphi,\psi,\xi,\ldots$ to denote formulas. We write clauses in their implicative form $\nabla(\lambda_1\land\ldots\land\lambda_n\rightarrow\lambda_{n+1}\vee\ldots\vee\lambda_{n+m})$, and use $\bot$ as a shortcut for $\neg\top$.
Moreover, since positive and negative literals $\lambda$ and $\neg\lambda$ are equivalent to the clauses $\top\rightarrow\lambda$ and $\lambda\rightarrow\bot$, respectively, from now on we assume that  any formula produced by (\ref{clause}) is a conjunction of clauses $\varphi_1\wedge\varphi_2\wedge\ldots\wedge\varphi_l$, unless specified otherwise.
It is known that every modal logic formula can be rewritten in clausal form~\cite{nguyen2004complexity}.

\begin{citedthm}[\cite{nguyen2004complexity}]\label{boolvsclause}
For every modal logic formula $\varphi$ there exists an equisatisfiable conjunction of clauses $\varphi_1\wedge\varphi_2\wedge\ldots\wedge\varphi_l$.
\end{citedthm}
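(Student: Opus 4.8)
The plan is to use a structural, Tseitin-style renaming that introduces one fresh propositional letter per compound subformula and replaces $\varphi$ by the conjunction of the associated definitional clauses. Concretely, for every $\psi\in Cl(\varphi)$ that is neither $\top$ nor a propositional letter I would fix a fresh letter $x_\psi$ (and set $x_\top=\top$, $x_p=p$). For each such $\psi$ I would write the clauses expressing the local equivalence between $x_\psi$ and its main connective applied to the variables of its immediate subformulas: $x_\psi\leftrightarrow\neg x_\chi$ when $\psi=\neg\chi$, $x_\psi\leftrightarrow(x_{\chi_1}\vee x_{\chi_2})$ when $\psi=\chi_1\vee\chi_2$, and $x_\psi\leftrightarrow\diax x_\chi$ (resp. $x_\psi\leftrightarrow\boxx x_\chi$) when $\psi=\diax\chi$ (resp. $\psi=\boxx\chi$). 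The first task is to verify that each of the two implications making up such an equivalence is, after passing to implicative form and using $\bot$ for $\neg\top$, a genuine clause in the sense of grammar~(\ref{clause}): for instance $x_\psi\leftrightarrow\diax x_\chi$ splits into $x_\psi\rightarrow\diax x_\chi$ and $\diax x_\chi\rightarrow x_\psi$, both of which have $\diax x_\chi$ as a positive literal $\lambda$ and hence conform to~(\ref{poslit})--(\ref{clause}). (Using full equivalences keeps the correctness induction below uniform; a polarity argument would let one keep only one implication per subformula, but this is not needed here.)

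Because a subformula may be evaluated not at the initial world but at worlds reached after several modal steps, these definitions must be forced along every relevant path. Since $md(\varphi)$ bounds the nesting depth, it suffices to add, for each definitional clause $C_\psi$ and each $s$ with $0\le s\le md(\varphi)$, the prefixed clause $\Box^s C_\psi$; each of these is again a clause, now with $\nabla=\Box^s$. The output of the transformation is then $x_\varphi\wedge\bigwedge_{\psi}\bigwedge_{0\le s\le md(\varphi)}\Box^s C_\psi$, a conjunction of clauses of the required form, of size polynomial in $|\varphi|$ because both $Cl(\varphi)$ and $md(\varphi)$ are bounded by $|\varphi|$.

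It remains to prove equisatisfiability, and here the depth bookkeeping is the crux. For the easy direction, given $M,w\mmodels\varphi$ I would extend the valuation by declaring $x_\psi$ true at a world $v$ exactly when $M,v\mmodels\psi$; then every definitional equivalence holds at every world, so all of its prefixed instances hold at $w$ and $x_\varphi$ holds at $w$. For the converse I would prove, by induction on the structure of $\psi\in Cl(\varphi)$, the statement: for every $v$ reachable from $w$ by a path of length $j\le md(\varphi)-md(\psi)$, one has $M,v\mmodels x_\psi$ iff $M,v\mmodels\psi$. The point is that $j\le md(\varphi)$ guarantees that the instance $\Box^j C_\psi$ is present, so the definition of $\psi$ is available at $v$, while the modal connectives shift evaluation to successors at distance $j+1\le md(\varphi)-md(\chi)$, exactly the range covered by the inductive hypothesis for the immediate subformula $\chi$; taking $j=0$ and $\psi=\varphi$ gives $M,w\mmodels\varphi$. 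The hard part will be making this depth accounting airtight — matching the range of box-prefixes to the worlds at which each subformula is actually evaluated — and observing that the induction uses only the truth clauses of the connectives and no frame property, so that the very same reduction works uniformly for \Kmono, \Tmono, \Fmono, and \SFmono.
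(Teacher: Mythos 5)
The statement you were asked to prove is, in this paper, a \emph{cited} theorem: it is imported from~\cite{nguyen2004complexity} and stated without proof, so there is no internal proof to compare yours against. Judged on its own, your argument is correct, and it is essentially the standard structure-preserving (Tseitin-style) renaming that underlies results of this kind. The three points that need care all check out. First, each definitional biconditional really does split into formulas that are clauses in the sense of grammar~(\ref{clause}): every implication you produce is a disjunction of negated fresh letters, fresh letters, and at most one modal literal $\diax x_\chi$ or $\boxx x_\chi$, each of which is a positive literal per grammar~(\ref{poslit}). Second, prefixing each such clause by $\Box^s$ for all $0\le s\le md(\varphi)$ correctly compensates for the absence of a universal modality: every world relevant to the evaluation of $\varphi$ lies at distance at most $md(\varphi)$ from $w$, and $\Box^j C$ delivers $C$ exactly at the worlds at distance $j$. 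Third, your depth accounting in the converse direction is sound: if $\psi$ has immediate subformula $\chi$, then $md(\chi)\le md(\psi)$ (with equality dropping by one in the modal cases), so worlds at distance $j\le md(\varphi)-md(\psi)$ have all their successors at distance $j+1\le md(\varphi)-md(\chi)$, which is precisely the range where the inductive hypothesis for $\chi$ is available; taking $j=0$ and $\psi=\varphi$ closes the argument. Since neither direction alters the underlying frame and the induction uses only the truth clauses of the connectives, equisatisfiability holds within each frame class, so the same translation serves \Kmono, \Tmono, \Fmono, and \SFmono, which is what the paper needs when it invokes the theorem for all four logics. One presentational nit: $\Box^s$ should be applied to each implication separately rather than to the biconditional $C_\psi$ as a whole (a biconditional is not itself a clause), but since $\Box$ distributes over conjunction this is immaterial.
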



{\em Sub-propositional} fragments of a modal logic $\mathbf L$ can be now defined by constraining the cardinality and the structure of clauses: the fragment of $\mathbf L$ in clausal form where each clause in (\ref{clause}) is such that $m\le 1$ is called {\em Horn} fragment, and denoted by \LHorn, and when each clause is such that $n+m\le 2$ it is called {\em Krom} fragment, and it is denoted by \LKrom; when both restrictions apply we denote the resulting fragment, the {\em core} fragment, by \LCore. It is also interesting to study the fragments that can be obtained from by allowing only the use of \boxx\ or \diax\ in positive literals, that are generically called the {\em box} and {\em diamond} fragments of modal logic $\mathbf L$.
In this way, we define \LHornBox and \LHornDia as, respectively, the box and the diamond fragments of \LHorn. By applying the same restrictions to \LKrom and \LCore, we obtain the pair \LKromBox and \LKromDia from the former, and the pair \LCoreBox and \LCoreDia, from the latter. Obviously, it makes no sense to consider the fragments $\mathbf{L}^\boxx$ and $\mathbf{L}^\diax$, as boxes are diamonds are mutually definable when the full propositional power is allowed.

\medskip

It should be noted that in the literature there is no unified definition of the different modal or temporal sub-propositional logics. Our definition follows Nguyen~\cite{nguyen2004complexity}, with a notable difference: while the definition of clauses is the same, we choose a more restrictive definition of what is a formula. Hence, a formula of \LHorn by our definition is also a Horn formula by~\cite{nguyen2004complexity}, but not vice versa. However, since every Horn formula by the definition of Nguyen can be transformed into a conjunction of Horn clauses, the two definitions are equivalent. The definitions of~\cite{ChenLin94,del1987note} are equivalent to that of Nguyen, and hence to our own.
Other approaches force clauses to be quantified using a {\em universal} modality that asserts the truth of a formula in every world of the model. The universal modality is either assumed in the language~\cite{DBLP:conf/lpar/ArtaleKRZ13} or it is definable using the other modalities~\cite{bresolin2017horn,DBLP:conf/jelia/BresolinMS14}, but the common choice in the literature of modal (non-temporal) logic is simply excluding the universal modality. However, most of our results in expressive power (see the next section) still hold when clauses are universally quantified; the question of whether our complexity results hold as well, on the contrary, is still open.

\medskip

Among all sub-propositional fragments that emerge from the above discussion, we shall see that one is particularly interesting, that is, \LHornBox, which presents the sharpest improvement in complexity of the satisfiability problem when compared to the original language $\mathbf L$. It is worth to notice that such fragment is extremely natural: in fact, it naturally corresponds to first order Horn logic. To see this, consider the clause

$$p\rightarrow\diax q,$$

\noindent which is Horn, but not Horn box. It translates to the first order clause

$$(\forall x)(P(x)\rightarrow\exists y(xRy\wedge Q(y))),$$

\noindent which is not Horn, nor it has a first order Horn equivalent. On the other hand, the clause

$$p\rightarrow\boxx q,$$

\noindent which is Horn box, translates to the first order clause

$$(\forall x)(P(x)\rightarrow\forall y(xRy\rightarrow Q(y))),$$

\noindent that is,

$$(\forall x)\forall y((P(x)\wedge xRy)\rightarrow Q(y)),$$

\noindent which is a first order Horn clause.

\section{Relative Expressive Power}\label{section:expressivity}


In this section, we study the relative expressive power of $\Lmono\in\{\Kmono,\Tmono, \Fmono, \SFmono\}$ and their fragments. The simplest way to compare the relative expressive power of such fragments is by syntactical containment (see Fig.~\ref{fig:exprpower}, where $\LBool$ stands for $\mathbf L$ to stress that we mean the unconstrained language), but, obviously, we want to describe the relationships among fragments in a more precise way.
Given two fragments $\mathbf L^\circ$ and $\mathbf L^{\circ'}$ (where $\circ,\circ'$ represent generic restrictions among those we have discussed above), interpreted in the same class of relational frames $\mathcal{C}$, we say that $\mathbf L^{\circ}$ and $\mathbf L^{\circ'}$ are {\em model-preserving equivalent}, denoted $\mathbf L^{\circ} \mpequiv \mathbf \mathbf L^{\circ'}$, if there exists an effective translation $(\cdot)'$ from $\mathbf L^{\circ'}$ to $\mathbf L^{\circ'}$ within the same propositional alphabet such that for every model $M$ in $\mathcal{C}$, world $w$ in $M$, and formula $\varphi$ of $\mathbf L^{\circ}$, we have $M, w \mmodels \varphi$ if and only if  $M, w \mmodels \varphi'$, and the other way around.
If there exists a translation from $\mathbf L^{\circ}$ to $\mathbf L^{\circ'}$, but not the other way around, then we say that $\mathbf L^{\circ}$ is {\em weakly less expressive than} $\mathbf L^{\circ'}$, and we denote this situation by $\mathbf L^{\circ} \wlexp \mathbf L^{\circ'}$. Moreover, we say that $\mathbf L^{\circ}$ and $\mathbf L^{\circ'}$, interpreted in the same class of relational frames $\mathcal{C}$, are {\em model-extending equivalent}, denoted $\mathbf L^{\circ}\meequiv \mathbf L^{\circ'}$, if there exists an effective translation $(\cdot)'$ that transforms any $\mathbf L^{\circ}$-formula $\varphi$ written in the alphabet $\mathcal P$ into a $\mathbf L^{\circ'}$-formula written in a suitable alphabet $\mathcal P'\supseteq\mathcal P$, such that for every model $M$ in $\mathcal{C}$ and world $w$ in $M$, we have that $M, w \mmodels \varphi$ if and only if we can extend the valuation $V$ of $M$ from $\mathcal P$ to $\mathcal P'$ to obtain an extended model $M'$ such that $M', w \mmodels \varphi'$, and the other way around. In this case, we refer to $M'$ as an {\em extension} of $M$. Once again, if there exists such a translation from $\mathbf L^{\circ}$ to $\mathbf L^{\circ'}$, but not from $\mathbf L^{\circ'}$ to $\mathbf L^{\circ}$, then $\mathbf L^{\circ}$ is {\em strongly less expressive than} $\mathbf L^{\circ'}$, and we denote this situation by $\mathbf L^{\circ}\slexp \mathbf L^{\circ'}$.

\medskip

So, for example, Theorem~\ref{boolvsclause} proves that modal logic is model-extending equivalent to its clausal form, but not necessarily model-preserving equivalent. In the same way, it is well-known that, in propositional logic, the conjunctive normal form (CNF) is model-extending equivalent, but not necessarily model-preserving equivalent, to the conjunctive normal form where every clause has at most three literals (3CNF).
Thus, the notion of model-extending equivalence is somehow more natural than the notion of model-preserving equivalence; nevertheless, both contribute to the understanding of the relative expressive power of logics.
Notice that if $\mathbf L^{\circ}$ and $\mathbf L^{\circ'}$ are not model-extending equivalent, then they are not model-preserving equivalent, either, and that, consequently, $\mathbf L^{\circ}$ may be weakly less expressive than $\mathbf L^{\circ'}$ but not strongly less expressive. Thus, $\mathbf L^{\circ}\slexp\mathbf L^{\circ'}$ is a stronger notion than $\mathbf L^{\circ}\wlexp\mathbf L^{\circ'}$: in the former case, there is no set of propositional letters that we may add to the language to translate formulas of $\mathbf L^{\circ'}$ into formulas of $\mathbf L^{\circ}$, while in the latter case, we know that  $\mathbf L^{\circ'}$-formulas cannot be translated into $\mathbf L^{\circ}$-formulas within the same propositional alphabet. Given $\mathbf L^{\circ}$ and $\mathbf L^{\circ'}$ such that $\mathbf L^{\circ}$ is a syntactical fragment of $\mathbf L^{\circ'}$, in order to prove that $\mathbf L^{\circ}\wlexp\mathbf L^{\circ'}$ (or that $\mathbf L^{\circ}\slexp\mathbf L^{\circ'}$) we show a $\mathbf L^{\circ'}$-formula $\psi$ that cannot be written in the language $\mathbf L^{\circ}$ without adding new propositional letters (or by adding finitely many new propositional letters). To this end we proceed by contradiction, assuming that a translation $\varphi\in\mathbf L^{\circ}$ does exist, and by building a model for $\psi$ that is not (cannot be extended to) a model of $\varphi$, following three different strategies: we modify the labeling (Theorem~\ref{th:horn_vs_bool} and Theorem~\ref{th:krom_vs_bool}), we modify the structure (Theorem~\ref{th:krombox_vs_krom} and Theorem~\ref{th:kromdia_vs_boxdia}), or we exploit a property of $\mathbf L^{\circ}$ that $\mathbf L^{\circ'}$ does not possess (Theorem~\ref{th:horn_vs_bool}, Theorem~\ref{th:hornbox_vs_boxdia} and Theorem~\ref{th:horndia_vs_boxdia}). All our results, both about the relative expressive power of fragments as well as about the complexity of their satisfiability problem can be immediately extended to the multi-modal version of the languages.

\medskip

Following~\cite{Nguyen2000}, consider any two models $M_1,M_2$ based, respectively, on the sets of worlds $W_1,W_2$, and two worlds $w_1\in W_1$ and $w_2\in W_2$. We say that $M_1$ is {\em less than or equal to} $M_2$ {\em relatively to} $w_1$ {\em and} $w_2$ if and only if there exists a relation $\mathcal S\subseteq W_1 \times W_2$ that respects the following properties: \begin{inparaenum}
	\item\label{prop:less1} $(w_1, w_2) \in \mathcal S$;
	\item\label{prop:less2} for every $(u_1,u_2) \in \mathcal S$, $V_1(u_1) \subseteq V_2(u_2)$;
	\item\label{prop:less3} for every $u_1, v_1 \in W_1$ and $u_2 \in W_2$, if $(u_1, u_2) \in \mathcal S$ and $u_1 \Rx_1 v_1$ then there exists $v_2 \in W_2$ such that $(v_1, v_2) \in \mathcal S$ and $u_2 \Rx_2 v_2$;
	\item\label{prop:less4} for every $u_1\in W_1$, $u_2,v_2 \in W_2$, if $(u_1, u_2) \in \mathcal S$ and $u_2 \Rx_2 v_2$ then there exists $v_1 \in W_1$ such that $(v_1, v_2) \in \mathcal S$ and $u_1 \Rx_1 v_1$.	
\end{inparaenum} We denote this situation by $M_1\leq_{w_1}^{w_2} M_2$. We say that $M_1$ is {\em less than or equal to} $M_2$ ($M_1\leq M_2$) if and  only if there exist $w_1$ and $w_2$ such that $M_1\leq_{w_1}^{w_2} M_2$.

\begin{citedlem}[\cite{Nguyen2000}]\label{lem:least_model}
Let $\varphi$ be a satisfiable formula of \LmonoHorn, where $\Lmono\in\{\Tmono, \SFmono\}$. Then there exists a {\em least model} $M^*$ such that:

\medskip

\begin{compactenum}
\item\label{prop:mstar_is_model} $M^*, w^* \mmodels \varphi$ for some world $w^*$, and
\item\label{prop:mstar_leq_model} for every model $M$ such that $M,w\mmodels\varphi$ for some $w$, it is the case that $M^*\leq_{w^*}^w M$, that is, for every model $M$ of $\varphi$ it is the case that $M^*\leq M$.
\end{compactenum}
\end{citedlem}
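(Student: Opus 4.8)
The plan is to construct $M^*$ explicitly as the least fixpoint of a monotone ``immediate consequence'' operator, in the spirit of the least-model semantics of propositional Horn logic, and then to verify the two items of the statement separately. The guiding observation, proved by a routine induction on the structure of a positive literal $\lambda$ (using the forth condition~\ref{prop:less3} for $\lambda=\Diamond\mu$ and the back condition~\ref{prop:less4} for $\lambda=\Box\mu$), is that $\leq$ is a \emph{positive} simulation: $M_1\leq_{w_1}^{w_2}M_2$ implies that $M_1,w_1\mmodels\lambda$ entails $M_2,w_2\mmodels\lambda$ for every positive literal $\lambda$. Thus ``$\leq$-least'' means ``carrying the least possible positive information'', which is exactly what makes a least model meaningful for the Horn fragment, and it tells us in advance that $M^*$ must assign the smallest valuations compatible with $\varphi$.

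By Theorem~\ref{boolvsclause} we may assume $\varphi$ is a conjunction of clauses $\Box^{s}(\lambda_1\wedge\ldots\wedge\lambda_n\rightarrow\lambda_{0})$. For the frame, I would take as worlds the finite sequences of ``demands'' generated from a root $w^*$: a new world is created exactly when a diamond head $\Diamond\mu$ is triggered, and at every world I would also add one \emph{demand-free} successor, carrying no diamond demand of its own but inheriting whatever atoms are propagated to it by box heads; this auxiliary successor plays a role only in the second part. The accessibility relation is the one induced by this generation, closed under reflexivity for $\Tmono$ and under reflexive--transitive closure for $\SFmono$, so that the frame lies in the intended class $\mathcal{C}$. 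On this frame I define the operator $T_\varphi$ that, for each world $\sigma$ and each clause whose body already holds at $\sigma$, enforces the head: it inserts $p$ into $V(\sigma)$ for an atomic head $p$, propagates the demand $\mu$ to all successors of $\sigma$ for a head $\Box\mu$, and activates the successor carrying $\mu$ for a head $\Diamond\mu$. Since every clause has a single positive head and a purely positive body, $T_\varphi$ is monotone and has a least fixpoint, which I take to be $M^*$. Item~(\ref{prop:mstar_is_model}) is then the easy direction: by fixpoint closure each clause of $\varphi$ holds at the relevant worlds, hence at $w^*$ in particular, with a diamond head witnessed by the activated successor, a box head holding because its demand was propagated to all successors, and an atomic head holding because the atom was inserted, while reflexivity and transitivity of the frame discharge the $\Tmono$- and $\SFmono$-specific requirements.

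The substance of the argument is item~(\ref{prop:mstar_leq_model}). Given an arbitrary model $M,w\mmodels\varphi$, I would build the witnessing relation $\mathcal{S}\subseteq W^*\times W$ by recursion on the fixpoint stages, maintaining the invariant that every positive literal forced at a world $\sigma$ of $M^*$ holds at every $M$-world related to $\sigma$. Starting from $(w^*,w)$, whenever the construction triggered a diamond head $\Diamond\mu$ at $\sigma$ related to $u\in W$, the invariant gives that the body of the corresponding clause holds at $u$, so, as $M,u\mmodels\varphi$, there is a successor $u'$ of $u$ with $M,u'\mmodels\mu$, and I relate the activated successor of $\sigma$ to $u'$, preserving the invariant. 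The demand-free successors, together with reflexivity, are what allow me to also discharge the back condition: a successor $v$ of $u$ that answers no diamond demand is matched by the demand-free successor of $\sigma$, whose valuation consists only of box-propagated atoms, and those atoms are guaranteed to hold at $v$ precisely because the corresponding box literals transfer from $\sigma$ to $u$ by the invariant. Checking that the resulting $\mathcal{S}$ satisfies conditions~\ref{prop:less1}--\ref{prop:less4} then yields $M^*\leq_{w^*}^{w}M$.

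I expect the back condition~\ref{prop:less4} to be the main obstacle, for two reasons. First, it forces $M^*$ to contain, at every world, successors of small enough positive content to simulate \emph{arbitrary} low-content successors of $M$; this is why the frame cannot be merely the tree of diamond demands but must carry the demand-free successors, and one must check that adding them violates no box clause and does not destroy $\leq$-minimality against models lacking such successors. Second, because the body of a Horn clause may itself contain box literals, the transfer of positive information that keeps the invariant alive already relies on the back condition at earlier stages, so the forth and back conditions cannot be established in sequence but must be proved by a single simultaneous induction along the fixpoint stages; getting this interleaving right, and reconciling it with the reflexive--transitive closure in the $\SFmono$ case, is the delicate point of the proof.
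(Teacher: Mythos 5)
First, a point of order: the paper never proves this statement at all --- it is a \emph{cited} lemma, imported from~\cite{Nguyen2000}, and the paper's only contribution is to use it (in Theorem~\ref{th:horn_vs_bool}). So there is no in-paper proof to compare against; your proposal has to be judged on its own merits and against the cited source, whose construction is indeed a fixpoint semantics for positive modal logic programs. Your overall plan --- the positive-simulation property of $\leq$ (which you state and justify correctly), a canonical ``demand'' model generated from the root, and an immediate-consequence operator --- is the right shape and matches Nguyen's approach in spirit, as does your insight that demand-free successors are needed to keep un-forced $\Box$-literals false.

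The genuine gap is the sentence ``Since every clause has a single positive head and a purely positive body, $T_\varphi$ is monotone and has a least fixpoint.'' That inference is exactly what fails in the modal setting, and it is the central difficulty of the lemma. Your operator changes the \emph{frame} (a $\Diamond$-head creates a new successor), and truth of positive literals is not monotone under frame growth: $\Box q$ can hold at $\sigma$ at some stage because every successor created so far carries $q$, and be falsified later when another clause activates a fresh successor of $\sigma$ lacking $q$. So a clause body can be ``true, then false'', Knaster--Tarski does not apply off the shelf, and the limit of your iteration is a priori scheduling-dependent and not obviously least. Your demand-free successors attack the dual problem (making the final structure a genuine model when boxes are not forced) but do not by themselves restore monotonicity; showing that they do would amount to proving that the demand-free successor is pointwise $\leq$-least among its siblings, $\Box$-literals included, which is a miniature of the lemma itself and is nowhere argued. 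The standard repairs are either (i) to fire clauses \emph{syntactically}, on labels of forced literals rather than on semantic truth in the partial structure (this is what the model-graph operator of~\cite{Nguyen2000} does), or (ii) to fix the frame once and for all in advance --- all finite demand sequences, triggered or not, plus the demand-free ones --- so that the iteration only grows valuations, against which truth of every positive literal, boxes included, \emph{is} monotone. Either repair is compatible with the rest of your plan. Two smaller omissions: clauses with head $\bot$ are never discussed (you must argue, from satisfiability of $\varphi$ and your transfer property, that their bodies never fire in the least fixpoint), and item~(\ref{prop:mstar_leq_model}) is left as a program: you correctly identify the simultaneous forth/back induction as the crux, but you do not carry it out.
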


\begin{thm}\label{th:horn_vs_bool}
For $\Lmono\in\{\Kmono,\Tmono, \Fmono, \SFmono\}$:
\medskip
\begin{compactenum}
\item  $\LmonoHorn \slexp \LmonoBool$;
\item $\LmonoCore \slexp \LmonoKrom$.
\end{compactenum}
\end{thm}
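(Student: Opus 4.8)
The plan is to witness both separations with a single Boolean/Krom formula that is provably not Horn, and then to refute any purported Horn translation of it by exploiting the least-model property of the Horn fragments (Lemma~\ref{lem:least_model}). Since \LmonoHorn is a syntactic fragment of \LmonoBool and \LmonoCore is a syntactic fragment of \LmonoKrom, the identity map is a model-preserving (hence model-extending) translation in the easy direction, so in both items the only content is that no model-extending translation exists in the reverse direction. Because every \LmonoCore-formula is in particular Horn, the same witness and the same argument will settle (1) and (2) simultaneously.

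For the witness I would take $\psi := p \vee q$, that is, the clause $\top\rightarrow p\vee q$: it lies in \LmonoBool and in \LmonoKrom (two positive literals, $n+m=2$), but in neither \LmonoHorn nor \LmonoCore (two positive literals violate $m\le 1$). Suppose for contradiction that some $\varphi\in\LmonoHorn$ over an alphabet $\mathcal P'\supseteq\{p,q\}$ is model-extending equivalent to $\psi$. Since $\psi$ is satisfiable, so is $\varphi$, and Lemma~\ref{lem:least_model} yields a least model $M^*$ with $M^*, w^*\mmodels\varphi$ and $M^*\leq_{w^*}^{w}M$ for every model $M$ of $\varphi$.

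The heart of the argument is to force the distinguished world of $M^*$ to falsify both $p$ and $q$. Consider the single-world model $N_p$ (one world $u$, empty accessibility relation for \Kmono and \Fmono, or a reflexive loop for \Tmono and \SFmono) with $V(u)=\{p\}$, and symmetrically $N_q$ with $V(u)=\{q\}$; both satisfy $\psi$ at $u$, so by model-extending equivalence each extends to a model $N_p'$, respectively $N_q'$, of $\varphi$. The least-model property gives $M^*\leq_{w^*}^{u}N_p'$ and $M^*\leq_{w^*}^{u}N_q'$ via simulations containing the pair $(w^*,u)$, so property~\ref{prop:less2} yields $V_{M^*}(w^*)\subseteq V_{N_p'}(u)$ and $V_{M^*}(w^*)\subseteq V_{N_q'}(u)$; since $q\notin V_{N_p'}(u)$ and $p\notin V_{N_q'}(u)$, we conclude $p,q\notin V_{M^*}(w^*)$. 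But $M^*$ is an extension of its own $\{p,q\}$-reduct $\overline{M^*}$, and $M^*,w^*\mmodels\varphi$, so model-extending equivalence forces $\overline{M^*},w^*\mmodels\psi$, that is, $p\vee q$ holds at $w^*$ while both disjuncts are false there --- a contradiction. Hence no such $\varphi$ exists, giving $\LmonoHorn\slexp\LmonoBool$ and $\LmonoCore\slexp\LmonoKrom$.

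The delicate point is that Lemma~\ref{lem:least_model} is stated only for $\Lmono\in\{\Tmono,\SFmono\}$, so the argument above is literally complete only for the reflexive logics; the main obstacle is to secure a least-model (or at least a sufficiently strong minimal-model) property for \Kmono and \Fmono. I would attack this through the same simulation-based construction of \cite{Nguyen2000,Nguyen2016}, verifying that dropping reflexivity does not destroy the existence of $M^*$. A tempting shortcut --- intersecting two models of $\varphi$ over a common frame --- does not work in general, since the model-intersection property fails precisely for Horn clauses with a \diax in the consequent (the two witnessing successors need not coincide); it is exactly this failure that the simulation relation $\leq$ is engineered to absorb, which is why the least-model route, rather than naive intersection, is the right tool.
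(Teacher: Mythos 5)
Your argument for $\Lmono\in\{\Tmono,\SFmono\}$ is exactly the paper's: same witness $\psi=p\vee q$, same two one-world reflexive models, same appeal to Lemma~\ref{lem:least_model} and property~\ref{prop:less2} to force $p,q\notin V^*(w^*)$, same contradiction. The problem is the other half. For $\Lmono\in\{\Kmono,\Fmono\}$ you leave the proof as a plan --- ``verify that dropping reflexivity does not destroy the existence of $M^*$'' --- and that plan cannot be carried out: the least-model property in the sense of the relation $\leq$ of Lemma~\ref{lem:least_model} is \emph{false} for \Kmono and \Fmono. Indeed, already the Horn formula $\top$ has no least model in \Kmono: a least model $M^*$ would have to satisfy $M^*\leq M$ both for the one-world model with empty relation (so, by the forth condition~\ref{prop:less3}, $w^*$ can have no successor) and for the one-world model with a reflexive loop (so, by the back condition~\ref{prop:less4}, $w^*$ must have a successor). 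This is precisely why the cited result of~\cite{Nguyen2000} is restricted to serial/almost-serial logics, a restriction the paper itself points out in the proof of Theorem~\ref{th:pcompleteness}.

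Worse, the shortcut you explicitly reject --- intersecting the two extended models over the common one-world frame --- is exactly what the paper does for \Kmono and \Fmono, and it is sound there. Your objection (diamonds in consequents may have distinct witnessing successors) is correct as a reason why closure under intersection fails for \LmonoHorn \emph{in general} (the paper's footnote says the same), but it is irrelevant for this proof: the two models $M_1^{\mathcal P'}$, $M_2^{\mathcal P'}$ have a single world $w$ with \emph{no successors}, so a violated clause must have prefix $\boxx^0$, every \boxx-literal is vacuously true in all three models, and every \diax-literal is false in all three models; a short case analysis on the body and head of the violated clause then yields the contradiction with no appeal to any general closure property. So the intersection route is not a failed shortcut but the correct tool for the irreflexive logics, while the least-model route you propose is the one that breaks; as it stands, your proof establishes the theorem only for \Tmono and \SFmono.
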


\begin{proof}
Since $\LmonoHorn$ (resp., \LmonoCore) is a syntactical fragment of $\LmonoBool$ (resp., \LmonoKrom), it only has to be proved that there exists a formula that belongs to $\LmonoBool$ (resp., \LmonoKrom) and that cannot be translated into $\LmonoHorn$ (resp., $\LmonoCore$) over any finite extension of the propositional alphabet. Here, we prove that this is the case for a \LmonoKrom-formula (which is a \LmonoBool-formula as well) that cannot be translated into  \LmonoHorn (and, therefore, to \LmonoCore, either). Let $\mathcal{P} = \{p,q\}$, consider the \LmonoKrom-formula
$$\psi = p \lor q,$$

\noindent and suppose, by contradiction, that there exists  a \LmonoHorn-formula $\varphi$ on some propositional alphabet $\mathcal P' \supseteq \mathcal P$ such that for every model $M$  over the propositional alphabet $\mathcal{P}$, and every world $w$, we have that $M, w\mmodels \psi$ if and only if there exists $M^{\mathcal P'}$ such that $M^{\mathcal P'}, w \mmodels \varphi.$
Let $M_1,M_2$ be two models such that $W_1= W_2=\{w\}$, $V_1(w) = \{p\}$, and $V_2(w) = \{q\}$.
When $\Lmono \in\{\Kmono, \Fmono\}$ then $R_1 = R_2 = \emptyset$, while when $\Lmono \in\{\Tmono, \SFmono\}$ then $R_1 = R_2 = \{(w,w)\}$.
Clearly, $M_1,w\mmodels\psi$ and $M_2,w\mmodels\psi$. Hence, there must exist two extensions $M_1^{\mathcal P'}$ and $M_2^{\mathcal P'}$ such that $M_1^{\mathcal P'}, w \mmodels \varphi$ and $M_2^{\mathcal P'}, w \mmodels \varphi$.
We now distinguish between two cases.

If $\Lmono \in\{\Kmono, \Fmono\}$, consider now model $M^{\mathcal P'}$ obtain by intersecting the valuation of $M_1^{\mathcal P'}$ and $M_2^{\mathcal P'}$, i.e., such that $W = W_1 = W_2 = \{w\}$, $R = \emptyset$ and $V(w) = V_1^{\mathcal P'}(w) \cap V_2^{\mathcal P'}(w)$\footnote{In the following we will generalize this definition of  \emph{intersection of models}, and study its properties w.r.t.\ the Horn box fragments. Such properties do not hold for the full Horn fragment and thus cannot be used in this proof.}.
By definition we have that neither $p$ nor $q$ can belong to $V(w)$, and thus that $M, w \not\mmodels p\vee q$. Since, by hypothesis, $\varphi$ is a translation of $p\lor q$, we have that $M, w \not\mmodels\varphi$ either. Hence, there must exists a clause $\varphi_i = \nabla(\lambda_1 \land \ldots \land \lambda_n \rightarrow \lambda)$ of $\varphi$ such that $M, w \not\mmodels\varphi_i$.
This implies that $\nabla = \boxx^0$ (otherwise the clause would be trivially satisfied), and thus
that $M, w \mmodels\lambda_1 \land \ldots \land \lambda_n$ but $M, w\not\mmodels\lambda$.
Consider now any positive literal $\lambda_j$ in the body of the clause. Three cases may arise:

\medskip

\begin{compactitem}
	\item $\lambda_j = r$ for some $r \in {\mathcal P'}$. Since  $V(w) = V_1^{\mathcal P'}(w) \cap V_2^{\mathcal P'}(w)$ then we have that $r \in  V_1^{\mathcal P'}(w)$ and $r \in V_2^{\mathcal P'}(w)$, from which we can conclude that $M_1^{\mathcal P'}, w \mmodels \lambda_j$ and $M_2^{\mathcal P'}, w \mmodels \lambda_j$;
	\item $\lambda_j = \boxx\lambda'$. Since $w$ has successors neither in $M_1^{\mathcal P'}$ nor in $M_2^{\mathcal P'}$, we have that $\lambda_j$ is trivially satisfied in both models;

	\item $\lambda_j = \diax\lambda'$. Since $w$ has no successors in $M$, then $\lambda_j$ cannot be true in $w$, against the hypothesis that $w$ satisfy the body of the clause.
\end{compactitem}

\medskip

\noindent From the above argument we can conclude that $M_1^{\mathcal P'}, w \mmodels\lambda_1 \land \ldots \land \lambda_n$ and $M_2^{\mathcal P'}, w \mmodels\lambda_1 \land \ldots \land \lambda_n$, from which we have that $M_1^{\mathcal P'}, w \mmodels\lambda$ and $M_2^{\mathcal P'}, w \mmodels\lambda$ (since both $M_1^{\mathcal P'}$ and $M_2^{\mathcal P'}$ satisfy $\varphi$ on $w$). Since $\lambda$ cannot be $\bot$, three cases may arise:

\medskip

\begin{compactitem}
	\item $\lambda = r$ for some $r \in {\mathcal P'}$. Since  $V(w) = V_1^{\mathcal P'}(w) \cap V_2^{\mathcal P'}(w)$ then we have that $r \in  V(w)$, from which we can conclude that $M, w \mmodels \lambda$, against the hypothesis that $M, w\not\mmodels\lambda$;

	\item $\lambda = \boxx\lambda'$. Since $w$ has no successors in $M$, then $\lambda$ is trivially satisfied in $M$, against the hypothesis that $M, w\not\mmodels\lambda$;

	\item $\lambda = \diax\lambda'$. Since $w$ has successors neither in $M_1^{\mathcal P'}$ nor in $M_2^{\mathcal P'}$, we have that $\lambda$ is false in both models, against the hypothesis that $M_1^{\mathcal P'}, w \mmodels\lambda$ and $M_2^{\mathcal P'}, w \mmodels\lambda$.
\end{compactitem}

\medskip

\noindent In all cases a contradiction is found and therefore, $\varphi$ cannot exist.

Now, if $\Lmono \in\{\Tmono, \SFmono\}$ we have that, by Lemma~\ref{lem:least_model}, there exists a least model $M^*$ such that, for some $w^*$, $M^*,w^*\mmodels\varphi$, $M^*\leq_{w^*}^{w_1} M_1$, and $M^*\leq_{w^*}^{w} M_2$. By definition, $V^*(w^*) \subseteq V_1(w)$ and $V^*(w^*) \subseteq V_2(w)$; but this implies that neither $p$ nor $q$ can belong to $V^*(w^*)$, and thus that $M^*, w^* \not\mmodels p\vee q$, which is in contradiction with our hypothesis that $\varphi$ is a translation of $\psi$. Therefore, also in this case $\varphi$ cannot exist.

To conclude, we have shown that for $\Lmono\in\{\Kmono,\Tmono, \Fmono, \SFmono\}$, we cannot express $\psi$ in \LmonoHorn within any finite extension of the propositional alphabet.
\end{proof}

\medskip


\begin{thm}\label{th:krom_vs_bool}
For $\Lmono\in\{\Kmono,\Tmono, \Fmono, \SFmono\}$:
\medskip
\begin{compactenum}
	\item  $\LmonoKrom \wlexp\LmonoBool$;
	\item $\LmonoCore \wlexp \LmonoHorn$.
\end{compactenum}
\end{thm}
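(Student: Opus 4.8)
The plan is to notice that $\LmonoKrom$ (respectively $\LmonoCore$) is a syntactic fragment of $\LmonoBool$ (respectively $\LmonoHorn$), so that the translation in one direction is simply the inclusion; the entire content of the statement is to rule out a model-preserving translation in the \emph{reverse} direction over the \emph{same} propositional alphabet. As in Theorem~\ref{th:horn_vs_bool}, I would argue by contradiction, following the labeling-modification strategy announced above. Crucially, and in contrast with Theorem~\ref{th:horn_vs_bool}, I would avoid the least-model Lemma~\ref{lem:least_model} (which is stated only for the Horn fragments of $\Tmono$ and $\SFmono$ and has no analogue for the Krom or core restrictions): instead I would run the whole argument on single reflexive point-models $M=(\{w\},\{(w,w)\},V)$, which are legitimate models of $\Kmono$, $\Tmono$, $\Fmono$ and $\SFmono$ at once, so that all four logics are treated uniformly.

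The first step is a reduction to propositional logic on such a model. By induction on the structure of a positive literal --- a stack of diamonds and boxes over $\top$ or a variable --- every literal evaluates at $w$ to $\top$ if its base is $\top$, and to the truth value of its base variable at $w$ otherwise, since on a single reflexive world both $\Diamond$ and $\Box$ are transparent; likewise the box-prefix $\nabla=\Box^{s}$ is vacuous at $w$. Hence every clause of a candidate $\varphi$ collapses, on $M$, to an ordinary propositional clause over the fixed alphabet $\mathcal P=\{p,q,r\}$ with no more literals than it started with. Working thus in the three-variable cube $\{0,1\}^3$, I would record the elementary fact that the set of falsifying valuations of a propositional clause with at most two literals, if non-empty, is a subcube of codimension at most two and therefore contains at least $2^{3-2}=2$ valuations.

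With this in place, both items follow from the same template. For item~(1) I take the witness $\psi=p\vee q\vee r\in\LmonoBool$, whose unique falsifying valuation over $\mathcal P$ is $V(w)=\emptyset$. Assuming a model-preserving $\varphi\in\LmonoKrom$ over $\mathcal P$, the empty-valuation model falsifies $\varphi$, so some clause $C$ of $\varphi$ is false there; by the reduction $C$ is a propositional clause of at most two literals, hence its falsifying subcube contains a second valuation $a'\neq\emptyset$. But every valuation other than $\emptyset$ satisfies $\psi$, so $\varphi$ ought to be true under $a'$ while $C$ (a conjunct of $\varphi$) is false under $a'$ --- a contradiction. For item~(2) I take instead the Horn witness $\psi=p\wedge q\rightarrow r\in\LmonoHorn$, whose unique falsifying valuation is $V(w)=\{p,q\}$; the identical argument applied to a candidate $\varphi\in\LmonoCore$ (using only that core clauses have at most two literals) yields a clause false at $\{p,q\}$, hence false at a second valuation which, being distinct from $\{p,q\}$, satisfies $\psi$ --- again a contradiction. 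The two witnesses are chosen precisely to lie in the respective larger fragments: $p\vee q\vee r$ is not Krom, while $p\wedge q\rightarrow r$ is Horn but not core, and each needs a width-three clause to isolate its single bad valuation.

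The conceptual crux, and the step I expect to demand the most care, is making the reduction of arbitrary modal clauses --- with nested $\Diamond/\Box$ inside literals and an outer prefix $\nabla$ --- to genuine propositional clauses of no greater width fully rigorous on the single reflexive point, together with the bookkeeping behind the codimension bound: literals that reduce to constants must be shown either to trivialise a clause (making it no falsifier) or only to enlarge its falsifying subcube, and repeated variables within a clause handled likewise. Everything else is routine. Finally I would stress that these separations are genuinely only $\wlexp$ and not $\slexp$: once new propositional letters are permitted, a standard renaming encoding expresses each witness within the core, let alone Krom, fragment, which is exactly why the counting argument must be pinned to the fixed three-variable alphabet.
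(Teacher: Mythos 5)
Your proof is correct, but it takes a genuinely different route from the paper's. The paper keeps the model completely arbitrary: it picks any model $M$ and world $w$ with $M,w\not\mmodels\psi$ (witness $\psi=p\wedge q\rightarrow r$, used for both items), locates a falsified Krom clause $\varphi_i=\nabla(\lambda_1\vee\lambda_2)$ at some world $w'$, and observes that a binary clause over the three-letter alphabet must \emph{miss} one of $p,q,r$; it then changes the valuation of that missing letter globally (setting $r$ true everywhere, or $q$ false everywhere) so that $\psi$ becomes true at every world while the truth of $\varphi_i$ at $w'$ is untouched. You instead collapse the whole problem to propositional logic by evaluating on single reflexive point-models (legitimate frames for all four logics), where every positive literal reduces to its base and $\nabla$ is vacuous, and then run a subcube-counting argument: a falsifiable clause of width at most two over three variables has at least two falsifying assignments, whereas each witness has exactly one. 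The combinatorial kernel is the same --- a two-literal clause cannot pin down a unique valuation of three variables, and your ``second falsifying valuation'' is obtained precisely by flipping the variable the paper calls missing --- but the mechanisms differ. Your reduction is more elementary and self-contained (a finite check in $\{0,1\}^3$, no reasoning about worlds reachable under $\nabla$), and the collapse lemma is reusable; the paper's argument is more robust, since it never needs a reflexive point in the class of frames (the paper explicitly notes its proof ``does not depend on the properties of the frame''), so it would survive in classes such as strict orders where your point-model trick is unavailable. Two small remarks: the paper economizes by using the single Horn witness $p\wedge q\rightarrow r$ for both items (and reuses exactly that witness in Corollary~\ref{th:krom_vs_horn} to get incomparability of \LmonoHorn and \LmonoKrom, for which your item-(1) witness $p\vee q\vee r$ would not serve, though your item-(2) witness would); and your closing observation that the separation cannot be strengthened to $\slexp$ matches the paper's Theorems~\ref{th:krombox_vs_krom} and~\ref{th:kromdia_vs_boxdia}, which show the Krom fragments are model-extending equivalent.
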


\begin{proof}
Since $\LmonoKrom$ (resp., \LmonoCore) is a syntactical fragment of $\LmonoBool$ (resp., \LmonoHorn), it only has to be proved that there exists a formula that belongs to $\LmonoBool$ (resp., \LmonoHorn) and that cannot be translated into $\LmonoKrom$ (resp., $\LmonoCore$) within the same propositional alphabet. Here, we prove that this is the case for a \LmonoHorn-formula (which is a \LmonoBool-formula as well) that cannot be translated into  \LmonoKrom (and, therefore, to \LmonoCore, either). Now, consider the \LmonoHorn-formula
$$\psi = p \land q \rightarrow r,$$

\noindent and suppose, by contradiction, that there exists a \LmonoKrom-formula $\varphi$, written in the propositional alphabet $\{p,q,r\}$, such that for every model $M$ and every world $w$ we have that $M, w\mmodels \psi$ if and only if $M,w \mmodels \varphi.$ We can assume that $\varphi = \varphi_1 \land \ldots \land \varphi_l$ is a conjunction of clauses. Let us denote by $P(\varphi_i)$ the set of propositional letters that occur in $\varphi_i$. Now, consider a model $M = \langle \mathcal F, V\rangle$, where $\mathcal F$ is based on a set of worlds $W$ and it respects the semantical properties of $\Lmono$,
and let $w\in W$ be a world such that $M,w\not\mmodels\psi$. Such a model must exist since $\psi$ is not a tautology.  By hypothesis, $M,w \not\mmodels \varphi$, and since $\varphi$ is a conjunction of Krom clauses,  there must exist at least one clause $\varphi_i = \nabla(\lambda_1 \vee \lambda_2)$ such that $M, w \not\mmodels \varphi_i$. Hence, there must exist a world $w'$ such that $M, w' \not\mmodels (\lambda_1 \vee \lambda_2)$. At this point, three cases may arise (since we are in a fixed propositional alphabet, and we deal with clauses at most binary):

\medskip

\begin{compactitem}

\item $P(\varphi_i) \subseteq \{p,q\}$. In this case, we can build a new model $M' = \langle \mathcal F,V'\rangle$ such that:

 $$V'(p) = V(p),\ V'(q) = V(q),~\mbox{and}~V'(r) = W.$$

 \noindent Since $r$ holds on every world of the model, we have that $M'$ satisfies $\psi$ everywhere, and in particular on $w$. However, since the valuation of $p$ and $q$ are the same of $M$, and since the relational structure has not changed, we have that $M',w' \not\models \lambda_1 \lor \lambda_2$, from which we can conclude that $M',w \not\mmodels \nabla ( \lambda_1 \lor \lambda_2)$ and thus that $w$ does not satisfy $\varphi$.
	
\item $P(\varphi_i) \subseteq \{p,r\}$. In this case, we can build a new model $M' = \langle \mathcal F,V'\rangle$ such that:
$$V'(p) = V(p),\ V'(r) = V(r),~\mbox{and}~V'(q) = \emptyset.$$

\noindent Since $q$ is false on every world of the model, we have that $M'$ satisfies $\psi$ everywhere, and in particular on $w$. However, since the valuation of $p$ and $r$ are the same of $M$, and since the relational structure has not changed, we have that $M',w' \not\models \lambda_1 \lor \lambda_2$, from which we can conclude that $M',w \not\mmodels \nabla ( \lambda_1 \lor \lambda_2)$ and thus that $w$ does not satisfy $\varphi$.

\item $P(\varphi_i) \subseteq \{q,r\}$. In this case, the same argument as above works by using $q$ instead of $p$.

\end{compactitem}

\medskip

\noindent Therefore, $\varphi$ cannot exist, and this means that $\psi$ cannot be expressed in \LmonoKrom within the same propositional alphabet. Since the proof does not depend on the properties of the frame, the claim is proved for any logic $\Lmono\in\{\Kmono,\Tmono, \Fmono, \SFmono\}$.
\end{proof}

\medskip

\begin{cor}\label{th:krom_vs_horn}
For $\Lmono\in\{\Kmono,\Tmono, \Fmono, \SFmono\}$, \LmonoHorn and \LmonoKrom are expressively incomparable within the same propositional alphabet.
\end{cor}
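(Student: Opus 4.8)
The plan is to read the statement off directly from the two witnessing formulas already constructed in Theorems~\ref{th:horn_vs_bool} and~\ref{th:krom_vs_bool}, since incomparability within a fixed alphabet amounts to exhibiting one formula in each fragment that has no translation into the other over the same propositional letters. Unwinding the definitions, I would first observe that to establish incomparability it suffices to prove two things: (i) there is a \LmonoKrom-formula admitting no \LmonoHorn-translation within the same alphabet, and (ii) there is a \LmonoHorn-formula admitting no \LmonoKrom-translation within the same alphabet. Each of these requirements is strictly weaker than the corresponding theorem already proved, so no new model construction is needed.

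For direction (i), I would take $\psi = p \lor q$, which is a \LmonoKrom-formula over $\{p,q\}$. The proof of Theorem~\ref{th:horn_vs_bool} shows that $\psi$ has no \LmonoHorn-translation even when arbitrarily many fresh propositional letters are allowed; in particular it has none within the alphabet $\{p,q\}$ itself. Hence no effective alphabet-preserving translation of every \LmonoKrom-formula into \LmonoHorn\ can exist. For direction (ii), I would take $\psi = p \land q \rightarrow r$, which is a \LmonoHorn-formula over $\{p,q,r\}$; Theorem~\ref{th:krom_vs_bool} shows that it has no \LmonoKrom-translation within that same alphabet. Hence no alphabet-preserving translation of every \LmonoHorn-formula into \LmonoKrom\ can exist. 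Since both fragments are syntactic fragments of \LmonoBool\ and neither can simulate the other over a fixed alphabet, they are incomparable; and because the two witnesses are produced uniformly for all $\Lmono\in\{\Kmono,\Tmono,\Fmono,\SFmono\}$, the conclusion holds for each such logic.

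The argument carries no genuine obstacle: it is a bookkeeping combination of the two preceding results. The only point worth flagging is the asymmetry in their strength. The Krom-into-Horn failure (Theorem~\ref{th:horn_vs_bool}) is in fact the strong form $\slexp$, surviving finite alphabet extensions, whereas the Horn-into-Krom failure (Theorem~\ref{th:krom_vs_bool}) is only the weak, same-alphabet form $\wlexp$. Consequently the strongest uniform statement one can extract is incomparability at the level of a fixed propositional alphabet, which is exactly what the corollary asserts; I would take care not to over-claim an extension-robust (strong) incomparability, since the Horn-into-Krom direction does not support it.
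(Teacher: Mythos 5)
Your proof is correct and is essentially identical to the paper's: both combine the witness $p\lor q$ from Theorem~\ref{th:horn_vs_bool} and the witness $p\land q\rightarrow r$ from Theorem~\ref{th:krom_vs_bool} to conclude same-alphabet incomparability. Your remark about the asymmetry of the two directions (strong vs.\ weak inexpressibility) correctly explains why the corollary is stated only at the fixed-alphabet level, just as in the paper.
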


\begin{proof}
As we have seen in Theorem~\ref{th:horn_vs_bool}, the \LmonoKrom-formula $p \lor q$ cannot be translated into \LmonoHorn over any extension of the propositional alphabet, and, as we have seen in Theorem~\ref{th:krom_vs_bool}, the \LmonoHorn-formula $p \land q \rightarrow r$ cannot be translated into \LmonoKrom within the same propositional alphabet.
These two observations, together, prove that we cannot compare \LmonoHorn and \LmonoKrom\ within the same propositional alphabet.
\end{proof}

\medskip

Now, we turn our attention to the relative expressive power for box and diamond fragments, starting with sub-Horn fragments without diamonds. To this end, consider two models $M_1,M_2$ such that all $M_i=(\mathcal F,V_i)$ are based on the same relational frame: we define the {\em intersection} model as the unique model $M_{M_1\cap M_2}=(\mathcal F,V_{V_1\cap V_2})$, where, for each $w\in W$, $V_{V_1\cap V_2}(w)=V_1(w)\cap V_2(w)$. We prove now that the Horn fragments without diamonds are closed under intersection, that is, if two models satisfy a formula then their intersection also satisfies the formula. 

\begin{lem}\label{lem:intersection}
For $\Lmono\in\{\Kmono,\Tmono, \Fmono, \SFmono\}$, \LmonoHornBox\ is closed under intersection of models.
\end{lem}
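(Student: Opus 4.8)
The plan is to reduce the whole statement to one preservation property of \emph{positive literals}, and then lift it first to a single clause and then to a conjunction of clauses. The statement I will establish is the pointwise form: for every \LmonoHornBox-formula $\varphi$ and every world $w$, if $M_1,w\mmodels\varphi$ and $M_2,w\mmodels\varphi$, then $M_{M_1\cap M_2},w\mmodels\varphi$. The essential structural fact I will exploit throughout is that all three models $M_1$, $M_2$, and $M_{M_1\cap M_2}$ are built on the same frame $\mathcal F=(W,R)$, so the accessibility relation $R$, and hence the set of worlds reachable from any $w$ in a fixed number of steps, is identical in all of them; this is exactly what makes the box modality behave well under intersection of valuations.

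First I would prove the key auxiliary claim: for every positive literal $\lambda$ of \LmonoHornBox\ (that is, a box-chain of the form $\boxx^k p$ or $\boxx^k\top$) and every world $w$, we have $M_{M_1\cap M_2},w\mmodels\lambda$ if and only if both $M_1,w\mmodels\lambda$ and $M_2,w\mmodels\lambda$. This is a routine induction on the structure of $\lambda$. The cases $\lambda=\top$ and $\lambda=p$ are immediate from $V_{V_1\cap V_2}(w)=V_1(w)\cap V_2(w)$. For $\lambda=\boxx\lambda'$, the semantics unfold $M_{M_1\cap M_2},w\mmodels\boxx\lambda'$ into ``$M_{M_1\cap M_2},v\mmodels\lambda'$ for all $v$ with $wRv$''; applying the induction hypothesis at each successor $v$ turns this into ``$M_1,v\mmodels\lambda'$ and $M_2,v\mmodels\lambda'$ for all $v$ with $wRv$'', and distributing the universal quantifier over the conjunction (legitimate because the relation $R$ is shared) yields precisely $M_1,w\mmodels\boxx\lambda'$ and $M_2,w\mmodels\boxx\lambda'$.

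Next I would handle a single clause $\varphi_i=\boxx^{s}(\lambda_1\wedge\ldots\wedge\lambda_n\rightarrow\lambda)$ under the assumption $M_1,w\mmodels\varphi_i$ and $M_2,w\mmodels\varphi_i$. Since the prefix $\boxx^{s}$ is evaluated over the shared relation, it suffices to fix an arbitrary $u$ reachable from $w$ in exactly $s$ steps and show $M_{M_1\cap M_2},u\mmodels(\lambda_1\wedge\ldots\wedge\lambda_n\rightarrow\lambda)$, knowing that both $M_1$ and $M_2$ already satisfy this implication at $u$. If $M_{M_1\cap M_2},u$ falsifies the body, the implication holds vacuously. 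Otherwise each $\lambda_j$ holds at $u$ in the intersection, so by the auxiliary claim each $\lambda_j$ holds at $u$ in both $M_1$ and $M_2$, whence both satisfy the head $\lambda$ at $u$; if $\lambda$ is a box-chain, the ``if'' direction of the auxiliary claim gives $M_{M_1\cap M_2},u\mmodels\lambda$, while if $\lambda=\bot$ then $M_1,u\mmodels\bot$ is impossible, so in fact the body cannot be satisfied in the intersection and the implication is again vacuous. In every case $M_{M_1\cap M_2},u\mmodels(\lambda_1\wedge\ldots\wedge\lambda_n\rightarrow\lambda)$, hence $M_{M_1\cap M_2},w\mmodels\varphi_i$. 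Since a \LmonoHornBox-formula is a conjunction $\varphi_1\wedge\ldots\wedge\varphi_l$, preservation of each conjunct yields preservation of $\varphi$; and because $M_{M_1\cap M_2}$ is built on the same frame $\mathcal F$, it automatically inherits the frame conditions of $\Lmono$, so the argument is uniform over $\Lmono\in\{\Kmono,\Tmono,\Fmono,\SFmono\}$.

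The step to treat with care — and the place where the hypothesis is genuinely used — is the clause head. The body transfers well in any case (even a diamond in the body descends from the intersection to each $M_i$, since a common witness in $V_1\cap V_2$ is a witness in both), but the head is where intersection can break: a diamond head $\diax p$ satisfied in $M_1$ and in $M_2$ may be witnessed by \emph{different} successors, with no common successor carrying $p$ in the intersection. Making explicit that the head is a box-chain (or $\bot$) is therefore exactly the point at which the $\Box$-restriction is invoked, and it is precisely what distinguishes this lemma from the full Horn fragment, where the property fails (as already noted in the footnote to Theorem~\ref{th:horn_vs_bool}).
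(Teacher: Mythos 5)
Your proof is correct and follows essentially the same route as the paper's: an induction on positive (box-chain) literals showing that satisfaction transfers between the intersection model and the two component models (possible precisely because the frame, hence the accessibility relation, is shared), applied in one direction to the clause body and in the reverse direction to the head. The differences are cosmetic: you argue directly and package the induction as a single biconditional, where the paper argues by contradiction and invokes the two directions separately (the second only as ``a similar inductive argument''), and you additionally make explicit the $\bot$-head case and the inheritance of the frame conditions of $\Lmono$, both of which the paper leaves implicit.
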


\begin{proof}
Let $\varphi=\varphi_1\wedge\ldots\wedge\varphi_l$ a \LmonoHornBox-formula such that $M_1,w\mmodels\varphi$ and $M_2,w\mmodels\varphi$, where $M_1=(\mathcal F,V_1)$ and $M_2=(\mathcal F,V_2)$; we want to prove that $M_{M_1\cap M_2},w\mmodels\varphi$. Suppose, by way of contradiction, that $M_{M_1\cap M_2},w\not\mmodels\varphi$. Then, there must be some $\varphi_i$ such that $M_{M_1\cap M_2},w\not\mmodels\varphi_i$. As in Theorem~\ref{th:horn_vs_bool}, we can assume that $\varphi_i$ is a clause of the type $\nabla(\lambda_1\wedge\ldots\wedge\lambda_n\rightarrow\lambda)$. This means that $M_{M_1\cap M_2},w'\mmodels\lambda_1\wedge\ldots\wedge\lambda_n$ and $M_{M_1\cap M_2},w'\not\mmodels\lambda$ for some $w'$. We want to prove that, for each $1\le j\le n$, both $M_1$ and $M_2$ satisfy $\lambda_j$ at $w'$. To see this, let $N=md(\lambda_j)$, and:

\medskip

\begin{itemize}
\item If $N=0$, then $\lambda_j=p$ for some propositional letter $p$; but if $M_{M_1\cap M_2},w'\mmodels p$, then $p\in V_1(w')\cap V_2(w')$, which means that  $M_1,w'\mmodels p$ and  $M_2,w'\mmodels p$.
\item If $N>0$, then $\lambda_j=\boxx\lambda'$. Since $M_{M_1\cap M_2},w'\mmodels \boxx\lambda'$, for every $v$ such that $w'\Rx v$ it is the case that $M_{M_1\cap M_2},v\mmodels \lambda'$. Thus, for every $v$ such that $w'\Rx v$, we know by inductive hypothesis that $M_1,v\mmodels\lambda'$ and $M_2,v\mmodels\lambda'$. But this immediately implies that $M_1,w'\mmodels\boxx\lambda'$ and $M_2,w'\mmodels\boxx\lambda'$, which completes the induction. Now, we have that  $M_1,w'\mmodels\lambda_1\wedge\ldots\wedge\lambda_n$ and $M_2,w'\mmodels\lambda_1\wedge\ldots\wedge\lambda_n$; therefore, $M_1,w'\mmodels\lambda$ and $M_2,w'\mmodels\lambda$.
\end{itemize}

\medskip

\noindent A similar inductive argument shows that $M_{M_1\cap M_2},w'\mmodels\lambda$, implying that $M_{M_1\cap M_2},w\mmodels\varphi_i$; but this contradicts  our hypothesis that $M_{M_1\cap M_2},w\not\mmodels\varphi$.
\end{proof}

\medskip

\begin{thm}\label{th:hornbox_vs_boxdia}
For $\Lmono\in\{\Kmono,\Tmono, \Fmono, \SFmono\}$:
\medskip
\begin{compactenum}
\item $\LmonoHornBox\slexp\LmonoHorn$;
\item $\LmonoCoreBox\slexp\LmonoCore$.
\end{compactenum}
\end{thm}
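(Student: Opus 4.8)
The plan is to obtain the easy direction from the syntactic inclusion and to block the converse using Lemma~\ref{lem:intersection}, exactly following the "exploit a property that the larger fragment lacks" strategy. Since $\LmonoHornBox$ (resp.\ $\LmonoCoreBox$) is a syntactic fragment of $\LmonoHorn$ (resp.\ $\LmonoCore$), the model-extending translation from the box fragment into the full fragment is the identity, so it only remains to exhibit a $\LmonoHorn$-formula (resp.\ $\LmonoCore$-formula) admitting no $\LmonoHornBox$ (resp.\ $\LmonoCoreBox$) translation even over a finite extension of the alphabet. I take the single witness $\psi = \diax p$ over $\mathcal P=\{p\}$: it is a positive literal, hence equivalent to the clause $\top\rightarrow\diax p$, which is simultaneously a Horn clause and a core clause, so the same $\psi$ serves both items and is ruled out of $\LmonoHornBox$ only because it uses a diamond.

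First I would build a two-model witness against intersection. Fix the frame $\mathcal F$ on worlds $w,u,v$ with edges $wRu$ and $wRv$, adding the reflexive loops $wRw,uRu,vRv$ when $\Lmono\in\{\Tmono,\SFmono\}$; in either case $\mathcal F$ meets the frame conditions of each $\Lmono$ (reflexivity is explicit, and transitivity holds vacuously since $u,v$ have no non-loop successors). Let $M_1,M_2$ be based on $\mathcal F$ with $V_1(u)=\{p\}$, $V_1(v)=\emptyset$ and $V_2(u)=\emptyset$, $V_2(v)=\{p\}$, and $p\notin V_i(w)$. Then $M_1,w\mmodels\diax p$ through the successor $u$, and $M_2,w\mmodels\diax p$ through $v$, whereas in the intersection $M_{M_1\cap M_2}$ the letter $p$ holds at no world, so $M_{M_1\cap M_2},w\not\mmodels\diax p$.

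Next I would derive the contradiction. Assume a translation $\varphi\in\LmonoHornBox$ over some $\mathcal P'\supseteq\mathcal P$ exists. Applying the forward direction of model-extending equivalence to $M_1,w\mmodels\psi$ and to $M_2,w\mmodels\psi$ gives extensions $M_1',M_2'$ (same frame $\mathcal F$, valuations extended to $\mathcal P'$) with $M_1',w\mmodels\varphi$ and $M_2',w\mmodels\varphi$. By Lemma~\ref{lem:intersection}, the intersection $M_{M_1'\cap M_2'}$ also satisfies $\varphi$ at $w$. The key observation is that intersection commutes with restriction to $\mathcal P$: for every world $x$ and every $p\in\mathcal P$, we have $p\in V_{V_1'\cap V_2'}(x)$ iff $p\in V_1'(x)\cap V_2'(x)$ iff $p\in V_1(x)\cap V_2(x)=V_{V_1\cap V_2}(x)$, so $M_{M_1'\cap M_2'}$ is a $\mathcal P'$-extension of $M_{M_1\cap M_2}$. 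The backward direction of model-extending equivalence then forces $M_{M_1\cap M_2},w\mmodels\psi$, contradicting the construction. Hence no such $\varphi$ exists, proving item~1.

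Item~2 follows verbatim: $\LmonoCoreBox$ is a syntactic sub-fragment of $\LmonoHornBox$, so it is closed under intersection by Lemma~\ref{lem:intersection}, and $\diax p$ is a core formula, so the same frame, the same pair $M_1,M_2$, and the same argument apply. The only delicate point — and the one I would state carefully — is the commutation of intersection with restriction to $\mathcal P$, since it is precisely what transports closure under intersection from the extended alphabet back to the original formula $\psi$; everything else is the routine unwinding of the two directions of model-extending equivalence.
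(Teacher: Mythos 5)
Your proof is correct and follows essentially the same route as the paper's: the same witness $\psi=\diax p$, the same pair of models (a root with two successors, one carrying $p$ in each model), and the same use of Lemma~\ref{lem:intersection} to derive a contradiction, with item~2 reduced to item~1 via the syntactic inclusion $\LmonoCoreBox\subseteq\LmonoHornBox$. Your explicit remark that intersection commutes with restriction to $\mathcal P$ (so that the intersected extension is indeed an extension of $M_{M_1\cap M_2}$) is a step the paper leaves implicit, and stating it is a welcome bit of extra care rather than a deviation.
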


\begin{proof}
Since $\LmonoHornBox$ (resp., \LmonoCoreBox) is a syntactical fragment of $\LmonoHorn$ (resp., \LmonoCore), it only has to be proved that there exists a formula that belongs to $\LmonoHorn$ (resp., \LmonoCore) and that cannot be translated into $\LmonoHornBox$ (resp., $\LmonoCoreBox$) over any finite extension of the propositional alphabet. Here, we prove that this is the case for a \LmonoCore-formula (which is a \LmonoHorn-formula as well) that cannot be translated into  \LmonoHornBox (and, therefore, to \LmonoCoreBox, either). Let $\mathcal P=\{p\}$, consider the \LmonoHorn-formula
$$\psi = \diax p,$$

\noindent and suppose by contradiction that there exists a propositional alphabet $\mathcal P'\supseteq \mathcal P$ and a \LmonoHornBox formula $\varphi$ written over $\mathcal P'$ such that for every model $M$ over the propositional alphabet $\mathcal P$ and every world $w$ we have that $M,w\mmodels \psi$ if and only if there exists $M^{\mathcal P'}$ such that $M^{\mathcal P'},w\mmodels\varphi.$ Let $M_1=(\mathcal F,V_1)$ and $M_2=(\mathcal F,V_2)$, where $\mathcal F$ is based on the set $W=\{w_0,w_1,w_2\}$. Let $\Rx=\{(w_0,w_1), (w_0,w_2)\}$ (or its reflexive closure in the cases $\Lmono\in\{\Tmono,\SFmono\}$)
, and define the valuation functions $V_1,V_2$ as follows:
$$
V_i(w_j)=\left\{\begin{array}{ll}
\{p\} & \mbox{if }i=j,\\
\emptyset & \mbox{otherwise}.
\end{array}\right.%
$$

\noindent Clearly, $M_1,w_0\mmodels\psi$ and $M_2,w_0\mmodels\psi$; since $\varphi$ is a \LmonoHornBox-translation of $\psi$, it must be the case that, for some extensions $M_1^{\mathcal P'}$ and $M_2^{\mathcal P'}$, we have that $M_1^{\mathcal P'},w_0\mmodels\varphi$ and $M_2^{\mathcal P'},w_0\mmodels\varphi$. By Lemma~\ref{lem:intersection}, their intersection model $M_{M_1^{\mathcal P'}\cap M_2^{\mathcal P'}}$ is such that $M_{M_1^{\mathcal P'}\cap M_2^{\mathcal P'}},w_0\mmodels\varphi$. But $p \not\in V_{V_1^{\mathcal P'}\cap V_2^{\mathcal P'}}(w)$ for every $w\in W$, so $M_{M_1^{\mathcal P'}\cap M_2^{\mathcal P'}},w\not\mmodels\psi$. This contradicts the hypothesis that $\varphi$ is a translation of $\psi$.
\end{proof}

\medskip

To establish the expressive power of \LmonoHornDia and \LmonoCoreDia with respect to other fragments, we now prove a closure property similar to Lemma~\ref{lem:intersection}. Consider two models $M_1 = (\mathcal F_1, V_1)$, $M_2 = (\mathcal F_2, V_2)$ based on two (possibly different) relational frames. We define the {\em product} model as the unique model $M_{M_1\times M_2}=(\mathcal F_{\mathcal F_1 \times \mathcal F_2},V_{V_1\times V_2})$, where: \begin{inparaenum}\item $\mathcal F_{\mathcal F_1 \times \mathcal F_2} = (W_1 \times W_2, \Rprod)$, that is, worlds are all and only the pairs of worlds from $W_1$ and $W_2$;  $(w_1,w_2)\Rprod(w_1',w_2')$ if and only if $w_1 R_1  w_1'$ and $w_2 R_2 w_2'$, that is, worlds in  $\mathcal F_{\mathcal F_1 \times \mathcal F_2}$ are connected to each other as the component worlds were connected in $\mathcal F_1$ and $\mathcal F_2$; and \item $V_{V_1\times V_2}((w_1,w_2)) = V_1(w_1) \cap V_2(w_2)$. \end{inparaenum} We prove now that the Horn fragments without boxes are closed under product.

\begin{lem}\label{lem:product}
For $\Lmono\in\{\Kmono,\Tmono, \Fmono, \SFmono\}$, \LmonoHornDia\ is closed under product of models.
\end{lem}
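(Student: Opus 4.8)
The plan is to mirror the proof of Lemma~\ref{lem:intersection}, replacing intersection by product and the box recursion by a diamond recursion. A preliminary step is to check that $M_{M_1\times M_2}$ is a legitimate $\Lmono$-model: since $(w_1,w_2)\Rprod(w_1',w_2')$ holds exactly when $w_1R_1w_1'$ and $w_2R_2w_2'$, reflexivity and transitivity (and symmetry) are each inherited coordinatewise, so the product frame lies in the class of $\Lmono$. The core tool will be a \emph{preservation biconditional} for diamond-only literals which, unlike the box case, holds in both directions: for every positive literal $\lambda$ generated by $\lambda::=\top\mid p\mid\diax\lambda$, and all $v_1\in W_1$, $v_2\in W_2$,
\[
M_{M_1\times M_2},(v_1,v_2)\mmodels\lambda \quad\Longleftrightarrow\quad M_1,v_1\mmodels\lambda \ \text{ and }\ M_2,v_2\mmodels\lambda.
\]

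I would establish this by simultaneous induction on $\lambda$. The cases $\lambda=\top$ and $\lambda=p$ are immediate from $V_{V_1\times V_2}((v_1,v_2))=V_1(v_1)\cap V_2(v_2)$. The case $\lambda=\diax\lambda'$ is the crux. For the direction $(\Leftarrow)$, the hypotheses $M_1,v_1\mmodels\diax\lambda'$ and $M_2,v_2\mmodels\diax\lambda'$ yield \emph{independent} witnesses $a_1,a_2$ with $v_1R_1a_1$, $v_2R_2a_2$ and $M_1,a_1\mmodels\lambda'$, $M_2,a_2\mmodels\lambda'$; because $\Rprod$ asks only for a step in each coordinate, the pair $(a_1,a_2)$ is a product-successor of $(v_1,v_2)$, and the inductive hypothesis gives $M_{M_1\times M_2},(a_1,a_2)\mmodels\lambda'$, hence the diamond at $(v_1,v_2)$. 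For $(\Rightarrow)$, a product-successor projects to component-successors, and the inductive hypothesis transfers $\lambda'$ to both coordinates. It is worth noting that the $(\Rightarrow)$ direction is exactly what fails for boxes (a vacuously true box when one coordinate has no successor), which is precisely why the box fragment needs intersection while the diamond fragment needs product. I would also record a routine reachability fact, by induction on $s$: if $(v_1,v_2)$ is reached from $(w_1,w_2)$ by $s$ steps of $\Rprod$, then $w_1R_1^{\,s}v_1$ and $w_2R_2^{\,s}v_2$; this is all that is needed to cope with the prefix $\nabla=\boxx^s$.

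With these in place, the assembly is direct. Suppose $M_1,w_1\mmodels\varphi$ and $M_2,w_2\mmodels\varphi$ for a \LmonoHornDia-formula $\varphi=\varphi_1\wedge\ldots\wedge\varphi_l$; as in Theorem~\ref{th:horn_vs_bool} each $\varphi_i$ may be taken to be a clause $\boxx^s(\lambda_1\wedge\ldots\wedge\lambda_n\rightarrow\lambda)$. To show $M_{M_1\times M_2},(w_1,w_2)\mmodels\varphi_i$, take any $(v_1,v_2)$ reachable from $(w_1,w_2)$ in $s$ product-steps; then $w_1R_1^{\,s}v_1$ and $w_2R_2^{\,s}v_2$. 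If the body holds at $(v_1,v_2)$ in the product, the $(\Rightarrow)$ direction of the biconditional gives the body at $v_1$ in $M_1$ and at $v_2$ in $M_2$; since the component clauses hold along $R_i^{\,s}$, we obtain the head at $v_1$ in $M_1$ and at $v_2$ in $M_2$, and the $(\Leftarrow)$ direction puts the head back at $(v_1,v_2)$ in the product. When the head is $\bot$ (the case $m=0$), the component step already yields a contradiction, so the body cannot hold and the clause is vacuously satisfied; the empty-body case $n=0$ is the same argument with a trivially true body. Hence every clause, and so $\varphi$, holds at $(w_1,w_2)$.

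The main obstacle is getting the direction of preservation exactly right: the clause argument consumes the biconditional in \emph{both} directions (the $(\Rightarrow)$ direction on the body, the $(\Leftarrow)$ direction on the head), so a one-sided preservation property of the kind used for intersection does not suffice. The one genuinely delicate point is the witness-pairing in the $(\Leftarrow)$ case of the diamond step, which is the place where the conjunctive definition of $\Rprod$ is essential and where the analogous box argument would break down.
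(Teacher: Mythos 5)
Your proof is correct and follows essentially the same route as the paper's: the biconditional you establish by induction on diamond literals is exactly the content of the paper's inductive argument (the forward direction, used on the body) together with its closing ``similar argument'' (the backward direction, used on the head), and the witness-pairing in the $(\Leftarrow)$ diamond case is the same key step. The differences are presentational only: you argue directly rather than by contradiction, and you make explicit the projection/reachability fact for the prefix $\Box^s$ and the closure of the frame class under products, both of which the paper uses tacitly.
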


\begin{proof}
Let $\varphi=\varphi_1\wedge\ldots\wedge\varphi_l$ be a \LmonoHornDia-formula such that $M_1,w_1\mmodels\varphi$ and $M_2,w_2\mmodels\varphi$. We want to prove that $M_{M_1\times M_2},(w_1,w_2)\mmodels\varphi$; suppose by way of contradiction, that $M_{M_1\times M_2},(w_1,w_2)\not\mmodels\varphi$. Then, there must be some $\varphi_i$ such that $M_{M_1\times M_2},(w_1,w_2)\not\mmodels\varphi_i$. As in Theorem~\ref{th:horn_vs_bool}, we can assume that $\varphi_i$ is a clause of the type $\nabla(\lambda_1\wedge\ldots\wedge\lambda_n\rightarrow\lambda)$. This means that $M_{M_1\times M_2},(w_1',w_2')\mmodels\lambda_1\wedge\ldots\wedge\lambda_n$ and $M_{M_1\times M_2},(w_1',w_2')\not\mmodels\lambda$ for some $(w_1',w_2')$. We want to prove that, for each $1\le j\le n$,  $M_1$ and $M_2$ satisfy $\lambda_j$ at, respectively, $w_1'$ and $w_2'$. To see this, let $N=md(\lambda_j)$, and:

\medskip

\begin{itemize}
\item If $N=0$, then $\lambda_j=p$ for some propositional letter $p$: by the definition of product, we have that $M_{M_1\times M_2},(w_1',w_2')\mmodels p$ if and only if $p\in V_1(w_1')\cap V_2(w_2')$, which means that  $M_1,w_1'\mmodels p$ and  $M_2,w_1'\mmodels p$.
\item If $N>0$, then $\lambda_j=\diax\lambda'$. Since $M_{M_1\times M_2},(w_1',w_2')\mmodels \diax\lambda'$, then there exists $(v_1,v_2)$ such that $(w_1',w_2')\Rprod (v_1,v_2)$ and $M_{M_1\times M_2},(v_1,v_2)\mmodels \lambda'$. We know by inductive hypothesis that $M_1,v_1\mmodels\lambda'$ and $M_2,v_2\mmodels\lambda'$ and that, by definition of product, $w_1'R_1 v_1$ and $w_2' R_2 v_2$. But this immediately implies that $M_1,w_1'\mmodels\diax\lambda'$ and $M_2,w_2'\mmodels\diax\lambda'$, which completes the induction. Now, we have that  $M_1,w_1'\mmodels\lambda_1\wedge\ldots\wedge\lambda_n$ and $M_2,w_2'\mmodels\lambda_1\wedge\ldots\wedge\lambda_n$; therefore, $M_1,w_1'\mmodels\lambda$ and $M_2,w_2'\mmodels\lambda$.
\end{itemize}

\medskip

\noindent A similar argument proves that $M_{M_1\times M_2},(w_1',w_2')\mmodels\lambda$, implying that $M_{M_1\times M_2},(w_1,w_2)\mmodels\varphi_i$, in contradiction with  the hypothesis that $M_{M_1\times M_2},(w_1,w_2)\not\mmodels\varphi$.
\end{proof}

\medskip

\begin{thm}\label{th:horndia_vs_boxdia}
For $\Lmono\in\{\Kmono, \Fmono\}$:
\medskip
\begin{compactenum}
\item $\LmonoHornDia\slexp\LmonoHorn$;
\item $\LmonoCoreDia\slexp\LmonoCore$.
\end{compactenum}
\end{thm}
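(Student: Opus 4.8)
The plan is to mirror the structure of Theorem~\ref{th:hornbox_vs_boxdia}, replacing the intersection‑closure argument by the product‑closure property established in Lemma~\ref{lem:product}. Since $\LmonoHornDia$ (resp.\ $\LmonoCoreDia$) is a syntactic fragment of $\LmonoHorn$ (resp.\ $\LmonoCore$), the translation in one direction is free, so it suffices to exhibit a single $\LmonoCore$‑formula $\psi$ that admits no $\LmonoHornDia$‑translation over any finite extension of the alphabet; being also a $\LmonoHorn$‑formula, such a $\psi$ settles both items at once.

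The candidate I would use is the clause $\psi = \boxx p \to q$, which is core (here $n+m=2$ and $m=1$). The main obstacle, and the point that requires care, is that the naive dual of the box case, namely $\psi=\boxx p$, does \emph{not} work: a box of a propositional letter is in fact preserved under products, since if every successor of $w_1$ and every successor of $w_2$ carries $p$, then every product successor, whose valuation is the intersection $V_1\cap V_2$, still carries $p$. One must therefore place the box in the \emph{antecedent} of a clause, so that a \emph{vacuously} true box can fire a consequent that the intersection valuation can then strip away. The key structural fact to exploit is that a product world $(w_1,w_2)$ becomes a dead end as soon as one of its components is a dead end.

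Concretely, for both $\Lmono\in\{\Kmono,\Fmono\}$ I would fix $\mathcal P=\{p,q\}$ and two models over $\mathcal P$. In $M_1$ the world $w_1$ has a single successor $a_1$ with $p\notin V_1(a_1)$, so $\boxx p$ is false at $w_1$, the antecedent fails, and $\psi$ holds there with $q\notin V_1(w_1)$. In $M_2$ the world $w_2$ is a dead end ($R_2=\emptyset$), so $\boxx p$ holds vacuously, the consequent is forced, and setting $q\in V_2(w_2)$ makes $\psi$ hold. Both frames, and their product, are (vacuously) transitive, so the same construction serves $\Kmono$ and $\Fmono$. Now in $M_{M_1\times M_2}$ the world $(w_1,w_2)$ inherits the dead end from $w_2$ (no product edge exists, since $R_2=\emptyset$), so $\boxx p$ is again vacuously true there; but its valuation is $V_1(w_1)\cap V_2(w_2)$, which omits $q$ because $q\notin V_1(w_1)$. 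Hence $M_1,w_1\mmodels\psi$ and $M_2,w_2\mmodels\psi$, yet $M_{M_1\times M_2},(w_1,w_2)\not\mmodels\psi$.

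The argument then closes exactly as before. Assuming a $\LmonoHornDia$‑translation $\varphi$ of $\psi$ over some $\mathcal P'\supseteq\mathcal P$, I would extend $M_1,M_2$ to $M_1^{\mathcal P'},M_2^{\mathcal P'}$ satisfying $\varphi$ at $w_1,w_2$; Lemma~\ref{lem:product} yields $M_{M_1^{\mathcal P'}\times M_2^{\mathcal P'}},(w_1,w_2)\mmodels\varphi$. Since the product of the extensions is itself an extension of $M_{M_1\times M_2}$ (the intersection valuation restricted to $\mathcal P$ is just $V_1\cap V_2$), the translation property forces $M_{M_1\times M_2},(w_1,w_2)\mmodels\psi$, contradicting the computation above, so $\varphi$ cannot exist. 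Finally, the restriction to $\Kmono$ and $\Fmono$ is essential rather than incidental: the construction relies on a genuine dead‑end world, which cannot occur in the reflexive logics $\Tmono$ and $\SFmono$, and this is precisely why those two cases are omitted from the statement.
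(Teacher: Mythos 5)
Your proposal is correct and follows essentially the same route as the paper: the same counterexample formula $\boxx p \rightarrow q$, the same pair of models (one world whose successor falsifies $p$, versus a dead end satisfying $q$), and the same use of Lemma~\ref{lem:product} on the product of the extended models to derive the contradiction. Your side remarks — that $\boxx p$ alone is preserved under products and that dead ends are what make the argument fail for $\Tmono$ and $\SFmono$ — also match the paper's own discussion following the theorem (cf.\ Fact~\ref{fact:product}).
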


\begin{proof}
Since $\LmonoHornDia$ (resp., \LmonoCoreDia) is a syntactical fragment of $\LmonoHorn$ (resp., \LmonoCore), we know that $\LmonoHornDia$ can be translated into $\LmonoHorn$ and that $\LmonoCoreDia$ can be translated into $\LmonoCore$. It remains to be proved that there exists a formula that belongs to $\LmonoHorn$ (resp., \LmonoCore) and that cannot be translated into $\LmonoHornDia$ (resp., $\LmonoCoreDia$) over any finite extension of the propositional alphabet. Here, we prove that this is the case for a \LmonoCore-formula (which is a \LmonoHorn-formula as well) that cannot be translated into  \LmonoHornDia (and, therefore, to \LmonoCoreDia, either). Let $\mathcal P=\{p, q\}$, consider the \LmonoCore-formula
$$\psi = \boxx p \rightarrow q,$$

\noindent and suppose by contradiction that there exists a propositional alphabet $\mathcal P'\supseteq \mathcal P$ and a \LmonoHornDia formula $\varphi$ written over $\mathcal P'$ such that for every model $M$ over the propositional alphabet $\mathcal P$ and every world $w$ we have that $M,w\mmodels \psi$ if and only if there exists $M^{\mathcal P'}$ such that $M^{\mathcal P'},w\mmodels\varphi.$ Let $M_1=(\mathcal F_1,V_1)$ and $M_2=(\mathcal F_2,V_2)$, where $\mathcal F_1$ is based on the set $W=\{w_0,w_1\}$ and  the accessibility relation $\Rx_1=\{(w_0,w_1)\}$, while $\mathcal F_2$ is based on $\{v_0\}$ and  $\Rx_2 = \emptyset$. Define the valuation function $V_1$ as always empty, and let $q\in V_2(v_0)$. Clearly, $M_1,w_0\mmodels\psi$ and $M_2,v_0\mmodels\psi$. Since $\varphi$ is a \LmonoHornDia-translation of $\psi$, it must be the case that, for some extensions $M_1^{\mathcal P'}$ and $M_2^{\mathcal P'}$, we have that $M_1^{\mathcal P'},w_0\mmodels\varphi$ and $M_2^{\mathcal P'},v_0\mmodels\varphi$. By Lemma~\ref{lem:product}, their product model $M_{M_1^{\mathcal P'}\times M_2^{\mathcal P'}}$ is such that $M_{M_1^{\mathcal P'}\times M_2^{\mathcal P'}},(w_0,v_0)\mmodels\varphi$. Notice that $q \not\in V_{V_1^{\mathcal P'}\times V_2^{\mathcal P'}}(w_0,v_0)$ and that $(w_0,v_0)$ has no $\Rx$-successors. Hence, we have that $M_{M_1^{\mathcal P'}\times M_2^{\mathcal P'}},(w_0,v_0)\mmodels\boxx p$ but $M_{M_1^{\mathcal P'}\times M_2^{\mathcal P'}},(w_0,v_0)\not\mmodels q$, in contradiction with the hypothesis that $\varphi$ is a translation of $\psi$. Therefore, $\varphi$ cannot exist, and this means that $\psi$ cannot be expressed in \LmonoHornDia within any finite extension of the propositional alphabet.
\end{proof}

\medskip

\noindent It is worth to observe that restricting the above result to classes of frames not necessarily reflexive is essential. Not only the above counter-example does not work assuming reflexiveness, but it can be easily proved that the argument cannot be fixed. As a matter of fact, both \TmonoHorn\ and \SFmonoHorn\ are closed under product of models, so that this particular characteristics cannot be used to distinguish \LmonoHornDia\ from \LmonoHorn\ when $\Lmono\in\{\Tmono, \SFmono\}$, when new propositional letters are allowed in the translation. Later on, in Theorem~\ref{th:kromdia_vs_boxdia}, we will show that within the same propositional alphabet we can indeed distinguish between \LmonoHornDia\ and \LmonoHorn\ also in reflexive frames.

\begin{fact}\label{fact:product}
\TmonoHorn\ and \SFmonoHorn\ are closed under product of models.
\end{fact}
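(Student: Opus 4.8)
The plan is to mimic the proof of Lemma~\ref{lem:product}, but to handle the extra box literals that a full \LmonoHorn-clause may contain, using reflexivity at exactly the point where the irreflexive argument breaks down. First I would record the easy structural fact that the product construction stays inside the class: if $R_1,R_2$ are reflexive then $\Rprod$ is reflexive (since $(a,b)\Rprod(a,b)$ follows from $aR_1a$ and $bR_2b$), and if both are also transitive then so is $\Rprod$; hence for $\Lmono\in\{\Tmono,\SFmono\}$ the product $M_{M_1\times M_2}$ is again a model of the right kind, so that ``closed under product'' is meaningful.

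The core of the argument is a two-directional transfer lemma for positive literals, proved by induction on $md(\lambda)$: for reflexive frames, $M_{M_1\times M_2},(a,b)\mmodels\lambda$ if and only if $M_1,a\mmodels\lambda$ and $M_2,b\mmodels\lambda$. The ``if'' direction (both components satisfy $\lambda$ forces the product to) needs no reflexivity: for $\lambda=p$ it is immediate from $V_{V_1\times V_2}((a,b))=V_1(a)\cap V_2(b)$; for $\lambda=\diax\lambda'$ the two witnessing successors pair up into a single $\Rprod$-successor; and for $\lambda=\boxx\lambda'$ it holds because $\Rprod$ has only ``paired'' successors, so a universal statement true in both components survives. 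The ``only if'' direction is the delicate one: for $\lambda=\diax\lambda'$ it works exactly as in Lemma~\ref{lem:product}, but for $\lambda=\boxx\lambda'$ I would use reflexivity to supply a partner successor --- given any $c$ with $aR_1c$, reflexivity yields $bR_2b$, hence $(a,b)\Rprod(c,b)$, so $M_{M_1\times M_2},(c,b)\mmodels\lambda'$ and the inductive hypothesis gives $M_1,c\mmodels\lambda'$; as $c$ is arbitrary, $M_1,a\mmodels\boxx\lambda'$ (symmetrically for $M_2$).

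With this lemma in hand, the rest follows the template of Lemma~\ref{lem:product}. Assuming $M_1,w_1\mmodels\varphi$ and $M_2,w_2\mmodels\varphi$ but $M_{M_1\times M_2},(w_1,w_2)\not\mmodels\varphi$, some clause $\nabla(\lambda_1\wedge\ldots\wedge\lambda_n\rightarrow\lambda)$ with $\nabla=\boxx^s$ must fail at a world $(w_1',w_2')$ reachable in $s$ $\Rprod$-steps, where the body holds but the head fails (the degenerate cases $n=0$ and $\lambda=\bot$ are subsumed). Projecting the witnessing path gives $w_1 R_1^s w_1'$ and $w_2 R_2^s w_2'$, so $M_1,w_1'$ and $M_2,w_2'$ each satisfy the same clause. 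The ``only if'' direction of the transfer lemma carries the body to each component, so $M_1,w_1'\mmodels\lambda_1\wedge\ldots\wedge\lambda_n$ and likewise for $M_2$, whence each component satisfies the head $\lambda$; the ``if'' direction then pushes $\lambda$ back to $M_{M_1\times M_2},(w_1',w_2')$, contradicting the failure of the head. I expect the main obstacle to be precisely the $\boxx$-case of the ``only if'' direction: this is the single place where reflexivity is indispensable, and it is exactly what fails for $\Kmono$ and $\Fmono$, consistent with the counterexample of Theorem~\ref{th:horndia_vs_boxdia}, where an irreflexive product silently validates a box that is false in a component.
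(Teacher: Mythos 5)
Your proof is correct and takes essentially the same approach as the paper's: a contradiction argument on a failing clause, whose key step is exactly the paper's use of reflexivity to pair an $R_1$-successor $c$ of $a$ with $b$ itself, yielding the product successor $(c,b)$ from which the inductive hypothesis transfers box literals down to the components. Packaging the transfer as an explicit two-directional lemma and noting that reflexivity and transitivity are preserved by the product construction are minor organizational refinements of what the paper does implicitly (it proves the product-to-components direction in detail and dispatches the converse as ``a similar argument'').
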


\begin{proof}
Let $\varphi=\varphi_1\wedge\ldots\wedge\varphi_l$ be a \TmonoHorn-formula such that $M_1,w_1\mmodels\varphi$ and $M_2,w_2\mmodels\varphi$. We want to prove that $M_{M_1\times M_2},(w_1,w_2)\mmodels\varphi$; suppose by way of contradiction, that $M_{M_1\times M_2},(w_1,w_2)\not\mmodels\varphi$. Then, there must be some $\varphi_i$ such that $M_{M_1\times M_2},(w_1,w_2)\not\mmodels\varphi_i$. As in Theorem~\ref{th:horn_vs_bool}, we can assume that $\varphi_i$ is a clause of the type $\nabla(\lambda_1\wedge\ldots\wedge\lambda_n\rightarrow\lambda)$. This means that $M_{M_1\times M_2},(w_1',w_2')\mmodels\lambda_1\wedge\ldots\wedge\lambda_n$ and $M_{M_1\times M_2},(w_1',w_2')\not\mmodels\lambda$ for some $(w_1',w_2')$. We want to prove that, for each $1\le j\le n$,  $M_1$ and $M_2$ satisfy $\lambda_j$ at, respectively, $w_1'$ and $w_2'$. To see this, let $N=md(\lambda_j)$, and:

\medskip

\begin{itemize}
\item  If $N=0$, then $\lambda_j=p$ for some propositional letter $p$: by the definition of product, we have that $M_{M_1\times M_2},(w_1',w_2')\mmodels p$ if and only if $p\in V_1(w_1')\cap V_2(w_2')$, which means that  $M_1,w_1'\mmodels p$ and  $M_2,w_1'\mmodels p$.
\item If $N>0$, then $\lambda_j=\boxx\lambda'$ or $\lambda_j=\diax\lambda'$; the latter case has been dealt with in Lemma~\ref{lem:product}, so we can focus on the former one. Since $M_{M_1\times M_2},(w_1',w_2')\mmodels \boxx\lambda'$, then for all $(v_1,v_2)$ such that $(w_1',w_2')\Rprod (v_1,v_2)$, we have that $M_{M_1\times M_2},(v_1,v_2)\mmodels \lambda'$. Let us prove that $M_1,w_1'\mmodels\boxx\lambda'$ and $M_2,w_2'\mmodels\boxx\lambda'$. Consider any $v_1$ such that $w'_1 R_1 v_1$; since $R_2$ is reflexive, we have that $w'_2 R_2 w'_2$, hence, by definition of product, $(w'_1, w'_2) \Rprod (v_1, w'_2)$, which implies that $M_{M_1\times M_2},(v_1,w'_2)\mmodels \lambda'$. We know by inductive hypothesis that $M_1,v_1\mmodels\lambda'$ and $M_2,w'_2\mmodels\lambda'$. But this immediately implies that $M_1,w_1'\mmodels\boxx\lambda'$. Similarly, we can prove that $M_1,w_2'\mmodels\boxx\lambda'$, which completes the induction. Now, we have that  $M_1,w_1'\mmodels\lambda_1\wedge\ldots\wedge\lambda_n$ and $M_2,w_2'\mmodels\lambda_1\wedge\ldots\wedge\lambda_n$; therefore, $M_1,w_1'\mmodels\lambda$ and $M_2,w_2'\mmodels\lambda$.
 \end{itemize}

\medskip

\noindent A similar argument proves that $M_{M_1\times M_2},(w_1',w_2')\mmodels\lambda$, implying that $M_{M_1\times M_2},(w_1,w_2)\mmodels\varphi_i$, in contradiction with  the hypothesis that $M_{M_1\times M_2},(w_1,w_2)\not\mmodels\varphi$.
\end{proof}

\medskip

The argument of Theorem~\ref{th:hornbox_vs_boxdia}, based on the intersection of models, cannot be replicated to establish the relationship between sub-Krom fragments; it turns out that in this case the possibility of expanding the propositional alphabet does make the difference, as the following result shows.

\begin{thm}\label{th:krombox_vs_krom}
For $\Lmono\in\{\Kmono,\Tmono, \Fmono, \SFmono\}$:
\medskip
\begin{compactenum}
\item\label{KromBoxVsKrom} $\LmonoKromBox\meequiv\LmonoKrom$;
\item $\LmonoKromBox\wlexp\LmonoKrom$.
\end{compactenum}
\end{thm}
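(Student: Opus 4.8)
The statement splits into two claims proved by opposite techniques. For the model-extending equivalence (item~1) the inclusion $\LmonoKromBox\subseteq\LmonoKrom$ supplies the translation from \LmonoKromBox to \LmonoKrom for free (the identity), so the entire content is a diamond-elimination translation $(\cdot)'$ carrying every \LmonoKrom-formula $\varphi$ to a diamond-free \LmonoKromBox-formula $\varphi'$ that is model-extending equivalent to it, at the cost of fresh propositional letters. The key observation I would use is that a diamond can be simulated by a \emph{negated} box literal once a letter denotes the complement of its argument: if $r'$ is forced to satisfy $r'\leftrightarrow\neg r$, then $\diax r\equiv\neg\boxx\neg r\equiv\neg\boxx r'$. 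Crucially, forcing $r'\leftrightarrow\neg r$ needs the clause $r\vee r'$, which has two positive literals and is therefore available in \LmonoKrom but in no Horn fragment; this is exactly the feature whose absence makes the analogue fail for Horn box (Theorem~\ref{th:hornbox_vs_boxdia}, where Lemma~\ref{lem:intersection} blocks expressing $\diax p$ even with new letters).

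Concretely, I would eliminate diamonds innermost-first, naming every subformula by a fresh letter. Atoms and box literals $\mu$ are tied to their name $r$ by the binary box clauses $\neg r\vee\mu$ and $\neg\mu\vee r$. For an innermost diamond $\diax\mu$, with $\mu$ already named by an atom $r$, I would introduce a complement letter $r'$ (clauses $r\vee r'$ and $\neg r\vee\neg r'$) and a name $e$ for the diamond tied to it by $\neg e\vee\neg\boxx r'$ and $e\vee\boxx r'$, which together express $e\leftrightarrow\diax r$ through $\diax r\equiv\neg\boxx r'$; then replace $\diax\mu$ by the atom $e$. Every clause introduced is binary and box-only, and since $e$ is an atom the replacement respects arbitrary box-nesting, so iterating removes all diamonds and yields a \LmonoKromBox-formula. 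Because clauses are asserted only at a fixed modal depth, each defining clause must be conjoined with its prefixes $\boxx^0,\dots,\boxx^{md(\varphi)}$ so the definitions hold throughout the part of the model that $\varphi'$ can inspect. Correctness then has two halves: the forward direction uses the \emph{intended} extension, in which every fresh letter receives at every world the truth value of the subformula it names, so all defining clauses hold and the renamed formula coincides with $\varphi$; the backward direction uses that the defining clauses pin every fresh letter to its subformula at all worlds within distance $md(\varphi)$, whence $M',w\mmodels\varphi'$ forces $M,w\mmodels\varphi$.

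For the strict part (item~2) the inclusion again gives the easy direction, so I must exhibit one \LmonoKrom-formula that no \LmonoKromBox-formula can match over the \emph{same} alphabet; I would take $\psi=\diax p$ over $\mathcal P=\{p\}$ and argue by modifying the structure. Assume a \LmonoKromBox-formula $\varphi$ over $\{p\}$ with $\varphi\equiv\psi$, and take the model $M$ in which $w$ has exactly two successors, a leaf $a$ carrying $p$ and a leaf $c$ without $p$ (passing to the reflexive and/or transitive closure of this frame in the cases $\Lmono\in\{\Tmono,\Fmono,\SFmono\}$, with $p$ still only at $a$). Then $M,w\mmodels\diax p$, hence $M,w\mmodels\varphi$. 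Let $M'$ be obtained by deleting $a$. I would check that every box literal $\boxx^k p$ at $w$ keeps its truth value under this deletion (its falsity is already witnessed by $w$ itself or by $c$ at the relevant depth, and as $a$ is a leaf no world at depth $\ge 2$ disappears), so every depth-$0$ clause is unaffected, while every clause prefixed by a nonempty $\nabla$ can only become easier, since deleting the leaf $a$ merely shrinks the sets of worlds over which the boxes quantify; hence $M',w\mmodels\varphi$. But $M',w\not\mmodels\diax p$, contradicting $\varphi\equiv\psi$. Thus $\psi$ is not \LmonoKromBox-expressible over $\{p\}$, giving $\LmonoKromBox\wlexp\LmonoKrom$.

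The main obstacle, as usual with such renamings, is the backward direction of item~1: one must verify that the \emph{local} (non-universal) defining clauses really do pin the fresh letters to their intended meanings at every world $\varphi'$ can reach, and this bookkeeping genuinely differs across the four logics---in the transitive cases \Fmono and \SFmono a single prefixed copy already propagates each definition through the whole generated subframe, whereas in \Kmono and \Tmono one needs the definitions at every depth up to $md(\varphi)$. The matching delicate point in item~2 is the insistence that $a$ be a leaf: were $a$ to have successors, deleting it would perturb the deeper box literals at $w$, so $M$ must be chosen precisely so that removing the unique $p$-world disturbs nothing a box literal at $w$ can detect.
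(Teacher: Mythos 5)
Your proposal is correct for both items, but it takes routes that differ from the paper's in instructive ways. For item 1 the paper does local, per-clause surgery: $\nabla_i(\diax\lambda\vee\lambda_2^i)$ becomes $\nabla_i(\neg\boxx p\vee\lambda_2^i)\wedge\nabla_i\boxx(p\vee\lambda)$, with a dual replacement for $\neg\diax\lambda$; the fresh letter is tied to $\lambda$ only one-sidedly ($\neg p\rightarrow\lambda$ at successors), and each side clause is asserted exactly at the modal depth where it is used, so no copying of definitions over all depths is needed. Your global renaming with exact complement letters ($r\vee r'$, $\neg r\vee\neg r'$) and definitions repeated under $\boxx^0,\dots,\boxx^{md(\varphi)}$ is costlier but more systematic: it explicitly covers diamonds nested strictly under boxes (literals such as $\boxx\diax p$), a case the paper's case split---which only treats literals whose outermost modality is a, possibly negated, diamond---leaves to an implicit iteration; and your bookkeeping about where the definitions are pinned (and the remark that one prefixed copy suffices under transitivity) is precisely what the paper glosses over. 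Both constructions hinge on the same resource that Horn lacks, namely a clause with two positive literals, so the contrast you draw with Theorem~\ref{th:hornbox_vs_boxdia} and Lemma~\ref{lem:intersection} is apt. For item 2 you use the same witness formula $\diax p$ over $\{p\}$ but the dual surgery: the paper starts from an arbitrary model falsifying $\diax p$ (hence falsifying the alleged translation $\varphi$) and \emph{adds} a fresh $p$-world with no successors, proving by induction on positive box literals that the addition is invisible to box-only formulas; you start from one concrete three-world model satisfying $\diax p$ and \emph{delete} its unique $p$-leaf, observing that deletion can only make box literals truer and that the falsity of every $\boxx^k p$ at the surviving worlds is witnessed by $w$ or $c$ themselves, so nothing a \LmonoKromBox-clause can see changes. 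Both arguments are sound; your fixed model makes the verification finite and concrete, while the paper's addition argument needs no specially crafted starting model and transfers almost verbatim to the companion result for the diamond fragment (Theorem~\ref{th:kromdia_vs_boxdia}).
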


\begin{proof}
The first result is easy to prove. Suppose that
$$\varphi=\nabla_1(\lambda_1^1\vee\lambda_2^1)\wedge \nabla_2(\lambda_1^2\vee\lambda_2^2)\wedge\ldots\wedge\nabla_i(\lambda_1^i\vee\lambda_2^i)\wedge\ldots
\wedge\nabla_l(\lambda_1^l\vee\lambda_2^l)$$

\noindent is a $\LmonoKrom$-formula, where, as in Theorem~\ref{th:krom_vs_bool}, we treat literals as special clauses. There are two cases. First, suppose that $\lambda_1^i=\diax\lambda$,  for some $1\leq i\leq l$, where $\lambda$ is a positive literal. We claim that the \LmonoKromBox-formula
$$\varphi'=\nabla_1(\lambda_1^1\vee\lambda_2^1)\wedge \nabla_2(\lambda_1^2\vee\lambda_2^2)\wedge\ldots\wedge\nabla_i(\neg\boxx p\vee\lambda_2^i)\wedge\nabla_i\boxx(p\vee\lambda)\wedge\ldots
\wedge\nabla_l(\lambda_1^l\vee\lambda_2^l),$$

\noindent where $p$ is a fresh propositional variable, is model-extending equivalent to $\varphi$. To see this, let $\mathcal P$ the propositional alphabet in which $\varphi$ is written, and let $\mathcal P'=\mathcal P\cup\{p\}$, and consider a model $M=(\mathcal F,V)$ such that, for some world $w$, it is the case that $M,w\mmodels\varphi$; in particular, it is the case that $M,w\mmodels\nabla_i(\lambda_1^i\vee\lambda_2^i)$; let $W_i\subseteq W$ be the set of worlds reachable from $w$ via the universal prefix $\nabla_i$, and consider $v\in W_i$. If $M,v\mmodels\lambda_2^i$ we can extend $M$ to a model $M^\mathcal P=(\mathcal F,V^\mathcal P)$ such that it satisfies $p$ on every world reachable from $v$, if any, and both substituting clauses are satisfied. If, on the other hand, $M,v\mmodels\diax\lambda$, for some $t$ such that $v\Rx t$ we have that $M,t\mmodels\lambda$; we can now extend $M$ to a model $M^\mathcal P=(\mathcal F,V^\mathcal P)$ such that it satisfies $\neg p$ on $t$, and $p$ on every other world reachable from $v$, if any, and, again, both substituting clauses are satisfied. A reversed argument proves that if $M,w\mmodels\varphi'$ it must be the case that $M,w\mmodels\varphi$. If, as a second case, $\lambda_1^i=\neg\diax\lambda$, where $\lambda$ is a positive literal, then the translating formula is
$$\varphi'=\nabla_1(\lambda_1^1\vee\lambda_2^1)\wedge \nabla_2(\lambda_1^2\vee\lambda_2^2)\wedge\ldots\wedge\nabla_i(\boxx p\vee\lambda_2^i)\wedge\nabla_i\boxx(\neg p\vee\neg \lambda)\wedge\ldots
\wedge\nabla_l(\lambda_1^l\vee\lambda_2^l),$$

\noindent and the proof of model-extending equivalence is identical to the above one.

\medskip

In order to prove the second result, we observe that since $\LmonoKromBox$ is a syntactical fragment of $\LmonoKrom$ it only has to be proved that there exists a formula that belongs to $\LmonoKrom$ and that cannot be translated into $\LmonoKromBox$ within the same propositional alphabet. Let $\mathcal P=\{p\}$, consider the \LmonoKrom-formula
$$\psi = \diax p,$$

\noindent and suppose by contradiction that there exists a \LmonoKromBox formula $\varphi$ such that for every model $M$ over the propositional alphabet $\mathcal P$ and every world $w$ we have that $M,w\mmodels \psi$ if and only if $M,w\mmodels\varphi.$ Once again, we can safely assume that $\varphi=\varphi_1\wedge\varphi_2\wedge\ldots\wedge\varphi_l$, and that each $\varphi_i$ is a clause. Consider a model $M = \langle \mathcal F, V\rangle$, where $\mathcal F$ is based on a set of worlds $W$, and let $w\in W$ be a world such that $M,w\not\mmodels\psi$. Such a model must exist since $\psi$ is not a tautology. Since $\varphi$ is a conjunction of Krom clauses, we have that there must exist at least one clause $\varphi_i = \nabla(\lambda_1 \vee \lambda_2)$ such that $M, w \not\mmodels \varphi_i$. Hence, there must exist a world $w'$ such that $M, w' \not\mmodels (\lambda_1 \vee \lambda_2)$. Now, consider the model $\overline M$ obtained from $M$ by extending the set of worlds $W$ to $\overline W=W\cup\{\overline w\}$, and the relation $\Rx$ to (the reflexive, transitive, or reflexive and transitive closure, depending on $\Lmono$, of) $\overline\Rx=\Rx\cup\{(w,\overline w)\}$. We define $\overline V(\overline w)=\{p\}$: clearly, $\overline M,w\mmodels\psi$. To prove that $\overline M,w'\not\mmodels\lambda_1\vee\lambda_2$, we first prove the following technical result for positive literals:
$$M,t\mmodels\lambda\Leftrightarrow \overline M,t\mmodels\lambda,$$

\noindent for every $t\in W$ and positive literal $\lambda$. We do so by induction on $N=md(\lambda)$, as follows:

\medskip

\begin{itemize}
\item  If $N=0$, then $\lambda$ is a propositional letter (the cases in which $\lambda=\top$ are trivial): the valuation of $t$ has not changed from $M$ to $\overline M$, and therefore we have the claim immediately.
\item If $N>0$, then assume that $\lambda=\boxx\lambda'$, and $\lambda'$ is a positive literal. By definition, $M,t\mmodels\boxx\lambda'$ if and only if for every $t'$ such that $t\Rx t'$, if any, it is the case that $M,t'\mmodels\lambda'$. Clearly, if $t\neq w$ and $(t,w) \not\in\;\Rx$ then the set of reachable worlds from $t$ has not changed from $M$ to $\overline M$; by inductive hypothesis, for every $t'$, we have that $M,t'\mmodels \lambda'$ if and only if $\overline M,t'\mmodels \lambda'$, and, therefore, $M,t\mmodels\lambda$ if and only if $\overline M,t\mmodels \lambda$, as we wanted. Otherwise, suppose that either $t=w$ or $(t,w)\in\;\Rx$, and recall that $(w, \overline w)\in\overline \Rx$. Notice that the set of reachable worlds from $t$ has changed in the case $t=w$ by definition of $\overline R$, and in the case $(t,w)\in\;\Rx$ when $\Lmono\in\{\Fmono, \SFmono\}$. If $M,t\not\mmodels\boxx\lambda'$, then
there exists some $t'\in W$ such that $t\Rx t'$ and $M,t'\not\mmodels\lambda'$
; so, by inductive hypothesis, $\overline M,t'\not\mmodels\lambda'$, which means that $\overline M,t\not\mmodels\boxx\lambda'$.
If, on the other hand, $M,t\mmodels\boxx\lambda'$, then: \begin{inparaenum} \item if $\lambda'=\top$, then $\overline M,t\mmodels\boxx\top$ independently from the presence of $\overline w$; and \item if $\lambda'=p$, then $\overline M,t\mmodels\boxx\lambda'$ because $p\in \overline V(\overline w)$ by construction; \item if $\lambda'=\boxx \lambda''$, then $\overline M,\overline w\mmodels\lambda'$, because $\overline w$ has no successors.\end{inparaenum}
\end{itemize}

\medskip

\noindent Since by hypothesis $M,w'\not\mmodels\lambda_1\vee\lambda_2$, we have that $M,w'\not\mmodels\lambda_j$ for $j \in \{1,2\}$. By the syntax of $\LmonoKromBox$, $\lambda_j$ can be either a positive or a negative literal.
If $\lambda_j$ is a positive literal, then from the above result we can directly conclude that $\overline M,w'\not\mmodels\lambda_j$. Conversely, if $\lambda_j = \neg\lambda$ is a negative literal, then we have that $M,w'\mmodels\lambda$ which implies, by the above result, that $\overline M,w'\mmodels\lambda$, that is, $\overline M,w'\not\mmodels \lambda_j$.
This implies that $\overline M,w'\not\mmodels\lambda_1\vee\lambda_2$, which means that $\overline M,w\not\mmodels\varphi_i$, that is, $\overline M,w\not\mmodels\varphi$. Therefore $\varphi$ cannot be a translation of $\psi$, and the claim is proved.
\end{proof}

\medskip

The following result deals with sup-propositional fragments without boxes; as before, the argument of Theorem~\ref{th:horndia_vs_boxdia}, based on the product of models, cannot be replicated. We would like to remark that Theorem~\ref{th:horndia_vs_boxdia} gives us a stronger result for \LmonoHornDia and \LmonoCoreDia when $\Lmono\in\{\Kmono, \Fmono\}$.

\begin{thm}\label{th:kromdia_vs_boxdia}
For $\Lmono\in\{\Kmono,\Tmono, \Fmono, \SFmono\}$:
\medskip
\begin{compactenum}
\item\label{KromDiaVsKrom} $\LmonoKromDia\meequiv\LmonoKrom$;
\item $\LmonoKromDia\wlexp\LmonoKrom$;
\item $\LmonoHornDia\wlexp\LmonoHorn$;
\item $\LmonoCoreDia\wlexp\LmonoCore$.
\end{compactenum}
\end{thm}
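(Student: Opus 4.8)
The plan is to split the four claims into the same two groups as in Theorem~\ref{th:krombox_vs_krom}: claim~(\ref{KromDiaVsKrom}) asks for a model-extending translation and is dual to Theorem~\ref{th:krombox_vs_krom}(\ref{KromBoxVsKrom}), while (2)--(4) are same-alphabet inexpressibility results. For~(\ref{KromDiaVsKrom}), since \LmonoKromDia\ is a syntactic fragment of \LmonoKrom, only the direction from \LmonoKrom\ to \LmonoKromDia\ is open, and I would obtain it by dualising the construction of Theorem~\ref{th:krombox_vs_krom}(\ref{KromBoxVsKrom}), interchanging \boxx\ and \diax. A positive box literal in a clause $\nabla_i(\boxx\lambda\vee\lambda_2^i)$ is replaced by $\nabla_i(\neg\diax p\vee\lambda_2^i)\wedge\nabla_i\boxx(p\vee\lambda)$, and a negative one $\nabla_i(\neg\boxx\lambda\vee\lambda_2^i)$ by $\nabla_i(\diax p\vee\lambda_2^i)\wedge\nabla_i\boxx(\neg p\vee\neg\lambda)$, with $p$ fresh. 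The witnessing extension is uniform: make $p$ true exactly at the worlds where $\lambda$ fails; then the auxiliary clause holds everywhere by construction, while $\neg\diax p$ (resp.\ $\diax p$) becomes equivalent to $\boxx\lambda$ (resp.\ $\neg\boxx\lambda$), so each new pair is equivalent to the clause it replaces. Since $\lambda$ reappears one box shallower, iterating removes all boxes from positive literals; nested mixed literals such as $\diax\boxx q$ must first be exposed by naming their innermost box-subliteral with a fresh proposition, exactly as in the dual proof.

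For (2)--(4) I would use the single formula $\psi=\boxx p\rightarrow q$ over $\mathcal P=\{p,q\}$. Written clausally it is $\neg\boxx p\vee q$, a core clause, hence simultaneously a \LmonoKrom-, a \LmonoHorn-, and a \LmonoCore-formula; and \LmonoKromDia, \LmonoHornDia, \LmonoCoreDia\ over $\mathcal P$ are all contained in the class of clausal formulas over $\mathcal P$ whose positive literals are box-free. So it suffices to prove, once, that no such box-free-positive formula over $\mathcal P$ is model-preservingly equivalent to $\psi$. This is genuinely an alphabet phenomenon and is consistent with~(\ref{KromDiaVsKrom}): with fresh letters $\psi$ can indeed be put in this form, by claim~(\ref{KromDiaVsKrom}).

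Suppose $\varphi=\bigwedge_k\nabla_k(D_k)$ were such an equivalent. I would take the model $M$ on $\{w,u\}$ with accessibility the appropriate closure of $\{(w,u)\}$ and $V(w)=V(u)=\{p\}$; there $\boxx p$ holds and $q$ fails at $w$, so $M,w\not\mmodels\psi$ and hence $M,w\not\mmodels\varphi$, which fixes a clause $\nabla_i(D_i)$ and an old world $w'$ reachable from $w$ through $\nabla_i$ with $M,w'\not\mmodels D_i$. Let $\overline M$ add one dead world $\overline w$ with $V(\overline w)=\emptyset$ as a fresh successor of $w$ (again closing $R$ as required); this falsifies $\boxx p$ at $w$, so $\overline M,w\mmodels\psi$ and therefore $\overline M,w\mmodels\varphi$. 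The contradiction comes from a preservation lemma: for every old world $t$ and every box-free positive literal $\lambda$, $M,t\mmodels\lambda$ iff $\overline M,t\mmodels\lambda$, proved by induction on modal depth. The only possible change is a diamond at $w$ witnessed by $\overline w$; but the positive literals satisfied by $\overline w$ are exactly the towers $\diax\cdots\diax\top$, and $w$ already has an old successor satisfying each of these --- namely $u$ satisfies $\top$ in the irreflexive cases, while in the reflexive cases every world satisfies all towers through its self-loop. Hence every literal of $D_i$ keeps its value at $w'$, so $D_i$ still fails there and $\nabla_i(D_i)$ is still violated at $w$, contradicting $\overline M,w\mmodels\varphi$.

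This settles the three inexpressibility claims simultaneously, and in particular covers \Tmono\ and \SFmono, where the product construction of Theorem~\ref{th:horndia_vs_boxdia} is unavailable because of Fact~\ref{fact:product}; for \Kmono\ and \Fmono\ the stronger relation $\slexp$ of Theorem~\ref{th:horndia_vs_boxdia} already implies $\wlexp$. The step I expect to be most delicate is the preservation lemma in the reflexive cases: one must verify that attaching a self-looping dead world never creates a new diamond at an old world, which is precisely where the observation that towers hold everywhere in reflexive frames is indispensable --- it is what lets the argument survive the failure of the product method. A secondary point of care is the bookkeeping for nested mixed literals in claim~(\ref{KromDiaVsKrom}), where sub-literals must be named before the box-elimination clauses apply.
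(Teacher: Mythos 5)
Your proposal is correct and follows essentially the same route as the paper: claim~(\ref{KromDiaVsKrom}) via the same dual translation that introduces a fresh letter $p$ tracking the failure of $\lambda$, and claims (2)--(4) via the single core formula $\boxx p\rightarrow q$, the added dead world $\overline w$ with empty valuation attached as a fresh successor of $w$, and a preservation lemma for box-free positive literals proved by induction on modal depth. Your explicit treatment of the reflexive cases via the towers $\diax\cdots\diax\top$ is in fact slightly more careful than the paper's own wording, which justifies the key induction step by saying that $\overline w$ ``has no successors'' even though in \Tmono and \SFmono it carries a self-loop after the closure is taken.
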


\begin{proof}
As before, the first result is relatively easy to see. Suppose that
$$\varphi=\nabla_1(\lambda_1^1\vee\lambda_2^1)\wedge \nabla_2(\lambda_1^2\vee\lambda_2^2)\wedge\ldots\wedge\nabla_i(\lambda_1^i\vee\lambda_2^i)\wedge\ldots
\wedge\nabla_l(\lambda_1^l\vee\lambda_2^l)$$

\noindent is a $\LmonoKrom$-formula, where, as in Theorem~\ref{th:krom_vs_bool}, we treat literals as special clauses. There are two cases. Suppose, first, that $\lambda_1^i=\boxx\lambda$, where $\lambda$ is a positive literal. We claim that the \LmonoKromDia-formula
$$\varphi'=\nabla_1(\lambda_1^1\vee\lambda_2^1)\wedge \nabla_2(\lambda_1^2\vee\lambda_2^2)\wedge\ldots\wedge\nabla_i(\neg\diax p\vee\lambda_2^i)\wedge\nabla_i\boxx(p\vee\lambda)\wedge\ldots
\wedge\nabla_l(\lambda_1^l\vee\lambda_2^l),$$

\noindent where $p$ is a fresh propositional variable, is model-extending equivalent to $\varphi$. To see this, let $\mathcal P$ the propositional alphabet in which $\varphi$ is written, and let $\mathcal P'=\mathcal P\cup\{p\}$, and consider a model $M=(\mathcal F,V)$ such that, for some world $w$, it is the case that $M,w\mmodels\varphi$; in particular, it is the case that $M,w\mmodels\nabla_i(\lambda^i_1\vee \lambda^i_2)$; let $W_i\subseteq W$ be the set of worlds reachable from $w$ via the universal prefix $\nabla_i$, and consider $v\in W_i$. If $M,v\mmodels\lambda_2^i$ we can extend $M$ to a model $M^\mathcal P=(\mathcal F,V^\mathcal P)$ such that it satisfies $p$ on every world reachable from $v$, if any, and both substituting clauses are satisfied. If, on the other hand, $M,v\mmodels\boxx\lambda_1^i$, for every $t$ such that $v\Rx t$ we have that $M,t\mmodels\lambda$; we can now extend $M$ to a model $M^\mathcal P=(\mathcal F,V^\mathcal P)$ such that it satisfies $\neg p$ on every such $t$ (if any), and, again, both substituting clauses are satisfied. A reversed argument proves that if $M,w\mmodels\varphi'$ it must be the case that $M,w\mmodels\varphi$. If, as a second case, $\lambda_1^i=\neg\boxx\lambda$, where $\lambda$ is a positive literal, then the translating formula is
$$\varphi'=\nabla_1(\lambda_1^1\vee\lambda_2^1)\wedge \nabla_2(\lambda_1^2\vee\lambda_2^2)\wedge\ldots\wedge\nabla_i(\diax p\vee\lambda_2^i)\wedge\nabla_i\boxx(\neg p\vee\neg \lambda)\wedge\ldots
\wedge\nabla_l(\lambda_1^l\vee\lambda_2^l),$$

\noindent and the proof of model-extending equivalence is identical to the above one.

\medskip

As far as the other three relationships are concerned, since $\LmonoHornDia$ (resp., \LmonoKromDia and \LmonoCoreDia) is a syntactical fragment of $\LmonoHorn$ (resp., \LmonoKrom and \LmonoCore), we only have to prove that there exists a formula that belongs to $\LmonoHorn$ (resp., $\LmonoKrom$, $\LmonoCore$) that cannot be translated into $\LmonoHornDia$ (resp., $\LmonoKromDia$, $\LmonoCoreDia$) within the same propositional alphabet. To this end, we consider a $\LmonoCore$-formula and we prove that it cannot be translated into $\LmonoHornDia$ nor to $\LmonoKromDia$ within the same propositional alphabet, implying that it cannot be translated into $\LmonoCoreDia$, either. Consider the following formula:
$$\psi=\boxx p\rightarrow q.$$

\noindent We prove a very general claim: there is no clausal form formula of the diamond fragment of $\Lmono$ that translates $\psi$ within the propositional alphabet $\{p,q\}$.
Notice that this is not in contradiction with Theorem~\ref{boolvsclause}, since the translation of a generic formula in clausal form may add new propositional variables~\cite{nguyen2004complexity}.
Suppose, by contradiction, that there exists a conjunction $\varphi$ of box-free clauses, such that for every model $M$ over the propositional alphabet $\mathcal P=\{p,q\}$ and every world $w$ we have that $M,w\mmodels \psi$ if and only if $M,w\mmodels\varphi.$ Let $\varphi=\varphi_1 \land \ldots \land \varphi_n$, where each $\varphi_i$ is in its generic form $\nabla(\neg\lambda_1\vee\ldots\vee\neg\lambda_n\vee\lambda_{n+1}\vee\ldots\vee\lambda_{n+m})$. As always, literals are treated as special clauses. Now, consider a model $M = \langle \mathcal F, V\rangle$, where $\mathcal F$ is based on a set of worlds $W$, and let $w\in W$ be a world such that $M,w\not\mmodels\psi$, and such that there exists at least one $v$ such that $w\Rx v$. Since $M,w\not\mmodels\psi$,  we have that $q\notin V(w)$ and for each $v$ such that $w\Rx v$ it is the case that $p\in V(v)$. Since $\varphi$ is a translation of $\psi$, it must be the case that $M,w\not\mmodels\varphi$, which implies that there must be a clause $\varphi_i$ such that $M,w\not\mmodels\varphi_i$, that is, there must be a world $w'$ such that  $M,w'\not\mmodels(\neg\lambda_1\vee\ldots\vee\neg\lambda_n\vee\lambda_{n+1}\vee\ldots\vee\lambda_{n+m})$. Now, consider the model $\overline M$ obtained from $M$ by extending the set of worlds $W$ to $\overline W=W\cup\{\overline w\}$, and the relation $\Rx$ to (the reflexive, transitive, or reflexive and transitive closure, depending on $\Lmono$, of) $\bar\Rx=\Rx\cup\{(w,\overline w)\}$. We set $\overline V(\overline w)=\emptyset$; clearly, $\overline M,w\mmodels \psi$. We want to prove that $\overline M,w'\not\mmodels\varphi_i$. Let us prove the following:
$$M,t\mmodels\lambda\Leftrightarrow \overline{M},t\mmodels\lambda,$$

\noindent for every $t\in W$ and positive literal $\lambda$. We do so by induction on $N=md(\lambda)$.

\medskip

\begin{itemize}
\item If $N=0$, then $\lambda$ is a propositional letter (the cases in which $\lambda=\top$ are trivial): the valuation of $t$ has not changed from $M$ to $\overline M$, and therefore we have the claim immediately.

\item If $N>0$, then we assume that $\lambda=\diax\lambda'$, and $\lambda'$ is a positive literal. By definition, $M,t\mmodels\diax\lambda'$ if and only if there exists some $t'$ such that $t\Rx t'$ and $M,t'\mmodels\lambda'$. Clearly, if $t\neq w$  and $(t,w)\not\in R$  then the set of reachable worlds from $t$ has not changed from $M$ to $\overline M$; by inductive hypothesis, $M,t'\mmodels \lambda'$ if and only if $\overline{M},t'\mmodels \lambda'$, and, therefore, $M,t\mmodels\diax\lambda$ if and only if $\overline M,t\mmodels\diax\lambda$, as we wanted. Otherwise, suppose that either $t=w$ or $(t,w)\in\;\Rx$, and recall that $(w, \overline{w})\in\overline \Rx$. Notice that the set of reachable worlds from $t$ has changed in the case $t=w$ by definition of $\overline R$, and in the case $(t,w)\in\;\Rx$ when $\Lmono\in\{\Fmono, \SFmono\}$. If $M,t\mmodels\diax\lambda'$, then there exists some $t'\in W$ such that $t\Rx t'$ and $M,t'\mmodels\lambda'$
; so, by inductive hypothesis, $\overline M,t'\mmodels\lambda'$, which means that $\overline M,t\mmodels\diax\lambda'$. If, on the other hand, $M,t\not\mmodels\diax\lambda'$, then: \begin{inparaenum} \item $\lambda'\neq\top$, because we have built $M$ in such a way that $w$ has a successor, and \item for every $t'$ such that $t\Rx t'$ it is the case that $M,t'\not\mmodels\lambda'$. \end{inparaenum} Since $\overline V(\overline w)=\emptyset$, and $\lambda'$ is positive and $\overline w$ has no successors, for every $t'$ such that $t	\mathrel{\overline{R}} t'$ it is the case that $\overline M,t'\not\mmodels\lambda'$, and, therefore, $\overline M,t\not\mmodels\diax\lambda'$, as we wanted.
\end{itemize}

\medskip

\noindent This means that $M,w'\not\models\neg\lambda_1\vee\ldots\vee\neg\lambda_n\vee\lambda_{n+1}\vee\ldots\vee\lambda_{n+m}$ implies that  $\overline M,w'\not\models\neg\lambda_1\vee\ldots\vee\neg\lambda_n\vee\lambda_{n+1}\vee\ldots\vee\lambda_{n+m}$, that is, $\overline M,w\not\models\varphi_i$. Hence, $\overline M,w\not\mmodels\varphi$. Therefore, $\varphi$ cannot exist, and this means that $\psi$ cannot be expressed neither in \LmonoKromDia, nor in \LmonoHornDia, nor in \LmonoCoreDia, within the same propositional alphabet.
\end{proof}

\medskip

\begin{cor}\label{cor:box_vs_dia}
The following results hold:
\medskip

\begin{compactenum}
\item\label{cor312:point1} For $\Lmono\in\{\Kmono, \Fmono\}$, \LmonoHornBox and \LmonoHornDia are expressively incomparable;
\item\label{cor312:point2} For $\Lmono\in\{\Kmono, \Fmono\}$, \LmonoCoreBox and \LmonoCoreDia are expressively incomparable;
\item\label{cor312:point3} For $\Lmono\in\{\Kmono,\Tmono, \Fmono, \SFmono\}$, \LmonoKromBox and \LmonoKromDia are expressively incomparable within the same propositional alphabet.
\end{compactenum}
\end{cor}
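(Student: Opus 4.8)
The plan is to establish each of the three incomparabilities by exhibiting two witness formulas—one living in the box fragment but not expressible in the corresponding diamond fragment, and one living in the diamond fragment but not expressible in the box fragment—so that neither fragment translates into the other. The key observation is that every non-expressibility fact I need has already been proved in the preceding theorems; the remaining work is only to check that each witness inhabits the right fragment and that the scope of the cited result (which logics, and whether alphabet extensions are allowed) matches the claim.

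For parts~\ref{cor312:point1} and~\ref{cor312:point2} (Horn and core over $\Lmono\in\{\Kmono, \Fmono\}$) I would reuse the two formulas already singled out earlier. The formula $\diax p$ is a core-diamond formula, hence belongs to \LmonoCoreDia\ and \LmonoHornDia; by Theorem~\ref{th:hornbox_vs_boxdia} it cannot be translated into \LmonoHornBox\ (and therefore not into \LmonoCoreBox) over any finite extension of the alphabet. Symmetrically, $\boxx p\rightarrow q$ is a core-box formula, hence belongs to \LmonoCoreBox\ and \LmonoHornBox; by Theorem~\ref{th:horndia_vs_boxdia} it cannot be translated into \LmonoHornDia\ (and therefore not into \LmonoCoreDia) over any finite extension of the alphabet. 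Pairing the two witnesses yields incomparability in both directions, for both the Horn and the core pairs, and moreover in the strong sense (allowing alphabet extensions), which is why no same-alphabet qualifier is needed here.

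For part~\ref{cor312:point3} (Krom over all four logics, within the same propositional alphabet) I would take the same two witnesses but work only within a fixed alphabet, since the strong results are unavailable: both \LmonoKromBox\ and \LmonoKromDia\ are model-extending equivalent to \LmonoKrom\ (Theorems~\ref{th:krombox_vs_krom} and~\ref{th:kromdia_vs_boxdia}), so once new letters are allowed the two fragments coincide and no incomparability could hold. Within a fixed alphabet, however, $\diax p\in\LmonoKromDia$ cannot be translated into \LmonoKromBox\ by Theorem~\ref{th:krombox_vs_krom}, while $\boxx p\rightarrow q\in\LmonoKromBox$ cannot be translated into \LmonoKromDia\ by Theorem~\ref{th:kromdia_vs_boxdia}; together these give incomparability within the same alphabet uniformly for all $\Lmono\in\{\Kmono,\Tmono, \Fmono, \SFmono\}$.

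The only genuine subtlety—and the step I would watch most carefully—is matching logic ranges and expressiveness notions to the witnesses, since the computational content here is essentially bookkeeping. The restriction to $\{\Kmono, \Fmono\}$ in parts~\ref{cor312:point1}--\ref{cor312:point2} is forced by Theorem~\ref{th:horndia_vs_boxdia}, which fails for $\Tmono$ and $\SFmono$ precisely because \TmonoHorn\ and \SFmonoHorn\ are closed under product (Fact~\ref{fact:product}); hence the $\boxx p\rightarrow q$ witness no longer separates box from diamond for those two logics once new letters are permitted. Dually, the weakening to ``within the same propositional alphabet'' in part~\ref{cor312:point3} is exactly what the model-extending equivalences of Theorems~\ref{th:krombox_vs_krom} and~\ref{th:kromdia_vs_boxdia} demand. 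Thus the corollary assembles earlier results, with the real care lying in correctly tracking $\slexp$ versus $\wlexp$ and the admissible classes of frames.
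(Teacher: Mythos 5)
Your proposal is correct and follows essentially the same route as the paper: both use $\diax p$ (via Theorem~\ref{th:hornbox_vs_boxdia}, resp.\ Theorem~\ref{th:krombox_vs_krom}) and $\boxx p\rightarrow q$ (via Theorem~\ref{th:horndia_vs_boxdia}, resp.\ Theorem~\ref{th:kromdia_vs_boxdia}) as the two witnesses, with the same attention to which logics and which notion of translation each cited theorem supports. Your added explanation of \emph{why} the scopes differ---Fact~\ref{fact:product} blocking the Horn/core argument for \Tmono\ and \SFmono, and the model-extending equivalences forcing the same-alphabet restriction in the Krom case---is accurate, though the paper leaves it implicit.
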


\begin{proof}
As we have seen in Theorem~\ref{th:hornbox_vs_boxdia}, the \LmonoCoreDia-formula (which is also a \LmonoHornDia-formula) $\diax p$ cannot be translated into \LmonoHornBox (and therefore it cannot be translated into \LmonoCoreBox either), over any finite extension of the propositional alphabet, and, as we have seen in Theorem~\ref{th:horndia_vs_boxdia}, the  \LmonoCoreBox-formula $\boxx p\rightarrow q$ (which is also a \LmonoHornBox-formula) cannot be translated into \LmonoHornDia (and therefore it cannot be translated into \LmonoCoreDia either), over any finite extension of the propositional alphabet when $\Lmono\in\{\Kmono, \Fmono\}$. These two observations, together, show that we cannot compare  \LmonoHornBox with \LmonoHornDia, nor \LmonoCoreBox with \LmonoCoreDia, proving points \ref{cor312:point1} and \ref{cor312:point2}. Similarly, Theorem~\ref{th:krombox_vs_krom} proves that the \LmonoKromDia-formula $\diax p$ cannot be translated into \LmonoKromBox, and Theorem~\ref{th:kromdia_vs_boxdia} proves that the \LmonoKromBox-formula $\boxx p\rightarrow q$ cannot be translated into \LmonoKromDia, all this within the same propositional alphabet; these two observations, together, imply that, within the same propositional alphabet, we cannot compare \LmonoKromBox and \LmonoKromDia either,  proving \ref{cor312:point3}.
\end{proof}

\medskip

\begin{cor}\label{cor:horn_vs_krom}
For $\Lmono\in\{\Kmono,\Tmono, \Fmono, \SFmono\}$, the following results hold:

\medskip

\begin{compactenum}
\item\label{cor313:point1} $\LmonoCoreBox\slexp\LmonoKrom$, $\LmonoKromBox$, $\LmonoKromDia$;
\item\label{cor313:point2} $\LmonoCoreDia\slexp\LmonoKrom$, $\LmonoKromBox$, $\LmonoKromDia$.
\item\label{cor313:point3} $\LmonoHornDia,\LmonoHornBox,\LmonoHorn$ cannot be compared with any one among $\LmonoKromDia,\LmonoKromBox,$ $\LmonoKrom$, within the same propositional alphabet;
\item\label{cor313:point4} $\LmonoCoreBox\wlexp\LmonoHornBox$ and $\LmonoCoreDia\wlexp\LmonoHornDia$.

\end{compactenum}
\end{cor}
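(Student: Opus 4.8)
The plan is to derive the entire corollary by assembling theorems already proved, since the statement introduces no new models or translations of its own. The three ingredients are: the syntactic containments recorded in Fig.~\ref{fig:exprpower}; the model-extending equivalences $\LmonoKromBox\meequiv\LmonoKrom$ and $\LmonoKromDia\meequiv\LmonoKrom$ from Theorems~\ref{th:krombox_vs_krom} and~\ref{th:kromdia_vs_boxdia}; and two witness formulas, $p\vee q$ and $p\wedge q\rightarrow r$, whose inexpressibility has already been established in Theorems~\ref{th:horn_vs_bool} and~\ref{th:krom_vs_bool}.

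For the translatability directions of points~\ref{cor313:point1} and~\ref{cor313:point2}, I would first observe that $\LmonoCoreBox$ and $\LmonoCoreDia$ are syntactic fragments of $\LmonoKrom$, so that for most targets the identity already gives a model-preserving, hence model-extending, translation; this covers $\LmonoCoreBox$ into $\LmonoKrom$ and $\LmonoKromBox$, and $\LmonoCoreDia$ into $\LmonoKrom$ and $\LmonoKromDia$. The only two translations that are not plain containment are the crossed ones, $\LmonoCoreBox$ into $\LmonoKromDia$ and $\LmonoCoreDia$ into $\LmonoKromBox$; for these I would compose the syntactic inclusion into $\LmonoKrom$ with the model-extending translation furnished by $\LmonoKrom\meequiv\LmonoKromDia$ (resp.\ $\LmonoKrom\meequiv\LmonoKromBox$), using that model-extending translations are closed under composition and that a model-preserving map is a special model-extending one.

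For the strictness of the $\slexp$ relations in points~\ref{cor313:point1} and~\ref{cor313:point2}, I would use $p\vee q$ uniformly: it is modality-free, hence a member of every Krom fragment, yet by Theorem~\ref{th:horn_vs_bool} it cannot be expressed in $\LmonoHorn$ over any finite extension of the alphabet, and therefore in neither $\LmonoCoreBox$ nor $\LmonoCoreDia$, both of which are fragments of $\LmonoHorn$. For point~\ref{cor313:point3} the same two witnesses settle all nine pairs at once: $p\vee q$ is a Krom-family formula expressible in no Horn-family fragment, while $p\wedge q\rightarrow r$ is a Horn-family formula that, by Theorem~\ref{th:krom_vs_bool}, is expressible in no Krom-family fragment within the same alphabet; both are modality-free and so live in all box and diamond variants on their respective sides. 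For point~\ref{cor313:point4}, the containments $\LmonoCoreBox\subseteq\LmonoHornBox$ and $\LmonoCoreDia\subseteq\LmonoHornDia$ give the forward $\wlexp$-translations, and $p\wedge q\rightarrow r$, which is box- and diamond-free and not expressible in $\LmonoKrom\supseteq\LmonoCore$ within the same alphabet, witnesses the failure of the reverse direction into $\LmonoCoreBox$ and $\LmonoCoreDia$.

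I expect the main difficulty to be bookkeeping rather than mathematics: for each listed relation one must match the non-translatability claim to a theorem of the correct strength, invoking the ``no finite extension'' force of Theorem~\ref{th:horn_vs_bool} for the $\slexp$ claims of points~\ref{cor313:point1}--\ref{cor313:point2} but only the weaker same-alphabet force of Theorem~\ref{th:krom_vs_bool} where a $\wlexp$ relation or a same-alphabet incomparability is asserted. The one genuinely non-routine step is the pair of crossed translations, where the argument rests on composing the $\meequiv$ results with syntactic inclusion; I would state explicitly that this composition preserves model-extendingness, as this is the only place where the translation is not simply the identity.
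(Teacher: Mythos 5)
Your proposal is correct and follows essentially the same route as the paper's own proof: the same two witnesses ($p\vee q$ via Theorem~\ref{th:horn_vs_bool} with its ``any finite extension'' strength, and $p\wedge q\rightarrow r$ via Theorem~\ref{th:krom_vs_bool} with only same-alphabet strength), the same containment arguments, and the same use of the model-extending equivalences of Theorems~\ref{th:krombox_vs_krom} and~\ref{th:kromdia_vs_boxdia} to handle the crossed cases $\LmonoCoreBox\slexp\LmonoKromDia$ and $\LmonoCoreDia\slexp\LmonoKromBox$. The only cosmetic difference is that the paper phrases the crossed cases as transitivity of $\slexp$ through $\meequiv$, whereas you separate the composed translation from the uniform non-translatability witness, and you make explicit the closure of model-extending translations under composition that the paper leaves implicit.
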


\begin{proof}
As we have seen in Theorem~\ref{th:horn_vs_bool}, the formula $p\vee q$, which belongs to all sub-Krom fragments of \LmonoBool, cannot be translated into \LmonoHorn, and, therefore, it cannot be translated into any sub-Horn and sub-core fragment either, even with an extended propositional alphabet (and, hence, even more so within the same alphabet). Since $\LmonoCoreBox$ is a fragment of $\LmonoKrom$ and of $\LmonoKromBox$, and since  $\LmonoCoreDia$ is a fragment of $\LmonoKrom$ and of $\LmonoKromDia$, Theorem~\ref{th:horn_vs_bool} proves that $\LmonoCoreBox\slexp\LmonoKrom$, $\LmonoKromBox$ and that $\LmonoCoreDia\slexp\LmonoKrom$, $\LmonoKromDia$. By Theorems~\ref{th:krombox_vs_krom} and \ref{th:kromdia_vs_boxdia} we have that the three fragments $\LmonoKrom$, $\LmonoKromBox$ and $\LmonoKromDia$ are model-extending equivalent, that is, there is a translation between any pair of them if we allow the use of additional propositional letters.
Since $\LmonoCoreBox \slexp \LmonoKrom$ and $\LmonoKrom \meequiv \LmonoKromDia$ we have that also $\LmonoCoreBox \slexp \LmonoKromDia$ holds, and by the very same reasoning we can prove that $\LmonoCoreDia \slexp \LmonoKromBox$.
This proves points \ref{cor313:point1} and \ref{cor313:point2}.
As for the other points, Theorem~\ref{th:krom_vs_bool} proves that the formula $(p\wedge q)\rightarrow r$, which belongs to all sub-Horn fragments of \LmonoBool, cannot be translated into \LmonoKrom, and, therefore, it cannot be translated into any sub-Krom fragment either, at least within the same propositional alphabet.
This observation, together with the observation made in points \ref{cor313:point1} and \ref{cor313:point2}, proves point \ref{cor313:point3}.
Finally, using again Theorem~\ref{th:krom_vs_bool}, and taking into account that $\LmonoCoreBox = \LmonoHornBox \cap \LmonoKromBox$ and that $\LmonoCoreDia = \LmonoHornDia \cap \LmonoKromDia$, point  \ref{cor313:point4} immediately follows.
\end{proof}

\begin{figure}[t!]

\begin{tikzpicture}[>=latex,line join=bevel,scale=0.65,thick]
\begin{scope}
  \pgfsetstrokecolor{black}
  \definecolor{strokecol}{rgb}{1.0,0.0,0.0};
  \pgfsetstrokecolor{strokecol}
  \draw (20.0bp,80.0bp) .. controls (20.0bp,80.0bp) and (213.0bp,80.0bp)  .. (213.0bp,80.0bp) .. controls (219.0bp,80.0bp) and (225.0bp,86.0bp)  .. (225.0bp,92.0bp) .. controls (225.0bp,92.0bp) and (225.0bp,192.0bp)  .. (225.0bp,192.0bp) .. controls (225.0bp,198.0bp) and (219.0bp,204.0bp)  .. (213.0bp,204.0bp) .. controls (213.0bp,204.0bp) and (20.0bp,204.0bp)  .. (20.0bp,204.0bp) .. controls (14.0bp,204.0bp) and (8.0bp,198.0bp)  .. (8.0bp,192.0bp) .. controls (8.0bp,192.0bp) and (8.0bp,92.0bp)  .. (8.0bp,92.0bp) .. controls (8.0bp,86.0bp) and (14.0bp,80.0bp)  .. (20.0bp,80.0bp);
\end{scope}
\begin{scope}
  \pgfsetstrokecolor{black}
  \definecolor{strokecol}{rgb}{1.0,0.0,0.0};
  \pgfsetstrokecolor{strokecol}
  \draw (20.0bp,80.0bp) .. controls (20.0bp,80.0bp) and (213.0bp,80.0bp)  .. (213.0bp,80.0bp) .. controls (219.0bp,80.0bp) and (225.0bp,86.0bp)  .. (225.0bp,92.0bp) .. controls (225.0bp,92.0bp) and (225.0bp,192.0bp)  .. (225.0bp,192.0bp) .. controls (225.0bp,198.0bp) and (219.0bp,204.0bp)  .. (213.0bp,204.0bp) .. controls (213.0bp,204.0bp) and (20.0bp,204.0bp)  .. (20.0bp,204.0bp) .. controls (14.0bp,204.0bp) and (8.0bp,198.0bp)  .. (8.0bp,192.0bp) .. controls (8.0bp,192.0bp) and (8.0bp,92.0bp)  .. (8.0bp,92.0bp) .. controls (8.0bp,86.0bp) and (14.0bp,80.0bp)  .. (20.0bp,80.0bp);
\end{scope}
  \node (Core) at (336.0bp,106.0bp) [draw,draw=none] {$\LCore$};
  \node (Bool) at (335.0bp,250.0bp) [draw,draw=none] {$\LBool$};
  \node (CoreBox) at (285.0bp,34.0bp) [draw,draw=none] {$\LCoreBox$};
  \node (Krom) at (177.0bp,178.0bp) [draw,draw=none] {$\LKrom$};
  \node (HornBox) at (500.0bp,106.0bp) [draw,draw=none] {$\LHornBox$};
  \node (Horn) at (494.0bp,178.0bp) [draw,draw=none] {$\LHorn$};
  \node (KromDia) at (172.0bp,106.0bp) [draw,draw=none] {$\LKromDia$};
  \node (HornDia) at (606.0bp,106.0bp) [draw,draw=none] {$\LHornDia$};
  \node (KromBox) at (63.0bp,106.0bp) [draw,draw=none] {$\LKromBox$};
  \node (CoreDia) at (389.0bp,34.0bp) [draw,draw=none] {$\LCoreDia$};
  \draw [->,dashed] (Krom) ..controls (175.21bp,151.98bp) and (174.55bp,142.71bp)  .. (KromDia);
  \draw [red,->,solid] (Horn) ..controls (496.14bp,151.98bp) and (496.94bp,142.71bp)  .. (HornBox);
  \draw [red,->,solid] (Horn) ..controls (535.91bp,150.81bp) and (553.92bp,139.55bp)  .. (HornDia);
  \draw [red,->,solid] (KromDia) ..controls (215.2bp,85.116bp) and (222.26bp,82.327bp)  .. (229.0bp,80.0bp) .. controls (273.42bp,64.667bp) and (288.24bp,65.91bp)  .. (CoreDia);
  \draw [->,dashed] (HornDia) ..controls (566.65bp,85.093bp) and (560.19bp,82.312bp)  .. (554.0bp,80.0bp) .. controls (517.07bp,66.216bp) and (474.06bp,54.626bp)  .. (CoreDia);
  \draw [->,dashed] (Bool) ..controls (279.09bp,224.23bp) and (246.9bp,209.97bp)  .. (Krom);
  \draw [red,->,solid] (KromBox) ..controls (104.67bp,85.136bp) and (111.49bp,82.341bp)  .. (118.0bp,80.0bp) .. controls (155.13bp,66.659bp) and (198.15bp,55.157bp)  .. (CoreBox);
\draw [->,red,solid] (Krom) ..controls (234.17bp,151.83bp) and (263.35bp,138.99bp)  .. (Core);
  \draw [->,dashed] (Horn) ..controls (438.57bp,152.44bp) and (408.8bp,139.25bp)  .. (Core);
  \draw [red,->,solid] (Core) ..controls (317.4bp,79.474bp) and (310.12bp,69.483bp)  .. (CoreBox);
  \draw [->,red,solid] (Bool) ..controls (391.5bp,224.12bp) and (425.12bp,209.32bp)  .. (Horn);
  \draw [->,dashed] (Krom) ..controls (134.21bp,150.72bp) and (115.67bp,139.34bp)  .. (KromBox);
  \draw [red,->,solid] (Core) ..controls (355.39bp,79.389bp) and (363.05bp,69.277bp)  .. (CoreDia);
  \draw [->,dashed] (HornBox) ..controls (456.57bp,85.077bp) and (449.63bp,82.322bp)  .. (443.0bp,80.0bp) .. controls (400.18bp,64.998bp) and (386.05bp,65.476bp)  .. (CoreBox);
%
%
	\node (Equiv) at (60.0bp,185.0bp) [red] {$\meequiv$};
	\draw [->,dashed] (10.0bp,250.0bp) -- (60.0bp,250.0bp) node[right] {\footnotesize{``weakly less expressive"}};
	\draw [red,->,solid] (10.0bp,230.0bp) -- (60.0bp,230.0bp) node[right, black] {\footnotesize{``strongly less expressive"}};
\end{tikzpicture}


\medskip

\begin{tikzpicture}[>=latex,line join=bevel,scale=0.65,thick]
\begin{scope}
  \pgfsetstrokecolor{black}
  \definecolor{strokecol}{rgb}{1.0,0.0,0.0};
  \pgfsetstrokecolor{strokecol}
  \draw (20.0bp,80.0bp) .. controls (20.0bp,80.0bp) and (213.0bp,80.0bp)  .. (213.0bp,80.0bp) .. controls (219.0bp,80.0bp) and (225.0bp,86.0bp)  .. (225.0bp,92.0bp) .. controls (225.0bp,92.0bp) and (225.0bp,192.0bp)  .. (225.0bp,192.0bp) .. controls (225.0bp,198.0bp) and (219.0bp,204.0bp)  .. (213.0bp,204.0bp) .. controls (213.0bp,204.0bp) and (20.0bp,204.0bp)  .. (20.0bp,204.0bp) .. controls (14.0bp,204.0bp) and (8.0bp,198.0bp)  .. (8.0bp,192.0bp) .. controls (8.0bp,192.0bp) and (8.0bp,92.0bp)  .. (8.0bp,92.0bp) .. controls (8.0bp,86.0bp) and (14.0bp,80.0bp)  .. (20.0bp,80.0bp);
\end{scope}
\begin{scope}
  \pgfsetstrokecolor{black}
  \definecolor{strokecol}{rgb}{1.0,0.0,0.0};
  \pgfsetstrokecolor{strokecol}
  \draw (20.0bp,80.0bp) .. controls (20.0bp,80.0bp) and (213.0bp,80.0bp)  .. (213.0bp,80.0bp) .. controls (219.0bp,80.0bp) and (225.0bp,86.0bp)  .. (225.0bp,92.0bp) .. controls (225.0bp,92.0bp) and (225.0bp,192.0bp)  .. (225.0bp,192.0bp) .. controls (225.0bp,198.0bp) and (219.0bp,204.0bp)  .. (213.0bp,204.0bp) .. controls (213.0bp,204.0bp) and (20.0bp,204.0bp)  .. (20.0bp,204.0bp) .. controls (14.0bp,204.0bp) and (8.0bp,198.0bp)  .. (8.0bp,192.0bp) .. controls (8.0bp,192.0bp) and (8.0bp,92.0bp)  .. (8.0bp,92.0bp) .. controls (8.0bp,86.0bp) and (14.0bp,80.0bp)  .. (20.0bp,80.0bp);
\end{scope}
  \node (Core) at (336.0bp,106.0bp) [draw,draw=none] {$\LCore$};
  \node (Bool) at (335.0bp,250.0bp) [draw,draw=none] {$\LBool$};
  \node (CoreBox) at (285.0bp,34.0bp) [draw,draw=none] {$\LCoreBox$};
  \node (Krom) at (177.0bp,178.0bp) [draw,draw=none] {$\LKrom$};
  \node (HornBox) at (500.0bp,106.0bp) [draw,draw=none] {$\LHornBox$};
  \node (Horn) at (494.0bp,178.0bp) [draw,draw=none] {$\LHorn$};
  \node (KromDia) at (172.0bp,106.0bp) [draw,draw=none] {$\LKromDia$};
  \node (HornDia) at (606.0bp,106.0bp) [draw,draw=none] {$\LHornDia$};
  \node (KromBox) at (63.0bp,106.0bp) [draw,draw=none] {$\LKromBox$};
  \node (CoreDia) at (389.0bp,34.0bp) [draw,draw=none] {$\LCoreDia$};
  \draw [->,dashed] (Krom) ..controls (175.21bp,151.98bp) and (174.55bp,142.71bp)  .. (KromDia);
  \draw [red,->,solid] (Horn) ..controls (496.14bp,151.98bp) and (496.94bp,142.71bp)  .. (HornBox);
reflex  \draw [->,dashed] (Horn) ..controls (535.91bp,150.81bp) and (553.92bp,139.55bp)  .. (HornDia);
reflex  \draw [->,red, solid] (KromDia) ..controls (215.2bp,85.116bp) and (222.26bp,82.327bp)  .. (229.0bp,80.0bp) .. controls (273.42bp,64.667bp) and (288.24bp,65.91bp)  .. (CoreDia);

  \draw [->,dashed] (HornDia) ..controls (566.65bp,85.093bp) and (560.19bp,82.312bp)  .. (554.0bp,80.0bp) .. controls (517.07bp,66.216bp) and (474.06bp,54.626bp)  .. (CoreDia);
  \draw [->,dashed] (Bool) ..controls (279.09bp,224.23bp) and (246.9bp,209.97bp)  .. (Krom);
  \draw [red,->,solid] (KromBox) ..controls (104.67bp,85.136bp) and (111.49bp,82.341bp)  .. (118.0bp,80.0bp) .. controls (155.13bp,66.659bp) and (198.15bp,55.157bp)  .. (CoreBox);
  \draw [->,red,solid] (Krom) ..controls (234.17bp,151.83bp) and (263.35bp,138.99bp)  .. (Core);
  \draw [->,dashed] (Horn) ..controls (438.57bp,152.44bp) and (408.8bp,139.25bp)  .. (Core);
  \draw [red,->,solid] (Core) ..controls (317.4bp,79.474bp) and (310.12bp,69.483bp)  .. (CoreBox);
 \draw [red,->,solid] (Bool) ..controls (391.5bp,224.12bp) and (425.12bp,209.32bp)  .. (Horn);
  \draw [->,dashed] (Krom) ..controls (134.21bp,150.72bp) and (115.67bp,139.34bp)  .. (KromBox);
  \draw [->,dashed] (Core) ..controls (355.39bp,79.389bp) and (363.05bp,69.277bp)  .. (CoreDia);
  \draw [->,dashed] (HornBox) ..controls (456.57bp,85.077bp) and (449.63bp,82.322bp)  .. (443.0bp,80.0bp) .. controls (400.18bp,64.998bp) and (386.05bp,65.476bp)  .. (CoreBox);
%
%
	\node (Equiv) at (60.0bp,185.0bp) [red] {$\meequiv$};
\end{tikzpicture}
	\caption{Relative expressive power: $\Lmono\in\{\Kmono, \Fmono\}$ (top), and $\Lmono\in\{\Tmono, \SFmono\}$ (bottom).}
   \label{fig:exprpowerComplete}
\end{figure}

\medskip

A graphical account of the results of this section is shown in Fig.~\ref{fig:exprpowerComplete}. Among all results, most striking is the lost of expressive power of the languages of the type \LmonoHornBox\ and \LmonoCoreBox. The 
important expressivity change occurs
at two different moments, depending on the class of frames: for non-reflexive classes it occurs below \LmonoHorn, while for reflexive classes it occurs between \LmonoBool\ and \LmonoHorn. Either way,  \LmonoHornBox\ and \LmonoCoreBox\ seem the best candidates to be further studied in terms of the complexity of their satisfiability problem.

\section{Complexity}


In this section we study the complexity of the satisfiability problem for \LmonoHornBox\ and \LmonoCoreBox and we give a modular algorithm for the satisfiability-checking of \LmonoHornBox-formulas.
We begin by proving that if a formula of the Horn box fragment of $\Kmono$, $\Tmono$, $\Fmono$, or $\SFmono$ is satisfiable, then it is satisfiable in a very simple (pre-linear) model of bounded size (Theorem~\ref{th:prelinear}). Then we exploit this result to show that the satisfiability problem for such fragments is \Pt-complete (Theorem~\ref{th:pcompleteness}).
The complexity result does not give us a direct satisfiability-checking procedure. Thus, we define a modular algorithm that builds a model for the formula in deterministic polynomial time. Such a procedure is made of a common part (Algorithms \ref{alg:hornboxsat} and \ref{alg:shorten}) and a specific ``saturation procedure'' that depends on the considered language (Algorithms \ref{alg:ksaturate}, \ref{alg:fsaturate}, \ref{alg:tsaturate} and \ref{alg:sfsaturate}).
Correctness, completeness and complexity of the algorithms are proved in Theorem~\ref{th:complexity}.
Finally, we devote the last part of the section to the study of the core box fragments. We extend the well-known 2SAT algorithm to \KmonoCoreBox, thus proving \NLOG-completess of this fragment (Theorem \ref{th:khornbox_nlog}), leaving the exact complexity of the other core box fragments as an open problem.

\medskip

To prove the small-model theorem for $\Kmono$, $\Tmono$, $\Fmono$, or $\SFmono$, we need to introduce some preliminary definitions. We say that a relational structure $(W,R)$ is \emph{pre-linear} if it is either a simple path or a lasso-shaped path, namely, if:
\begin{inparaenum} \item there exists a unique node $w_0$ (called {\em root}) such that, for every $w\in W$, $w_0 R^* w$;
\item every $w \in W$ has at most one $R$-successor.
\end{inparaenum}
(Recall that, for a relation $R$, we denote by 
$R^*$ its reflexive and transitive closure.)
In a pre-linear structure we can enumerate the worlds in $W$ as $w_0, w_1, \ldots$, where $w_0$ is the root and for each $k \geq 0$ we have that $w_{k+1}$ is the unique $R$-successor of $w_k$. In the following, we use the term {\em pre-linear model} to denote a model built on a pre-linear relational structure. The following lemma, given for \KmonoHornBox, will be later generalized to the other cases.

\begin{lem}\label{lem:simpleprelinear}
Let $\varphi$ be a \KmonoHornBox-formula. Then, $\varphi$ is satisfiable if and only if it is satisfiable in a pre-linear model.
\end{lem}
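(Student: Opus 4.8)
The backward direction is immediate: a pre-linear model is a model, so if $\varphi$ is satisfied in some pre-linear structure then it is satisfiable. The plan is therefore to prove the forward direction, turning an arbitrary model into a pre-linear one. I start from any $M$ and world $w_0$ with $M,w_0\mmodels\varphi$ and first pass to a tree-shaped model: unravelling $M$ from $w_0$ yields a bisimilar tree $T$ with root $r$ and $T,r\mmodels\varphi$, and since the truth of $\varphi$ at $r$ depends only on the worlds within distance $md(\varphi)$ of $r$, I truncate $T$ at depth $md(\varphi)$ to obtain a finite tree still satisfying $\varphi$ at its root. The decisive simplification, special to the box fragment, is that every positive literal of $\KmonoHornBox$ has the shape $\boxx^{k}p$ (a string of boxes over a propositional letter or $\top$); hence in a tree the value of such a literal at a node is governed purely by the node's depth $d$ and the valuations of its level-$(d{+}k)$ descendants.

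This suggests collapsing $T$ level by level. I define a simple path $\pi_0\Rx\pi_1\Rx\cdots\Rx\pi_{D'}$, where $D'$ is the length of the longest branch of the truncated tree, and set $V(\pi_d)=\bigcap\{V_T(u): u \text{ is at depth } d \text{ in } T\}$, with $\pi_d$ present exactly when $T$ has a node at depth $d$. This is the pre-linear candidate, and it is in spirit the intersection, in the sense of Lemma~\ref{lem:intersection}, of the level data of $T$. It remains to check that $P,\pi_0\mmodels\varphi$, clause by clause. Because on a path the only world at distance $s$ from $\pi_0$ is $\pi_s$, a clause $\nabla(\mathrm{body}\to\mathrm{head})$ with $\nabla=\boxx^{s}$ reduces to verifying the implication at $\pi_s$, and it holds vacuously when $\pi_s$ is absent, which matches the vacuity of $\boxx^{s}$ at $r$ in $T$.

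The verification runs in two movements. First (body, upward): if $\mathrm{body}=\bigwedge_j\boxx^{k_j}p_j$ holds at $\pi_s$, then unwinding the intersection every level-$(s{+}k_j)$ node of $T$ satisfies $p_j$; since the depth-$(s{+}k_j)$ descendants of any fixed level-$s$ node $u$ form a subset of these, $u\mmodels\boxx^{k_j}p_j$, so \emph{every} level-$s$ node satisfies the whole body, and as $T$ satisfies the clause at each such $u$, every level-$s$ node satisfies the head. Second (head, downward): if $\mathrm{head}=p$ then $p$ lies in all level-$s$ valuations, hence in their intersection $V(\pi_s)$; if $\mathrm{head}=\boxx^{k}p$ then every level-$(s{+}k)$ node is a descendant of some level-$s$ node and so satisfies $p$, whence $\pi_s\mmodels\boxx^{k}p$; and if $\mathrm{head}=\bot$, the existence of at least one level-$s$ node contradicts ``every level-$s$ node satisfies $\bot$'', so the body could not have held at $\pi_s$ in the first place. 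The main obstacle — and the reason the naive idea of deleting all but one successor fails — is precisely this transfer of a box-valued head together with the bookkeeping of vacuous literals: a node's box literals are universally quantified, so \emph{merging} successors rather than dropping them is what prevents bodies from being spuriously triggered while still letting heads propagate through the tree's ancestor relation, and the conventions ``$\pi_{s+k}$ exists iff $T$ has a level-$(s{+}k)$ node'' are what make the two pictures agree at the ends of branches. Collecting the cases yields $P,\pi_0\mmodels\varphi$ with $P$ a simple path, which is pre-linear.
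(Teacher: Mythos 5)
Your proof is correct and follows essentially the same route as the paper's: the paper likewise collapses all worlds reachable from $w_0$ in exactly $k$ steps into the single $k$-th world of a path whose valuation is the intersection of their valuations, and verifies clauses via exactly your two movements (body literals of the form $\boxx^k p$ transfer from the path to every member of a level, and heads transfer back through the intersection, with the same end-of-path vacuity bookkeeping). Your preliminary unravelling and truncation, and your direct clause-by-clause verification, are only cosmetic variants of the paper's assumption that the frame is connected and its argument by contradiction.
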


\begin{proof}
Let $\varphi$ be a satisfiable \KmonoHornBox-formula, and let $M=(\mathcal F,V)$, where $\mathcal F=(W,R)$, be a model that satisfies it at $w_0$. We can assume that $\mathcal F$ is connected~\cite{blackburn2006handbook}. We now systematically build a pre-linear model $M'$, with set of worlds $W'$, as follows. The worlds in $W'$ are sets of worlds in $W$. Let us first define $W_0=\{w_0'\}$, $w_0'=\{w_0\}$, $R_0=\emptyset$, and $V_0(w_0')=V(w_0)$.
Now, given the generic pre-linear model $M_k=(\mathcal F_k,V_k)$, where $\mathcal F_k=(W_k,R_k)$, we define the pre-linear model $M_{k+1}$ as follows. Consider the `last' world $w_k'\subseteq W$, and define $w_{k+1}'=\{w\mid w\in W\mbox{ and }\exists w'\in w_k'\mbox{ s.t. }(w',w)\in R\}$, $W_{k+1}=W_k\cup\{w_{k+1}'\}$,  $R_{k+1}=R_{k}\cup\{(w_k',w_{k+1}')\}$, and $V_{k+1}(w_{k+1}')=\bigcap_{w\in w_{k+1}'}V(w)$. Clearly, $M'=\bigcup_k M_k$ is a pre-linear model.

To conclude the proof we have to show that $M',w_0'\mmodels\varphi$. Assume, by way of contradiction, that this is not the case, that is, $M',w_0'\not\mmodels\varphi$. Since $\varphi$ can be thought as a conjunction of clauses, this means that $M',w_0'\not\mmodels\varphi_i$ for some clause $\varphi_i=\boxx^s(\lambda_1\wedge\ldots\wedge\lambda_n\rightarrow\lambda)$. This means that $M',w_s'\not\mmodels\lambda_1\wedge\ldots\wedge\lambda_n\rightarrow\lambda$, where $w_s'$ is the world reachable from $w_0$ in exactly $s$ steps. This implies that $M',w_s'\mmodels\lambda_1\wedge\ldots\wedge\lambda_n$ and $M',w_s'\not\mmodels\lambda$. Consider a positive literal $\lambda_j$ in the body of the clause. We want to prove that $M,w\mmodels\lambda_j$ for each $w\in w_s'$. If $\lambda_j=p$ or $\lambda_j=\top$, the result is trivial, as the valuation of $w_s'$ is the intersection of the valuation (in $M$) of every world in $w_s'$. If $\lambda_j=\boxx^m p$, then consider the world $w_{s+m}'$ reachable from $w_s'$ in exactly $m$ steps: two cases arise. If such a $w_{s+m}'$ exists, then $M',w_{s+m}'\mmodels p$, and this means that (in $M$)  every world $\overline w$ in $w_{s+m}'$ is such that $M,\overline w\mmodels p$, that is, $M,w\mmodels\boxx^m p$ for every $w\in w_s'$. Otherwise, if $w_{s+m}'$ does not exist, then no $w\in w_s'$ has any $m$-successor in $M$, which, again, implies that $M,w\mmodels\boxx^m p$ for every $w\in w_s'$, as we wanted. Since $M,w_0\mmodels\varphi_i$ by hypothesis, this means that for every $w\in w_s'$ it is the case that $M,w\mmodels \lambda_1\wedge\ldots\wedge\lambda_n\rightarrow\lambda$. We have just proved that $M,w\mmodels \lambda_j$ for every antecedent $\lambda_j$ and every world $w\in w_s'$: this implies that $M,w\mmodels \lambda$ for each $w\in w_s'$. In turn, this implies $M',w_s'\mmodels\lambda$, which is a contradiction.
\end{proof}

\medskip

\noindent In order to generalize the above construction to the languages \THornBox, \FHornBox, and \SFHornBox, we need to  introduce the notions of reflexive, transitive, and reflexive and transitive pre-linear models. We call \emph{pre-linear reflexive model} any model obtained from a pre-linear K-model by replacing the accessibility relation $R$ with its reflexive closure $R^\circlearrowright$; \emph{pre-linear transitive models}, as well as \emph{pre-linear transitive and reflexive models}, are defined likewise by replacing $R$ with $\overrightarrow{R}$ and $R^*$, respectively.

\begin{lem}\label{lem:allprelinear}
Let $\Lmono\in\{\Kmono,\Tmono, \Fmono, \SFmono\}$, and let $\varphi$ be  a \LHornBox-formula. Then $\varphi$ is satisfiable if and only if is satisfiable in a pre-linear model of the corresponding type.
\end{lem}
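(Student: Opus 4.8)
The plan is to mimic the layered construction in the proof of Lemma~\ref{lem:simpleprelinear}, adapting it to each closure. The easy direction is immediate: a pre-linear reflexive (resp.\ transitive, reflexive and transitive) model is by definition a reflexive (resp.\ transitive, preorder) Kripke model, hence an $\Lmono$-model, so satisfiability in such a model entails $\Lmono$-satisfiability. For the converse, fix $\Lmono\in\{\Tmono,\Fmono,\SFmono\}$ and a satisfiable \LHornBox-formula $\varphi$, together with an $\Lmono$-model $M=(\mathcal F,V)$, $\mathcal F=(W,R)$, with $M,w_0\mmodels\varphi$; as in Lemma~\ref{lem:simpleprelinear} we may take $\mathcal F$ to be point-generated from $w_0$, so that $R$ has the $\Lmono$-property (reflexivity, transitivity, or both). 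Since $\Lmono$-satisfaction is just $\Kmono$-satisfaction evaluated on the closed relation $R$, I reuse the construction: set $w_0'=\{w_0\}$ and $w_{k+1}'=\{w\in W\mid\exists w'\in w_k',\ w'\Rx w\}$, take as base frame the path $w_0'\,R'\,w_1'\,R'\,\cdots$, define $V'(w_k')=\bigcap_{w\in w_k'}V(w)$, and let $M'$ be the base path equipped with the appropriate closure ${R'}^{\circlearrowright}$, $\overrightarrow{R'}$, or ${R'}^{*}$. By construction $M'$ is pre-linear of the required type.

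The heart of the argument, exactly as for $\Kmono$, is to prove $M',w_0'\mmodels\varphi$; the only place the closure enters non-trivially is the treatment of the box-literals, for which the key step is the biconditional
$$M',w_s'\mmodels\boxx^m p\quad\Longleftrightarrow\quad M,w\mmodels\boxx^m p\ \text{ for every } w\in w_s',$$
together with its trivial analogues for $p$ and $\top$. Both sides turn out to equal the single statement $p\in V'(w_{s+m}')$, that is, ``$p$ holds at every member of the layer $w_{s+m}'$''. On the right this uses the frame-independent reading of $\boxx^m p$ (true at a world iff $p$ holds at the endpoint of every length-$m$ path, stalling self-loops included in the reflexive cases) together with the fact that, by construction, $w_{s+m}'$ collects exactly the worlds of $M$ reached in $m$ closure-steps from the members of $w_s'$. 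On the left, evaluating $\boxx^m$ under the closed relation reaches a whole union of layers rather than a single one, and I collapse this union back to $w_{s+m}'$ using the monotone nesting that each closure forces on the layers: they are increasing, $w_k'\subseteq w_{k+1}'$, when $R$ is reflexive, and decreasing past the first, $w_{k+1}'\subseteq w_k'$, when $R$ is transitive (for $\SFmono$ the two effects combine and the layers stabilise after one step). Carrying out this identification uniformly, one closure at a time, is the main obstacle.

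With the correspondence in hand I close the proof as in Lemma~\ref{lem:simpleprelinear}. Assuming towards a contradiction that $M',w_0'\not\mmodels\varphi$, some clause $\varphi_i=\boxx^s(\lambda_1\wedge\ldots\wedge\lambda_n\rightarrow\lambda)$ fails; because of the closure the prefix $\boxx^s$ now locates a failure not at a unique depth-$s$ world but at some layer $w_j'$ with $w_j'\subseteq w_s'$, at which the body holds and $\lambda$ fails in $M'$. Applying the correspondence to each body literal shows that every $w\in w_j'$ satisfies the whole body in $M$; since $w_j'\subseteq w_s'$ and $M,w_0\mmodels\varphi_i$ forces the implication to hold at every member of $w_s'$, we conclude $M,w\mmodels\lambda$ for all $w\in w_j'$, and the correspondence in the reverse direction (with the $\lambda=p$ and $\lambda=\bot$ subcases handled exactly as in Lemma~\ref{lem:simpleprelinear}) yields $M',w_j'\mmodels\lambda$, a contradiction. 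Hence $M',w_0'\mmodels\varphi$, completing the converse and the lemma.
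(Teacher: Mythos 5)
Your proposal is correct and follows essentially the same route as the paper: build the layered pre-linear model from the given $\Lmono$-model exactly as in Lemma~\ref{lem:simpleprelinear}, replace the path relation by the appropriate closure, and then verify satisfaction of each clause. The only difference is that the paper dismisses the verification as ``a straightforward adaptation'' of the \KmonoHornBox case, whereas you actually supply it --- the biconditional relating $\boxx^m p$ in the closed pre-linear model to $\boxx^m p$ at every member of the corresponding layer, justified by the monotone nesting of layers under reflexivity and transitivity --- and your argument there (including the observation that the failing layer $w_j'$ satisfies $w_j'\subseteq w_s'$) checks out in all three cases.
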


\begin{proof}
Lemma~\ref{lem:simpleprelinear} proves the claim for $\mathbf L = \Kmono$. We can suitably adapt the construction to prove the other cases. We first build a pre-linear model starting from the model $M$ that exists by hypothesis, using precisely the same technique as in Lemma~\ref{lem:simpleprelinear}. Once we have obtained the pre-linear model $M'$, we define its correspondent pre-linear reflexive, transitive, or reflexive and transitive model $M''$ by replacing the relation $R$ with its reflexive closure $R^\circlearrowright$, if $\mathbf L=\Tmono$, with its transitive closure  $\overrightarrow{R}$, if $\mathbf{L} = \Fmono$, or with its reflexive and transitive closure $R^*$, if  $\mathbf{L} = \SFmono$. The fact that $M'',w_0'\mmodels\varphi$ is then a straightforward adaptation of the case for \KmonoHornBox.
\end{proof}

\medskip

\noindent Lemma~\ref{lem:allprelinear} proves that we can restrict our attention to pre-linear models only. The following theorem restrict further the set of `relevant models' of a formula by providing a \emph{bounded pre-linear model property}: a formula is satisfiable if and only if there exists a pre-linear model of size bounded either by the modal depth of the formula (for \Kmono and \Tmono), or by the length of the formula (for \Fmono and \SFmono).

\begin{thm}\label{th:prelinear}
Let $\Lmono\in\{\Kmono,\Tmono, \Fmono, \SFmono\}$, and let $\varphi$ be a \LHornBox-formula. Then:
\medskip
\begin{compactenum}
	\item if $\mathbf{L} = \Kmono$ (resp. $\mathbf{L} = \Tmono$), then $\varphi$ is satisfiable if and only if it is satisfiable in a pre-linear (resp. pre-linear reflexive) model $M$ such that $|M| \leq md(\varphi)+1$;
	\item if $\mathbf{L} = \Fmono$ (resp. $\mathbf{L} = \SFmono$), then $\varphi$ is satisfiable if and only if it is satisfiable in a pre-linear transitive (resp. transitive and reflexive) model $M$ such that $|M| \leq |\varphi|$.
\end{compactenum}
\end{thm}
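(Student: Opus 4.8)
The plan is to take, for each \Lmono, the pre-linear model of the appropriate type furnished by Lemma~\ref{lem:allprelinear}, and then to \emph{shrink} it to one of bounded size while preserving satisfaction of $\varphi$ at the root $w_0'$. The two items require genuinely different shrinking arguments, because the bounds differ: for $\Kmono$ and $\Tmono$ the accessibility relation is not transitive, so the truth of $\varphi$ at $w_0'$ can only ``see'' finitely deep, whereas for $\Fmono$ and $\SFmono$ a single box already quantifies over the entire tail of the path, and modal depth is no longer a barrier to information flow.

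For $\Lmono\in\{\Kmono,\Tmono\}$ I would simply truncate. Enumerating the pre-linear (resp.\ pre-linear reflexive) model as $w_0',w_1',\dots$ as in Lemma~\ref{lem:simpleprelinear}, I discard every world beyond $w_{md(\varphi)}'$, turning $w_{md(\varphi)}'$ into a terminal world (for $\Kmono$) or a reflexive terminal world (for $\Tmono$). The result is still a pre-linear model of the right type, of size at most $md(\varphi)+1$. To see that it still satisfies $\varphi$ at $w_0'$ I would invoke the standard fact that in a non-transitive frame the truth value of a formula of modal depth $\le n$ at a world depends only on the sub-model induced by the worlds reachable in at most $n$ steps: the truncated pointed model and the original are $md(\varphi)$-bisimilar at $w_0'$, since along a path (resp.\ reflexive path) any inspection that reaches $w_{md(\varphi)}'$ has already used its full depth budget, so the removed successors are never examined. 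As $md(\varphi)$ is exactly the modal depth of $\varphi$, its truth at $w_0'$ is preserved.

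For $\Lmono\in\{\Fmono,\SFmono\}$ depth truncation fails, and I would instead collapse along a monotone invariant. For each world $w_i'$ let $\beta_i$ be the set of box-literals $\boxx\mu\in Cl(\varphi)$ that hold at $w_i'$. Because the frame is transitive, $\boxx\mu$ holds at $w_i'$ exactly when $\mu$ holds at every strictly (resp.\ weakly) later world; since these tails shrink as $i$ grows, $\beta_0\subseteq\beta_1\subseteq\cdots$ is non-decreasing. Hence the path splits into at most $N+1$ maximal blocks of constant box-type, where $N$ is the number of box-subformulas of $\varphi$, and $N+1\le|\varphi|$. I would collapse each block to a single world $u$ whose valuation is the \emph{intersection} $\bigcap_{w'}V(w')$ taken over the worlds of the block, relink the representatives into a path and close under $\overrightarrow{R}$ (resp.\ $R^*$); the final block, which may be an infinite tail or the cycle of a lasso, is folded into one self-looping world. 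This yields a pre-linear transitive (resp.\ transitive and reflexive) model of size at most $|\varphi|$.

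The crux, and the step I expect to be hardest, is proving that this collapse preserves $\varphi$ at $w_0'$. For a box-literal evaluated at an \emph{earlier} world, deleting the interior of a block is harmless precisely because the box-type is constant there, so every literal universally true on the tail stays witnessed; for a box-literal evaluated \emph{at} the new world $u$, its truth is dictated by $u$'s (unchanged) tail, which by monotonicity still realises the block's common value $\beta^*$. That the intersection valuation keeps every clause satisfied at $u$ is exactly the Horn-box closure-under-intersection phenomenon of Lemma~\ref{lem:intersection}: if all body literals hold at $u$ they held throughout the block, hence the head held throughout the block, hence (being either propositional, and thus in the intersection, or a box-literal, and thus in the constant $\beta^*$) it holds at $u$. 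The genuinely delicate point is the terminal self-loop that represents the last constant block: one must check that a single reflexive (for $\SFmono$) or self-looping transitive (for $\Fmono$) world makes the box-literals evaluated there coincide with $\beta^*$, which amounts to solving a small fixed point over the block's literals — reflexivity makes this immediate for $\SFmono$, whereas for $\Fmono$ it requires choosing $V(u)$ so that $\boxx\mu$ and $\mu$ agree at $u$ consistently with $\beta^*$.
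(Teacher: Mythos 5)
Your proposal follows essentially the same route as the paper's proof: for \Kmono and \Tmono, truncation at depth $md(\varphi)$ using the standard fact that satisfaction depends only on worlds reachable within $md(\varphi)$ steps; for \Fmono and \SFmono, collapsing the maximal blocks of constant box-type (the paper's $\boxx$-equivalence classes $\eqclass{w}$, counted via the same inclusion-monotonicity argument to get the bound $|\varphi|$) into single worlds carrying the intersection valuation, with a self-loop on the last block. The only real divergence is the terminal block, and there your caution is warranted: the paper simply adds a self-loop when the last class is not a singleton and keeps the plain intersection valuation, whereas you correctly observe that for \Fmono this intersection need not make the boxes of $\beta^*$ true at the self-looping world (e.g.\ when a letter $p$ with $\boxx p\in\beta^*$ fails only at the block's first world, the intersection loses $p$ even though every $R$-successor in the original model satisfies it), a subtlety that the paper's own verification of the case $\lambda=\boxx^m p$ glosses over and that indeed requires adjusting the valuation at the looping world as you suggest.
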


\begin{proof}
It is well-known that any \Kmono- and \Tmono-formula $\varphi$ that is satisfiable on a world $w_0$ of a model $M=(\mathcal F, V)$ is also satisfiable by the submodel $M'=(\mathcal F', V')$ of $M$ containing only the worlds accessible from $w_0$ in at most $md(\varphi)$ steps~\cite{blackburn2006handbook}.
By Lemma~\ref{lem:allprelinear} we know that when $\varphi$ is a satisfiable \LmonoHornBox-formula, with $\mathbf L\in \{\Kmono, \Tmono\}$, then we can assume $M$ to be a pre-linear $\Lmono$-model. Hence $M'$ is a pre-linear model of size less or equal to $md(\varphi) + 1$.

As far as the second claim is concerned, let us consider the case of $\FHornBox$ first. From Lemma~\ref{lem:allprelinear} we know that there exists a pre-linear \Fmono-model $M=(\mathcal F,V)$, where $\mathcal F=(W,R)$, that satisfies it at $w_0$. Suppose that $|W| > |\varphi|$: we now show how to build a shorter pre-linear \Fmono-model of the desired size. For every world $w \in W$ we define $B(w)$ as the set of $\boxx\xi$ formulas in the closure of $\varphi$ that holds on $w$: $B(w) = \{\boxx\xi \in Cl(\varphi) \mid M, w \mmodels \boxx\xi\}$. By transitivity, we have that for every pair of worlds $(w, w') \in R$, if $M, w \mmodels \boxx\xi$ then $M, w' \mmodels \boxx\xi$, and thus the sets $B(w)$ and $B(w')$ respect the following \emph{inclusion property}: if $(w, w') \in R$ then $B(w) \subseteq B(w')$. Given a world $w \in W$, we define its \emph{$\boxx$-equivalence class} as $\eqclass{w} = \{w' \in W \mid B(w) = B(w')\}$. Observe that the inclusion property guarantees that, for every $w_{k'},w_{k''}\in W$, if $w_{k'},w_{k''}\in \eqclass{w}$ for some $w$ and $k'<k''$, then, for every $k$ such that $k' < k < k''$, we have that $w_k\in\eqclass{w}$. By the inclusion property, we have that for every $k' < k''$, $B(w_{k'}) \subseteq B(w_{k''})$. This implies that, if $\eqclass{w_{k'}} \neq \eqclass{w_{k''}}$ then $|B(w_{k'})| < |B(w_{k''})|$. Since, for every $w \in W$, $|B(w)| \leq |\varphi| - 1$, we can conclude that the number of equivalence classes in $W$ is bounded by $|\varphi|$. Let us define the following elements: \begin{inparaenum} \item a set of worlds is $W' = \{\eqclass{w}\mid w \in W\}$;
\item an accessibility relation ${R'}$ defined as $\{(\eqclass{w},\eqclass{w'}) \mid (w, w') \in R, \eqclass{w}\neq \eqclass{w'}\}$ (notice that being $\Lmono=\Fmono$, and thanks to the fact that the model under analysis is pre-linear, every element of $\eqclass{w}$ is $R$-related to every element of $\eqclass{w'}$);
\item a valuation $V'$ such that $V'(\eqclass{w}) = \bigcap_{w' \in \eqclass{w}} V(w')$.
\end{inparaenum} Since $W'$ is always finite, two cases may arise: either the `last' world $\eqclass{w}$ is a singleton, or it is not; in the latter case, we add the pair $(\eqclass{w},\eqclass{w})$ to $R'$. We now prove that the model $M' = (\mathcal{F}',V')$ where $\mathcal{F}' = (W', R')$ is, indeed, a model for $\varphi$. To this end, suppose by contradiction that this is not the case, and that $M', \eqclass{w_0} \not\mmodels \varphi$. This implies that there exists a clause $\varphi_i = \boxx^s(\lambda_1 \land \ldots \land \lambda_n \rightarrow \lambda)$ such that $M', \eqclass{w_0} \not\mmodels \varphi_i$, namely, that there exists a $\eqclass{w}$ such that $(\eqclass{w_0},\eqclass{w})\in (R')^s$, $M', \eqclass{w} \mmodels \lambda_1 \land \ldots \land \lambda_n$ and $M', \eqclass{w} \not\mmodels \lambda$.
Let $\lambda_j$ be an arbitrary literal in the body of the clause: we first show that, by construction, we have that $M, w' \mmodels \lambda_j$ for every $w' \in \eqclass{w}$. If $\lambda_j = p$ or $\lambda_j = \top$ the proof is trivial, since $V'(\eqclass{w}) \subseteq V(w')$. If $\lambda_j = \boxx^m p$ then we have that $p \in V'(\eqclass{w''})$ for all $\eqclass{w''}$ such that $(\eqclass{w},\eqclass{w''}) \in (R')^m$, or there is no $\eqclass{w''}$ such that $(\eqclass{w},\eqclass{w''}) \in (R')^m$. In the former case, by the definition of $M'$ we have that $p \in V(w''')$ for all $w'''\in\eqclass{w''}$, and thus that $M, w' \mmodels \boxx^m p$. In the latter case, $\eqclass{w}$ is the `last' world of $M'$, and there is no self-loop, which implies that, by construction, $|\eqclass{w}|=1$; consider, now, $w$ in $M$: it has no successor, and, therefore, $M,w'\mmodels\boxx^m p$ for every (in fact, the only one) $w'\in \eqclass{w}$. In all cases, $M, w' \mmodels \lambda_j$. Since this holds for all $\lambda_j$ in the body of the clause, and since $M$ is a model for $\varphi$, we have that $M, w' \mmodels \lambda$. To prove that $M', \eqclass{w} \mmodels \lambda$, three cases arise: either $\lambda = p$, or $\lambda = \top$, or $\lambda = \boxx^m p$. If $\lambda = p$, or $\lambda = \top$ the proof follows by the definition of $V'$. If $\lambda = \boxx^m p$ then for every $w''$ such that $(w',w'') \in (R)^m$ we have that $p \in V(w'')$. But this implies that $p \in V'(\eqclass{w''})$ for every $\eqclass{w''}$ such that $(\eqclass{w},\eqclass{w''}) \in (R')^m$, which, in turn, means that $M',\eqclass{w} \mmodels \boxx^m p$. In all cases a contradiction is found.

Finally, to prove the claim in the case of $\SFHornBox$ we can use the very same argument, where the accessibility relation of the `shorter model' $M'$  is defined as $R' = \{(\eqclass{w},\eqclass{w'}) \mid (w, w') \in R\}$. Since we start from a reflexive and transitive pre-linear model, we have that $R'$ is also reflexive and transitive, and the claim follows.
\end{proof}

When paired with the results in~\cite{Nguyen2000, Nguyen2016}, Theorem~\ref{th:prelinear} allows us to prove \Pt-completeness of \KmonoHornBox, \TmonoHornBox, \FmonoHornBox, and \SFmonoHornBox, as proved by the following theorem\footnote{Thanks to the anonymous referee for pointing this out.}.

\begin{thm}\label{th:pcompleteness}
For every $\Lmono \in \{\Kmono, \Tmono, \Fmono, \SFmono\}$, the satisfiability problem for $\LHornBox$ is \Pt-complete.
\end{thm}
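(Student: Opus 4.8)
The plan is to establish the two matching bounds separately and then combine them.

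For \Pt-hardness, I would reduce from propositional Horn satisfiability, which is \Pt-complete~\cite{cookhorn}. Every propositional Horn formula is, verbatim, a \LHornBox-formula of modal depth $0$, since it contains no modal operators; and on a single-world frame — taken reflexive when $\Lmono\in\{\Tmono,\SFmono\}$ — modal satisfaction of such a formula at the unique world coincides with ordinary propositional satisfaction. Conversely, the propositional projection at any world of any modal model is a propositional model. Hence propositional Horn satisfiability embeds, via the identity map, into the satisfiability problem of each \LHornBox, giving \Pt-hardness uniformly for all four logics.

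For membership in \Pt, the key is to turn the bounded pre-linear model property of Theorem~\ref{th:prelinear} into a deterministic polynomial-time search. By that theorem, a satisfiable \LHornBox-formula $\varphi$ always admits a pre-linear model of the type dictated by $\Lmono$ whose size is at most $md(\varphi)+1$ when $\Lmono\in\{\Kmono,\Tmono\}$ and at most $|\varphi|$ when $\Lmono\in\{\Fmono,\SFmono\}$ — in both cases polynomial in $|\varphi|$. A pre-linear frame is determined by its length and, in the lasso case, by the position of the single back-edge, so there are only polynomially many candidate frames to inspect; it suffices to decide, for each fixed candidate frame $\mathcal F$ rooted at some $w$, whether $\varphi$ has a model based on $\mathcal F$, and to accept iff some candidate succeeds. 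The decision for a fixed $\mathcal F$ rests on a least-model computation in the style of~\cite{Nguyen2000,Nguyen2016}. By Lemma~\ref{lem:intersection}, the set of valuations $V$ with $(\mathcal F,V),w\mmodels\varphi$ is closed under intersection, hence — if non-empty — has a least element. This least valuation is computed by forward chaining: start from the pointwise-minimal valuation and repeatedly fire every clause $\boxx^s(\lambda_1\wedge\ldots\wedge\lambda_n\rightarrow\lambda)$ whose body already holds, propagating the head, which is either a positive letter or a boxed letter $\boxx^m p$ forcing $p$ at all worlds at distance $m$. Since each positive literal is monotone in the valuation and $\mathcal F$ is finite, the fixpoint is reached after polynomially many rounds, each of polynomial cost. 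Finally $\varphi$ is satisfiable on $\mathcal F$ iff no clause with head $\bot$ fires in this least valuation; by minimality, if such a clause fires here it fires in every model over $\mathcal F$, so the test is both sound and complete. Running this polynomial check over polynomially many candidate frames keeps the whole procedure in \Pt.

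The main obstacle is the upper bound rather than the hardness: one must verify carefully that the forward-chaining fixpoint genuinely computes the least model in the presence of boxed literals in both bodies and heads — in particular that boxed heads $\boxx^m p$, which impose obligations on several worlds at once, integrate monotonically into the iteration and never force an enlargement of the fixed, bounded frame. This is exactly where the box restriction is essential: the absence of diamonds means that no existential witnesses are ever demanded, so the frame stays fixed and the intersection closure of Lemma~\ref{lem:intersection} — the semantic counterpart of the least-model property exploited in~\cite{Nguyen2000,Nguyen2016} — applies uniformly across $\{\Kmono,\Tmono,\Fmono,\SFmono\}$. Granting this, \Pt-membership follows, and together with the reduction above it yields \Pt-completeness.
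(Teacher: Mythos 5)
Your proof is correct, but it follows a genuinely different route from the paper's. The paper obtains membership in \Pt\ by reduction to external machinery: a Horn formula is split into a positive program plus negative queries; for $\Lmono\in\{\Tmono,\SFmono\}$ the minimal-model algorithm for positive modal logic programs of~\cite{Nguyen2000} is invoked, while for $\Lmono\in\{\Kmono,\Fmono\}$ (not covered by~\cite{Nguyen2000}, which handles only serial and almost-serial logics) the Horn-CPDL$_{reg}$ algorithm of~\cite{Nguyen2016} is used instead; in both cases Theorem~\ref{th:prelinear} guarantees that the (pseudo-)models these algorithms construct are pre-linear and of polynomial size, whence the polynomial bound. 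Your argument is instead self-contained: you enumerate the polynomially many candidate pre-linear frames permitted by Theorem~\ref{th:prelinear} (a length plus, in the lasso case, the back-edge position, taking the appropriate reflexive/transitive closure for $\Tmono$, $\Fmono$, $\SFmono$), and on each fixed frame you compute the least satisfying valuation by monotone forward chaining, with soundness and completeness following from the monotonicity of positive \boxx-literals (Lemma~\ref{lem:intersection} gives the same least-model property semantically, so invoking it is fine though not strictly necessary); the $\bot$-clause test then decides satisfiability over that frame. This is in substance the paper's own later algorithmic development (Algorithms~\ref{alg:hornboxsat}--\ref{alg:sfsaturate} and Theorem~\ref{th:complexity}), which the paper presents separately precisely because its citation-based proof of Theorem~\ref{th:pcompleteness} yields no direct decision procedure. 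What your route buys is uniformity and implementability: one argument covering all four logics at once, with no asymmetric appeal to two different published results; what the paper's route buys is brevity. The hardness halves coincide (propositional Horn satisfiability, \Pt-complete by~\cite{cookhorn}, embeds verbatim via modal-depth-zero formulas). The one point you should make explicit is that, when firing a boxed head $\boxx^m p$ or evaluating a boxed body literal, the set of worlds at distance $m$ must be computed with respect to the closed relation of the logic at hand (for $\Tmono$, $\Fmono$, $\SFmono$ this is a set of worlds, not a single one); this is routine and affects neither correctness nor the polynomial bound.
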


\begin{proof}
It is known that any Horn formula can be rewritten into a positive logic program (positive clauses) and a set of queries (negative clauses), and thus satisfiability can be reduced to checking the queries w.r.t.\ to the positive logic program.
\cite{Nguyen2000}~gives an algorithm for building the minimal model of a positive logic program of \TmonoHorn and \SFmonoHorn. When formulas are in the Horn-box fragment, by Theorem~\ref{th:prelinear} we have that such a minimal model is pre-linear and of size bounded by either the length or the modal depth of the formula. Thus, checking the queries can be done in polynomial deterministic time. Proving that the satisfiability problem of \KmonoHornBox and \FmonoHornBox is in \Pt\ requires a more involved reduction. The results of~\cite{Nguyen2000} do not apply, since they cover only serial and almost-serial modal logics. The work in~\cite{Nguyen2016} gives an
algorithm for checking the satisfiability of a description logic called Horn-CPDL$_{reg}$.
Such a logic subsumes \KmonoHorn and also \FmonoHorn, since transitivity is expressible in the language. When restricted to the Horn-box fragment, it is possible to show that the pseudo-model built by the algorithm respects Theorem~\ref{th:prelinear}, and thus that the complexity is in \Pt. Finally, for every $\Lmono \in \{\Kmono, \Tmono, \Fmono, \SFmono\}$, \Pt-hardness follows from the fact that propositional Horn is already \Pt-complete.
\end{proof}

The above theorem establishes the complexity class of the satisfiability problem for all the considered fragments by reduction to a more expressive language. This is sufficient to prove \Pt-completeness, but does not give a direct satisfiability-checking procedure.
Hence, in the following we give a modular algorithm that builds a model for the formula without the need of translating it into a different language, and that it is simple and easy to implement.
Algorithms~\ref{alg:hornboxsat}--\ref{alg:sfsaturate} defines such a modular procedure, and check the satisfiability of a set of clauses of \LHornBox, for $\mathbf L\in\{\Kmono, \Tmono, \Fmono, \SFmono\}$, in polynomial time. We first prove its correctness and completeness in the simpler case of \KmonoHornBox, and then we specialise the argument for \FmonoHornBox. We skip the formal proofs for \TmonoHornBox and \SFmonoHornBox since, as discussed above, they follows directly from~\cite{Nguyen2000}.

Our decision procedure iteratively builds a structure $(W, H, L)$ that represents a candidate model for the formula, where $W$ is the set of world of the model, $H$ labels each world in $W$ with the set of formulas `to be further analyzed' and  $L$ labels each world $w$ with the set of formulas in $Cl(\varphi)$ that holds on $w$. We first prove that if \Call{HornBoxSat}{} terminates with success, then the final structure indeed represents a model for $\varphi$.

\begin{algorithm}[t]
\caption{Main algorithm.}\label{alg:hornboxsat}
\begin{algorithmic}[1]
 \Function{HornBoxSat}{$\Lmono,\varphi$}
 \State{$W \gets \{w_0\}$}
 \State{$H(w_0) \gets \{\varphi_1, \ldots, \varphi_l\}$}
 \State{$L(w_0) \gets \{\top\}$}
    \If{$\mathbf L=\Kmono$ {\bf or} $\mathbf L=\Tmono$}
        \State{$N=md(\varphi)$}
    \ElsIf{$\mathbf L=\Fmono$ {\bf or} $\mathbf L=\SFmono$}
        \State{$N=|\varphi|$}
  \EndIf
 \For{$d \gets 0, \ldots, N$}\label{main:for}
  \If {\Call{\Lmono-Saturate}{$W, H, L$}} \Comment{{\em add a world}}\label{main:saturate}
   \State{$W \gets W \cup \{w_{d+1}\}$}
   \State{$H(w_{d+1}) \gets \emptyset$}
   \State{$L(w_{d+1}) \gets \{\top\}$}
  \Else \Comment{{\em inconsistency: try a smaller model}}
   \State \Return{\Call{Shorten}{$\Lmono, W, H, L, d$}}
  \EndIf
 \EndFor\label{main:endfor}
  \State \Return{True}
\EndFunction
\end{algorithmic}
\end{algorithm}

\begin{algorithm}[t!]
\caption{Saturation procedure for \Kmono.}\label{alg:ksaturate}
\begin{algorithmic}[1]
	\Function{\Kmono-Saturate}{$W, H, L$}
		\While{something changes}
			\State{\textbf{let} $w_k \in W$, $\psi \in H(w_k)$} \label{ksat:let}
			\If{$w_{k-1} \in W$ \textbf{and} $\boxx \psi \in Cl(\varphi)$}\label{ksat:ifprec}
					\State{$H(w_{k-1}) \gets H(w_{k-1}) \cup \{\boxx \psi\}$}
			\EndIf \label{ksat:endifprec}
			\If{$\psi = p$} \label{ksat:ifprop}
				\State{$H(w_k) \gets H(w_k) \setminus \{p\}$}
				\State{$L(w_k) \gets L(w_k) \cup \{p\}$} \label{ksat:addlprop} \label{ksat:endifprop}
			\ElsIf{$\psi = \boxx \xi$}\label{ksat:ifbox}
				\If{$w_{k+1} \in W$}\label{ksat:ifsuccbox}
					\State{$H(w_k) \gets H(w_k) \setminus \{\boxx\xi\}$}
                    \State{$L(w_k) \gets L(w_k) \cup \{\boxx\xi\}$}
					\State{$H(w_{k+1}) \gets H(w_{k+1}) \cup \{\xi\}$}
				\EndIf \label{ksat:endifsuccbox}\label{ksat:endifbox}
			\ElsIf{$\psi = \lambda_1 \land \ldots \land \lambda_n \rightarrow \lambda$, $\lambda \neq \bot$}
				\label{ksat:ifposcl}
				\If{$\{\lambda_1, \ldots, \lambda_n\} \subseteq H(w_k) \cup L(w_k)$}
					\State{$H(w_k) \gets \left(H(w_k) \cup \{\lambda\}\right) \setminus \{\psi\}$}
					\State{$L(w_k) \gets L(w_k) \cup \{\psi\}$}
				\EndIf \label{ksat:endifposcl}
			\ElsIf{$\psi = \lambda_1 \land \ldots \land \lambda_n \rightarrow \bot$} \label{ksat:ifnegcl}
				\If{$\{\lambda_1, \ldots, \lambda_n\} \subseteq H(w_k) \cup L(w_k)$}
					\State \Return{False}
				\EndIf				
			\EndIf\label{ksat:endifnegcl}
		\EndWhile
		\State \Return{True}
	\EndFunction
\end{algorithmic}
\end{algorithm}

\begin{algorithm}[t]
\caption{Shortening procedure.}\label{alg:shorten}
\begin{algorithmic}[1]
	\Function{Shorten}{$\Lmono, W, H, L, d$}
		\State{$B \gets \{\boxx \lambda \mid \boxx \lambda\in Cl(\varphi)\}$}
		\For{$k \gets d, d-1, \ldots, 1$}
			\State{$W \gets W \setminus \{w_{k}\}$}
			\State{$H \gets H|_W$, $L \gets L|_W$}
			\State{$H(w_{k-1}) \gets H(w_{k-1}) \cup B$}\label{sh:addboxes}
			\If{\Call{\Lmono-Saturate}{$W, H, L$}}\label{sh:saturate}
				\State \Return{True}
			\EndIf
		\EndFor
		\State\Return{False}
	\EndFunction
\end{algorithmic}
\end{algorithm}

\begin{lem}\label{lem:k-correctness}
Let $\varphi$ be a \KmonoHornBox-formula. If \Call{HornBoxSat}{$\Kmono,\varphi$} returns True, then $\varphi$ is satisfiable.
\end{lem}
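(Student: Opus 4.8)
The plan is to read off a concrete Kripke model from the data structure returned by \Call{HornBoxSat}{$\Kmono,\varphi$} and to verify directly that it satisfies every clause of $\varphi$. Let $(W,H,L)$ be the final structure; by construction it is a finite pre-linear path $w_0 \Rx w_1 \Rx \cdots \Rx w_m$ on which the last invocation of the saturation procedure succeeded. I would define the \Kmono-model $M=(W,R,V)$ with $R=\{(w_k,w_{k+1})\mid 0\le k<m\}$ and $V(w_k)=L(w_k)\cap\mathcal P$, and then show $M,w_0\mmodels\varphi$. The whole argument rests on two claims relating the symbolic labels to truth in $M$: a \emph{soundness} claim, that everything the algorithm commits to genuinely holds, and a \emph{derivability} claim, that every body literal that happens to be true in $M$ was actually collected by the saturation.

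First I would prove the soundness claim: for every positive literal $\lambda$ and every $w_k\in W$, if $\lambda\in L(w_k)$, or if $\lambda$ is a box literal left in $H(w_k)$ at the last world, then $M,w_k\mmodels\lambda$; the proof is by induction on $md(\lambda)$. The base case $\lambda=p$ is immediate from the definition of $V$, and $\lambda=\top$ is trivial. For $\lambda=\boxx\xi$, inspection of the saturation rules shows that a box literal enters $L(w_k)$ only through the box rule, which fires only when the successor $w_{k+1}$ exists and simultaneously places $\xi$ into $H(w_{k+1})$; since $w_{k+1}$ is the unique $R$-successor of $w_k$, it suffices to check $M,w_{k+1}\mmodels\xi$. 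If $\xi$ is propositional, or a box literal with a further successor, it is moved into $L(w_{k+1})$ and the inductive hypothesis applies; if $\xi$ is a box literal stuck at the last world it is vacuously satisfied there. Either way $M,w_{k+1}\mmodels\xi$, hence $M,w_k\mmodels\boxx\xi$; the $H$-at-the-last-world case of the claim is pure vacuity.

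Next I would prove the derivability claim for body literals: if $\lambda_j$ is a body literal of some clause landing at $w_s$ and $M,w_s\mmodels\lambda_j$, then $\lambda_j\in H(w_s)\cup L(w_s)$, again by induction on $md(\lambda_j)$. Propositional literals are immediate since $V(w_s)=L(w_s)\cap\mathcal P$. For $\lambda_j=\boxx\xi$ the key point is that, by the size bound underlying the choice $N=md(\varphi)$ (Theorem~\ref{th:prelinear}), the path is long enough that the relevant successor genuinely exists, so $M,w_s\mmodels\boxx\xi$ forces $M,w_{s+1}\mmodels\xi$ at the real world $w_{s+1}$; the inductive hypothesis puts $\xi$ into $H(w_{s+1})\cup L(w_{s+1})$, and since any such $\xi$ must have passed through $H(w_{s+1})$, the backward-box rule copies $\boxx\xi$ into $H(w_s)$ (using $\boxx\xi\in Cl(\varphi)$). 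With both claims in hand, fix a clause $\varphi_i=\boxx^s(\lambda_1\wedge\cdots\wedge\lambda_n\rightarrow\lambda)$. If $w_s$ does not exist, $\varphi_i$ holds vacuously; otherwise the box prefix has been peeled forward so that the bare clause sits in $H(w_s)$. Assume $M,w_s\mmodels\lambda_1\wedge\cdots\wedge\lambda_n$: the derivability claim makes the whole body present in $H(w_s)\cup L(w_s)$. When $\lambda\neq\bot$ the positive-clause rule must then have fired at the fixpoint, so $\lambda\in H(w_s)\cup L(w_s)$ and the soundness claim yields $M,w_s\mmodels\lambda$; when $\lambda=\bot$ the negative-clause rule would have returned False, contradicting the success of the run, so the body cannot in fact be satisfied. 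In every case $M,w_s\mmodels\lambda_1\wedge\cdots\wedge\lambda_n\rightarrow\lambda$, whence $M,w_0\mmodels\varphi_i$ and finally $M,w_0\mmodels\varphi$.

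The main obstacle is the derivability claim, and specifically the interaction between the backward-box rule and the size bound. Soundness of $L$ is a routine induction, but to rule out a body box literal that is merely \emph{vacuously} true at a dead end one must know that the path produced is long enough for that literal to be witnessed at an actual successor, which is exactly what the bound $|M|\le md(\varphi)+1$ of Theorem~\ref{th:prelinear} supplies; the backward-box rule then guarantees that the witnessed literal is recorded where the clause needs it. The one place requiring extra care is when the run returns through the shortening procedure: there the path has been truncated, and I would use the fact that \Call{Shorten}{$\Kmono,W,H,L,d$} re-injects all closure box literals into the new last world, so that box body literals are treated as present precisely where they become vacuously true, which keeps the argument above sound on the shortened structure as well.
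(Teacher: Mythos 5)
Your proposal is correct and takes essentially the same route as the paper's own proof: you construct the identical pre-linear model from the final structure $(W,H,L)$ (path relation, valuation $V(w_k)=L(w_k)\cap\mathcal P$) and verify the clauses using exactly the same algorithmic facts the paper uses, namely backward propagation of box literals, the re-injection of all closure box literals by the shortening procedure when the path is truncated, firing of the positive-clause rule once the body is recorded, and forward propagation to make the head true. The differences are purely organizational---you argue directly via two explicit inductive claims (soundness of the labels and derivability of true body literals) where the paper argues by contradiction on a failing clause, and you spell out the head-$\bot$ case that the paper leaves implicit---so the substance coincides with the paper's proof.
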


\begin{proof}
Assume that \Call{HornBoxSat}{$\Kmono,\varphi$} returned True, and consider the triple $(W, H, L)$ as it has been built at the end of the execution. As before, we assume w.l.o.g.\ that $\varphi = \varphi_1 \land \ldots \land \varphi_l$ where each $\varphi_i$ is a clause.
We define a model $M$ based on the frame $(W,R)$, where $R = \{(w_k, w_{k+1}) \mid 0 \leq k < |W| - 1\}$, and such that, for every $w\in W$, $V(w)=L(w)|_{\mathcal P}$. We want to prove that $M,w_0\mmodels\varphi$, where $w_0$ is the root of the pre-linear model $M$. Suppose, by way of contradiction, that this is not the case, that is, suppose that $M,w_0\not\mmodels\varphi_i$ for some $\varphi_i$. Assume that $\varphi_i=\boxx^s(\lambda_1\wedge\ldots\wedge\lambda_n\rightarrow\lambda)$, and let $w_0,w_1,\ldots$ be the enumeration of the of worlds in $W$. Clearly, $w_s\in W$, as in the opposite case, the clause would be trivially satisfied. Hence, we have that $M, w_s \mmodels \lambda_1\wedge\ldots\wedge\lambda_n$ but $M, w_s \not\mmodels \lambda$. Since $w_s\in W$, the {\bf for} cycle of the main procedure has run enough times to create $w_s$, and since \Call{\Kmono-Saturate}{} has been executed after that $w_s$ has been created, $(\lambda_1\wedge\ldots\wedge\lambda_n\rightarrow\lambda)$ has been inserted in $H(w_s)$ (lines \ref{ksat:ifbox}-\ref{ksat:endifsuccbox}). We first show that, for every positive literal $\lambda_j$ in the head of the clause, we have $\lambda_j \in H(w_s) \cup L(w_s)$. If $\lambda_j = p$, then, since $M, w_s \mmodels p$ we have that $p \in L(w_s)$ by the definition of $M$. If $\lambda_j = \boxx^m p$, since we know that $M,w_s\mmodels\lambda_j$, it must be that either $M,w_{s+m}\mmodels p$, that is, $p\in L(w_{s+m})$, or  $s+m>|W|-1$. Suppose, first, that $s+m>|W|-1$. This means that the {\bf for} cycle of the main procedure has been interrupted before reaching the value $md(\varphi)$ (which is the limit in the case of $\mathbf L=\Kmono$), and that \Call{Shorten}{} has been executed, returning True. Observe that \Call{Shorten}{} may have eliminated several worlds, but immediately before returning True, it has eliminated the world $w_{|W|}$. At precisely this moment, $\boxx p$, $\boxx\boxx p$,\ldots,$\boxx^m p$ have been inserted into $H(w_{|W|-1})$ (line~\ref{sh:addboxes}). Since $s\le |W|-1<s+m$, we have that $(|W|-1)-s\le m$, which, in turn, implies that the last execution of \Call{\Kmono-Saturate}{} must have inserted $\boxx^m p\in H(w_s)$ (lines~\ref{ksat:ifprec}-\ref{ksat:endifprec}). When $w_{s+m} \in W$, a similar argument proves, again, that $\boxx^m p\in H(w_s)$. Either way, $\lambda_j\in L(w_s)$.
Since every positive literal in the body of the clause $(\lambda_1\wedge\ldots\wedge\lambda_n\rightarrow\lambda)$ belongs to $H(w_s)$, we have that some execution of \Call{\Kmono-Saturate}{} must have put $\lambda\in H(w_s)$ (lines~\ref{ksat:ifposcl}-\ref{ksat:endifposcl}).
If $\lambda=p$, within the same execution of \Call{\Kmono-Saturate}{} we would have that $p\in L(w_s)$ (lines~\ref{ksat:ifprop}-\ref{ksat:addlprop}), and thus that $M, w_s \mmodels p$, which contradicts the hypothesis that $M$ does not satisfy the formula. Now, if $\lambda=\boxx^m p$, then, since $|W|\ge s+m+1$ (otherwise $M,w_s\mmodels\lambda$ trivially, which is in contradiction with our hypothesis), the execution of \Call{\Kmono-Saturate}{}  that takes place at the moment of creation of the point $w_{s+m}$ puts $p\in L(w_{s+m})$, which, again, contradicts our hypothesis. In all cases we have a contradiction and thus we have proved that $M, w_0 \mmodels \varphi$.
\end{proof}

\medskip

\noindent As far as the completeness is concerned, it is convenient to introduce the following definitions. Let $M$ be a pre-linear model built on a frame $(W', R')$. We can assume w.l.o.g.\ that the frame a single path with no loops: if $M$ is a looping model, we can unfold it to obtain a model based on a single (infinite) path.
We say that a structure $(W, H, L)$ \emph{is compatible} with $M$ if, for every $i < \min(|W|,|W'|)$ we have that $\psi \in H(w_i) \cup L(w_i)$ implies that $M, w_i' \mmodels \psi$. Notice that the notion of compatibility compares structures and models of different size: in particular, it may be the case that the structure contains more worlds than $M$, and in this case, nothing is required on the `excess worlds' that do not belong to the model.
To prove completeness of the algorithm, we need to prove some relevant properties of compatible structures and of the \Call{\Kmono-Saturate}{} function.

\begin{lem}\label{lem:k-saturate}
Let $M$ be a pre-linear model, and let $(W, H, L)$ be a structure compatible with $M$. Then the structure obtained after the execution of \Call{\Kmono-Saturate}{$W, H, L$} is compatible with $M$.
\end{lem}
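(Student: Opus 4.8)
The plan is to prove that \emph{compatibility with $M$ is an invariant} maintained by every iteration of the \textbf{while} loop of \Call{\Kmono-Saturate}{}, and then to conclude by induction on the number of iterations (the loop terminates, since each change either moves a formula from some $H(w_k)$ into $L(w_k)$ or inserts into a label a formula drawn from the finite set $Cl(\varphi)$, so ``something changes'' can happen only finitely often). First I would record two observations that make the bookkeeping manageable. (i) \Call{\Kmono-Saturate}{} never alters $W$, so the threshold $\min(|W|,|W'|)$ occurring in the definition of compatibility is fixed throughout the run. (ii) Since $M$ is pre-linear and may be assumed to be a single loop-free path $w_0', w_1', \ldots$, each $w_i'$ has $w_{i+1}'$ as its \emph{unique} successor whenever $i+1 < |W'|$, while the last world $w'_{|W'|-1}$ has no successor at all. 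Observation (ii) is exactly what reduces the semantics of $\boxx$ to a statement about a single successor, or to vacuous truth at the final world.

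Next I would carry out a case analysis on the effect of a single iteration, i.e.\ on the chosen $w_k$ and $\psi\in H(w_k)$, showing that any formula newly placed into $H(w_i)\cup L(w_i)$ for an in-range index $i<\min(|W|,|W'|)$ is satisfied by $M$ at $w_i'$. The cases that merely relocate a formula between $H(w_k)$ and $L(w_k)$ (the propositional case $\psi=p$, and the relocations performed inside the $\boxx$- and clause-cases) leave every union $H(w_i)\cup L(w_i)$ unchanged, hence trivially preserve compatibility. The genuinely new insertions are three. For the up-propagation adding $\boxx\psi$ to $H(w_{k-1})$: from $\psi\in H(w_k)$, compatibility at index $k$, and observation (ii), either $w_{k-1}'$ has unique successor $w_k'$ with $M,w_k'\mmodels\psi$, so that $M,w_{k-1}'\mmodels\boxx\psi$, or $w_{k-1}'$ is the last world of $M$ and $\boxx\psi$ holds vacuously. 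For the $\boxx$-case that pushes $\xi$ into $H(w_{k+1})$: compatibility at $k$ gives $M,w_k'\mmodels\boxx\xi$, and when $k+1$ is in range $w_{k+1}'$ is the unique successor of $w_k'$, so $M,w_{k+1}'\mmodels\xi$. For the positive-clause case that adds the head $\lambda$ to $H(w_k)$: compatibility at $k$ yields both $M,w_k'\mmodels(\lambda_1\wedge\ldots\wedge\lambda_n\rightarrow\lambda)$ and $M,w_k'\mmodels\lambda_j$ for every body literal $\lambda_j$, whence $M,w_k'\mmodels\lambda$.

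Finally, the negative-clause case either leaves the structure untouched or returns \textsc{False}; in both subcases no label grows, so compatibility is preserved for the structure present at the moment of return. Collecting the cases gives the invariant, and the termination remark then yields the claim for the structure obtained at the end of the execution. I expect the main obstacle to be the \emph{boundary bookkeeping} rather than the semantic content of any single case: the structure $(W,H,L)$ may contain strictly more worlds than the path of $M$ (this is explicitly permitted by the definition of compatibility, which requires nothing of the ``excess worlds''), so one must constantly track whether the relevant successor or predecessor index still lies below $|W'|$. The two delicate points are precisely where an index reaches $|W'|$: the up-propagation when $w_{k-1}'$ is the last world of $M$ (handled by vacuous truth of $\boxx$), and the down-push or head-insertion when $k+1$ or $k$ falls outside $\min(|W|,|W'|)$ (handled by the fact that compatibility imposes no constraint there). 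Getting these $\min(|W|,|W'|)$ comparisons exactly right is where the care is needed.
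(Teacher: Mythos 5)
Your proof is correct and follows essentially the same route as the paper's: induction on the iterations of the \textbf{while} loop, with a case analysis on the selected $\psi\in H(w_k)$ that uses compatibility as the inductive hypothesis and pre-linearity of $M$ (unique successor, vacuous truth of \boxx at the last world) to justify each genuinely new insertion, while relocations between $H$ and $L$ are dismissed as leaving $H\cup L$ unchanged. If anything, your explicit bookkeeping of the boundary indices (up-propagation into the last world of $M$, insertions into excess worlds beyond $\min(|W|,|W'|)$) is slightly more careful than the paper's, which disposes of the case where $w_k$ has no counterpart in $M$ with a single ``trivially holds.''
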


\begin{proof}
We prove the claim by showing that after every iteration of the \textbf{while} loop the structure $(W, H, L)$ remains compatible with $M$. From now on, we identify the worlds of the structure $(W, H, L)$ as $w_0, w_1, \ldots$, and the worlds in $M$ as $w_0', w_1', \ldots$.
Let $w_k \in W$ and $\psi \in H(w_k)$ be respectively the world and the formula selected by the \textbf{let} statement of line~\ref{ksat:let}. If the corresponding $w_k'$ is not a world of $M$, then the property trivially holds. Otherwise, we proceed by following the pseudocode of \Call{\Kmono-Saturate}{}.
After selecting the world $w_k$ and the formula $\psi$, the iteration of the \textbf{while} loop proceeds with lines~\ref{ksat:ifprec}--\ref{ksat:endifprec} that inserts $\boxx \psi\in H(w_{k-1})$ if the predecessor $w_{k-1}$ of $w_k$ exists and $\boxx\psi \in Cl(\varphi)$. By inductive hypothesis, since $\psi$ was already in $H(w)\cup L(w)$, it was already true that $M,w_k'\mmodels\psi$.
Since $M$ is a pre-linear model, we also have that $M,w_{k-1}' \mmodels \boxx \psi$ (if $w_{k}' \neq w_0'$), as required by the fact that now $\boxx\psi \in H(w_{k-1})\cup L(w_{k-1})$.
The other steps depends on the structure of $\psi$.
If $\psi=p$, then lines~\ref{ksat:ifprop}--\ref{ksat:endifprop} move $\psi$ from $H(w_k)$ to $L(w_k)$.
By inductive hypothesis, since $p$ was already in $H(w)\cup L(w)$, it was already true that $M,w_k'\mmodels p$.
If $\psi=\boxx\xi$, then two cases arise. If there is no successor of $w_k$, then nothing changes and the property is trivially respected. If there exists a successor $w_{k+1}$ of $w_k$, then lines~\ref{ksat:ifsuccbox}--\ref{ksat:endifsuccbox} move $\boxx\xi$ from $H(w_k)$ to $L(w_k)$, and insert $\xi\in H(w_{k+1})$. By inductive hypothesis, $M,w_k'\mmodels\boxx\xi$, and, if the successor $w_{k+1}'$ exists in $M$, $M,w_{k+1}'\mmodels\xi$, as required (when the successor exists in the structure but not in $M$ nothing is required). If $\psi=\lambda_1\wedge\ldots\wedge\lambda_n\rightarrow \lambda$, then two cases arise. If $\lambda_j\notin H(w_k)\cup L(w_k)$ for some $\lambda_j$, then the structure remains unchanged, and the property trivially follows. If, on the other hand, each $\lambda_j\in H(w_k)\cup L(w_k)$, lines~\ref{ksat:ifposcl}--\ref{ksat:endifposcl} move $\lambda_1\wedge\ldots\wedge\lambda_n\rightarrow \lambda$ from $H(w_k)$ to $L(w_k)$ and insert $\lambda\in H(w_k)$. By inductive hypothesis, $M,w_k'\mmodels (\lambda_1\wedge\ldots\wedge\lambda_n\rightarrow \lambda)$ and $M,w_k'\mmodels \lambda_j$ for each antecedent
$\lambda_j$, which means that $M,w_k'\mmodels \lambda$, as we wanted.
Finally, if $\psi=\lambda_1\wedge\ldots\wedge\lambda_n\rightarrow \bot$, then we know that it is not the case that each antecedent $\lambda_j$ is in $H(w_k)\cup L(w_k)$ (otherwise, by inductive hypothesis, we would have that $M,w_k'\mmodels \bot$, which is a contradiction). So, since there exists at least one $\lambda_j\notin H(w_k)\cup L(w_k)$, the labelling does not change, and the property trivially follows.
\end{proof}

\medskip

\begin{lem}\label{lem:k-completeness}
Let $\varphi$ be a \KmonoHornBox-formula. If $\varphi$ is satisfiable, then \Call{HornBoxSat}{$\Kmono,\varphi$} returns True.
\end{lem}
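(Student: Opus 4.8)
The plan is to fix once and for all a pre-linear witness model and to show that the structure maintained by \Call{HornBoxSat}{} always stays \emph{compatible} with it, so that the algorithm never reports a genuine inconsistency and always terminates with success. First I would invoke Theorem~\ref{th:prelinear} to obtain, since $\varphi$ is satisfiable, a pre-linear \Kmono-model $M$ on a frame $(W',R')$ with $|W'|\le md(\varphi)+1$; unfolding any loop as in the discussion preceding Lemma~\ref{lem:k-saturate}, I would assume $M$ is either a finite simple path $w_0',\ldots,w_{m-1}'$ whose last world has no successor, or an infinite path. The core of the argument is then an induction on the operations performed by \Call{HornBoxSat}{}, showing that the current structure $(W,H,L)$ is at all times compatible with $M$.

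The base case is immediate, as initially $H(w_0)=\{\varphi_1,\ldots,\varphi_l\}$, $L(w_0)=\{\top\}$, and $M,w_0'\mmodels\varphi$. For the inductive step there are three kinds of operations to inspect. Each call of \Call{\Kmono-Saturate}{} preserves compatibility by Lemma~\ref{lem:k-saturate}. Creating a fresh world $w_{d+1}$ with $H(w_{d+1})=\emptyset$ and $L(w_{d+1})=\{\top\}$ preserves it trivially, since compatibility imposes no condition on worlds of index $\ge\min(|W|,|W'|)$ and $\top$ holds everywhere. The only delicate operation is the step of \Call{Shorten}{} that adds $B=\{\boxx\lambda\mid\boxx\lambda\in Cl(\varphi)\}$ to the new last world $w_{k-1}$: I would argue that this preserves compatibility \emph{provided} $w_{k-1}'$ is a successor-free world of $M$, because then every box formula is vacuously satisfied at $w_{k-1}'$. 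This observation is the crux of the whole proof.

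From compatibility I would extract the key consequence that \Call{\Kmono-Saturate}{} can return False only at an \emph{excess} world, one lying beyond $M$. Indeed, a False return requires a clause $\boxx^s(\lambda_1\wedge\ldots\wedge\lambda_n\rightarrow\bot)$ of $\varphi$ whose body literals all belong to $H(w_s)\cup L(w_s)$, where $w_s$ is the world reached from $w_0$ in exactly $s$ steps; if $s<|W'|$, compatibility gives $M,w_s'\mmodels\lambda_j$ for every $j$, while $M,w_0'\mmodels\boxx^s(\ldots\rightarrow\bot)$ forces $M,w_s'\mmodels(\ldots\rightarrow\bot)$, hence $M,w_s'\mmodels\bot$, a contradiction. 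Therefore a False return forces $s\ge|W'|$. In particular, if $M$ is infinite there are no excess worlds, the main \textbf{for} loop never breaks, and \Call{HornBoxSat}{$\Kmono,\varphi$} returns True upon completing the loop.

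It remains to treat the case where $M$ is a finite path of length $m=|W'|$ and \Call{\Kmono-Saturate}{} returns False at some iteration $d$. By the previous paragraph the offending clause sits at an excess world, so $d\ge m\ge 1$, and \Call{Shorten}{$\Kmono,W,H,L,d$} is invoked. Its loop deletes worlds from the tail; deletions and the intermediate additions of $B$ to excess worlds preserve compatibility for indices below $m$ (again via Lemma~\ref{lem:k-saturate}), so when the loop reaches the step $k=m$ the structure is still compatible with $M$. At that step exactly $w_0,\ldots,w_{m-1}$ remain, $w_{m-1}$ becomes the last world, and $B$ is added to $H(w_{m-1})$; since $w_{m-1}'$ is the dead-end of $M$, the structure stays compatible and contains no excess world, so by the key consequence the ensuing saturation cannot return False and \Call{Shorten}{} returns True (possibly already at an earlier step). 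In every case \Call{HornBoxSat}{$\Kmono,\varphi$} returns True. The step I expect to be the main obstacle is precisely the compatibility of the ``add all boxes $B$'' line of \Call{Shorten}{}: one must guarantee that it is exploited only when the current last world genuinely corresponds to a successor-free world of $M$, which is why the tail-first deletion order and the bound $m\le d$ are essential, and why unfolding loops into a (possibly infinite) path is needed so that a well-defined notion of dead-end is available.
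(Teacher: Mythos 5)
Your proposal is correct and follows essentially the same route as the paper's own proof: fix a pre-linear witness model via Theorem~\ref{th:prelinear}, maintain compatibility as an invariant through saturation (Lemma~\ref{lem:k-saturate}), fresh-world creation, and the box-addition step of the shortening procedure (justified by the dead-end world satisfying all boxes vacuously), and conclude that a False return can only occur at excess worlds, so the algorithm succeeds either by completing the main loop or when shortening reaches the size of the witness model. The only cosmetic difference is that you keep the infinite unfolding of a looping model and treat it as a separate case, whereas the paper truncates the unfolding to a finite loop-free path of depth $md(\varphi)$ and works with the biggest such model; this changes nothing substantive.
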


\begin{proof}
Assume that $\varphi=\varphi_1\land\ldots\land\varphi_l$ is satisfiable: we want to prove that \Call{HornBoxSat}{$\Kmono,\varphi$} returns True. Thanks to Theorem~\ref{th:prelinear}, we can then ensure that $\varphi$ has a pre-linear model of size $md(\varphi)+1$ at most. Among all such pre-linear models that satisfy $\varphi$, consider the biggest one, and let us call it $M$ (with root $w_0$). Without losing any generality, we can assume that $M$ has no loops\footnote{If $M$ has a loop, we can unfold it and then take the submodel $M'$ containing only the worlds accessible from $w_0$ in at most $md(\varphi)$ steps: we know from~\cite{blackburn2006handbook} that $M'$ is still a model of the formula.}, and we denote the worlds $w_0,w_1,\ldots$ in $M$ and in the structure $(W,H,L)$ which  is built during the execution of \Call{HornBoxSat}{$\Kmono,\varphi$} in the same way. We now prove that at every execution step the structure $(W,H,L)$ is compatible with $M$. Lines 2-4 of \Call{HornBoxSat}{} initialize $(W,H,L)$ to the one-world structure such that $H(w_0) = \{\varphi_1,\ldots,\varphi_l\}$ and $L(w_0) = \{\top\}$. Since $M,w_0\mmodels \varphi$, the initial structure is trivially compatible with $M$.
Consider, now, the generic iteration of the {\bf for} loop of lines~\ref{main:for}--\ref{main:endfor}. By Lemma~\ref{lem:k-saturate} we have that the call to \Call{\Kmono-Saturate}{$W, H, L$} (line~\ref{main:saturate}) builds a compatible structure. If it returns True, then a new point $w_{d+1}$ is added and labelled with $\top$: the resulting $(W, H, L)$ is trivially compatible.
Assume, now, that $|W|=md(\varphi)+1$: in this case, we have just proved that \Call{\Kmono-Saturate}{} never returned False, and that the {\bf for} loop of the main procedure comes to an end. This implies that \Call{HornBoxSat}{$\Kmono,\varphi$} returns True, and the claim is proved. On the other hand, assume that  $|W|<md(\varphi)+1$. Two cases may arise: either the {\bf for} loop of the main procedure comes to an end, and \Call{HornBoxSat}{$\Kmono,\varphi$} returns True, as required, or at some iteration \Call{\Kmono-Saturate}{} returns False.
Since the structure $(W, H, L)$ is always compatible, this second case can only happen for some value of $d > |W|$, and \Call{Shorten}{} is executed on a structure bigger than $M$. The procedure starts eliminating worlds from the last one. Let us focus on the elimination of the world $w_{|W|}$ (the last ``excess world'' of the structure): line~\ref{sh:addboxes} of \Call{Shorten}{} adds every positive literal of the type $\boxx^m \lambda\in Cl(\varphi)$ ($m\ge 1$) in $H(w_{|W|-1})$. Now, observe that $M,w_{|W|-1}\mmodels\boxx^m\lambda$ because $w_{|W|-1}$ is the last world; therefore, the structure $(W, H, L)$ is still compatible. As a consequence, the execution of \Call{Saturate}{} (line~\ref{sh:saturate}) must return True, \Call{Shorten}{} terminates and returns True and hence \Call{HornBoxSat}{$\Kmono, \varphi$} returns True as well.
\end{proof}

\medskip

\begin{algorithm}[t]
\caption{Saturation procedure for \Fmono.}\label{alg:fsaturate}
\begin{algorithmic}[1]
	\Function{\Fmono-Saturate}{$W, H, L$}
		\While{something changes}
			\State{\textbf{let} $w_k \in W$, $\psi \in H(w_k)$} \label{fsat:let}
			\color{blue}
			\If{$w_k = w_N$  \textbf{and} $\boxx\psi \in Cl(\varphi)$}\label{fsat:ifloop}
					\State{$H(w_k) \gets H(w_k) \cup \{\boxx\psi\}$}
				\EndIf \label{fsat:endifloop}\color{black}		
			\If{$w_{k-1} \in W$ \textbf{and} $\boxx\psi \in Cl(\varphi)$}\label{fsat:ifprec}
				\color{blue}\If{$\boxx\psi \in H(w_{k}) \cup L(w_{k})$}\label{fsat:ifprectrans}
					\State{$H(w_{k-1}) \gets H(w_{k-1}) \cup \{\boxx\psi\}$}
				\EndIf			\color{black}\label{fsat:endifprectrans}
			\EndIf\label{fsat:endifprec}
			\If{$\psi = p$} \label{fsat:ifprop}
				\State{$H(w_k) \gets H(w_k) \setminus \{p\}$}
				\State{$L(w_k) \gets L(w_k) \cup \{p\}$} \label{fsat:addlprop}\label{fsat:endifprop}
			\ElsIf{$\psi = \boxx \xi$}\label{fsat:ifbox}
				\If{$w_{k+1} \in W$}\label{fsat:ifsuccbox}
					\State{$H(w_k) \gets H(w_k) \setminus \{\boxx\xi\}$}
                    \State{$L(w_k) \gets L(w_k) \cup \{\boxx\xi\}$}
					{\color{blue}\State{$H(w_{k+1}) \gets H(w_{k+1}) \cup \{\xi, \boxx\xi\}$}\label{fsat:boxtrans}}
				\EndIf \label{fsat:endifsuccbox}\label{fsat:endifbox}
			\ElsIf{$\psi = \lambda_1 \land \ldots \land \lambda_n \rightarrow \lambda$, $\lambda \neq \bot$}
				\label{fsat:ifposcl}
				\If{$\{\lambda_1, \ldots, \lambda_n\} \subseteq H(w_k) \cup L(w_k)$}
					\State{$H(w_k) \gets \left(H(w_k) \cup \{\lambda\}\right) \setminus \{\psi\}$}
					\State{$L(w_k) \gets L(w_k) \cup \{\psi\}$}
				\EndIf \label{fsat:endifposcl}
			\ElsIf{$\psi = \lambda_1 \land \ldots \land \lambda_n \rightarrow \bot$} \label{fsat:ifnegcl}
				\If{$\{\lambda_1, \ldots, \lambda_n\} \subseteq H(w_k) \cup L(w_k)$}
					\State \Return{False}
				\EndIf				
			\EndIf\label{fsat:endifnegcl}
		\EndWhile
		\State \Return{True}
	\EndFunction
\end{algorithmic}
\end{algorithm}

As we have already mentioned, \Call{HornBoxSat}{} can test the satisfiability of sets of clauses also in the cases of the axiomatic extensions \Tmono, \Fmono, and \SFmono. The main procedure remains the same, while the number of iterations of the \textbf{for} loop, as well as the saturation procedure, change. Algorithms \ref{alg:fsaturate}--\ref{alg:sfsaturate} show the pseudocode of the specialized \Call{\Lmono-Saturate}{} procedures. They differ from \Call{\Kmono-Saturate}{} and from each other by the lines of code that are highlighted, that implement rules for the \boxx-formulas that reflect the different properties of the accessibility relation (reflexive, transitive, reflexive and transitive). Consider for instance the procedure \Call{\Fmono-Saturate}{} described in Algorithm~\ref{alg:fsaturate}. Lines~\ref{fsat:ifloop}--\ref{fsat:endifloop} are added to deal with the self-loop in the last world $w_N$ of the model, that allows the procedure to represent an infinite model. Lines~\ref{fsat:ifprectrans}--\ref{fsat:endifprectrans} propagate \boxx-formulas to the predecessor of the current world respecting transitivity ($\boxx\psi$ holds on $w_{k-1}$ if both $\psi$ and $\boxx\psi$ hold on $w_k$), while line~\ref{fsat:boxtrans} propagates \boxx-formulas forward respecting transitivity (if $\boxx\psi$ holds on the current world, then both $\psi$ and $\boxx\psi$ hold on the successor). In the case of \Call{\Tmono-Saturate}{}, backward and forward propagation of \boxx-formulas are modified to respect reflexivity ($\boxx\psi$ holds on a world $w_k$, iff $\psi$ holds on both $w_k$ and its successor $w_{k+1}$), while \Call{\SFmono-Saturate}{} is obtained by merging the two saturation procedures for \Fmono and \Tmono.

Correctness and completeness of \Call{HornBoxSat}{} for the three axiomatic extensions can be proved by adapting the proofs for the basic language \Kmono. Let us start with correctness.

\begin{algorithm}[t]
\caption{Saturation procedure for \Tmono.}\label{alg:tsaturate}
\begin{algorithmic}[1]
	\Function{\Tmono-Saturate}{$W, H, L$}
		\While{something changes}
			\State{\textbf{let} $w_k \in W$, $\psi \in H(w_k)$} \label{tsat:let}
			\If{$w_{k-1} \in W$ \textbf{and} $\boxx\psi \in Cl(\varphi)$}\label{tsat:ifprec}
					\color{blue}\If{$\psi \in H(w_{k-1}) \cup L(w_{k-1})$}
						\State{$H(w_{k-1}) \gets H(w_{k-1}) \cup \{\boxx\psi\}$}
					\EndIf
			\EndIf\label{tsat:endifprec}\color{black}
			\If{$\psi = p$} \label{tsat:ifprop}
				\State{$H(w_k) \gets H(w_k) \setminus \{p\}$}
				\State{$L(w_k) \gets L(w_k) \cup \{p\}$} \label{tsat:addlprop} \label{tsat:endifprop}
			\ElsIf{$\psi = \boxx \xi$}\label{tsat:ifbox}
				{\color{blue}\State{$H(w_{k}) \gets H(w_{k}) \cup \{\xi\}$}\label{tsat:boxrefl}}
				\If{$w_{k+1} \in W$}\label{tsat:ifsuccbox}
					\State{$H(w_k) \gets H(w_k) \setminus \{\boxx\xi\}$}
                    \State{$L(w_k) \gets L(w_k) \cup \{\boxx\xi\}$}
					\State{$H(w_{k+1}) \gets H(w_{k+1}) \cup \{\xi\}$}
				\EndIf \label{tsat:endifsuccbox}\label{tsat:endifbox}
			\ElsIf{$\psi = \lambda_1 \land \ldots \land \lambda_n \rightarrow \lambda$, $\lambda \neq \bot$}
				\label{tsat:ifposcl}
				\If{$\{\lambda_1, \ldots, \lambda_n\} \subseteq H(w_k) \cup L(w_k)$}
					\State{$H(w_k) \gets \left(H(w_k) \cup \{\lambda\}\right) \setminus \{\psi\}$}
					\State{$L(w_k) \gets L(w_k) \cup \{\psi\}$}
				\EndIf \label{tsat:endifposcl}
			\ElsIf{$\psi = \lambda_1 \land \ldots \land \lambda_n \rightarrow \bot$} \label{tsat:ifnegcl}
				\If{$\{\lambda_1, \ldots, \lambda_n\} \subseteq H(w_k) \cup L(w_k)$}
					\State \Return{False}
				\EndIf				
			\EndIf\label{tsat:endifnegcl}
		\EndWhile
		\State \Return{True}
	\EndFunction
\end{algorithmic}
\end{algorithm}

\begin{algorithm}[t]
\caption{Saturation procedure for \SFmono.}\label{alg:sfsaturate}
\begin{algorithmic}[1]
	\Function{\SFmono-Saturate}{$W, H, L$}
		\While{something changes}
			\State{\textbf{let} $w_k \in W$, $\psi \in H(w_k)$} \label{sfsat:let}
			\If{$w_{k-1} \in W$ \textbf{and} $\boxx\psi \in Cl(\varphi)$}\label{sfsat:ifprec}
				\color{blue}\If{$\boxx\psi \in H(w_{k}) \cup L(w_{k})$}
					\If{$\psi \in H(w_{k-1}) \cup L(w_{k-1})$}
						\State{$H(w_{i-1}) \gets H(w_{i-1}) \cup \{\boxx\psi\}$}
					\EndIf
				\EndIf\color{black}
			\EndIf\label{sfsat:endifprec}
			\If{$\psi = p$} \label{sfsat:ifprop}
				\State{$H(w_k) \gets H(w_k) \setminus \{p\}$}
				\State{$L(w_k) \gets L(w_k) \cup \{p\}$} \label{sfsat:addlprop}
			\ElsIf{$\psi = \boxx \xi$}\label{sfsat:ifbox}
				{\color{blue}\State{$H(w_{k}) \gets H(w_{k}) \cup \{\xi\}$}}
				\If{$w_{k+1} \in W$}
					\State{$H(w_k) \gets H(w_k) \setminus \{\boxx\xi\}$}
					\State{$L(w_k) \gets L(w_k) \cup \{\boxx\xi\}$}
          {\color{blue}\State{$H(w_{k+1}) \gets H(w_{k+1}) \cup \{\xi, \boxx\xi\}$}}
				\EndIf \label{sfsat:endifsuccbox}\label{sfsat:endifbox}
			\ElsIf{$\psi = \lambda_1 \land \ldots \land \lambda_n \rightarrow \lambda$, $\lambda \neq \bot$}
				\label{sfsat:ifposcl}
				\If{$\{\lambda_1, \ldots, \lambda_n\} \subseteq H(w_k) \cup L(w_k)$}
					\State{$H(w_k) \gets \left(H(w_k) \cup \{\lambda\}\right) \setminus \{\psi\}$}
					\State{$L(w_k) \gets L(w_k) \cup \{\psi\}$}
				\EndIf \label{sfsat:endifposcl}
			\ElsIf{$\psi = \lambda_1 \land \ldots \land \lambda_n \rightarrow \bot$} \label{sfsat:ifnegcl}
				\If{$\{\lambda_1, \ldots, \lambda_n\} \subseteq H(w_k) \cup L(w_k)$}
					\State \Return{False}
				\EndIf				
			\EndIf\label{sfsat:endifnegcl}
		\EndWhile
		\State \Return{True}
	\EndFunction
\end{algorithmic}
\end{algorithm}

\begin{lem}\label{lem:l-correctness}
Let $\Lmono \in \{\Tmono, \Fmono, \SFmono\}$ and let $\varphi$ be a \LHornBox-formula. If \Call{HornBoxSat}{$\Lmono,\varphi$} returns True, then $\varphi$ is \LHornBox-satisfiable.
\end{lem}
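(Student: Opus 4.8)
The plan is to follow the structure of the proof of Lemma~\ref{lem:k-correctness} for $\Kmono$, adapting both the construction of the witnessing model and the case analysis on $\boxx$-literals to the closure of the accessibility relation dictated by $\Lmono$. Assuming that \Call{HornBoxSat}{$\Lmono,\varphi$} returns True, I take the final structure $(W,H,L)$ and build a model $M=(\mathcal F,V)$ on the pre-linear path over $W$, where the bare successor relation $R=\{(w_k,w_{k+1})\mid 0\le k<|W|-1\}$ is replaced by its reflexive closure $R^\circlearrowright$ when $\Lmono=\Tmono$, by its transitive closure $\overrightarrow R$ when $\Lmono=\Fmono$, and by its reflexive and transitive closure $R^*$ when $\Lmono=\SFmono$; for $\Fmono$ and $\SFmono$ I also incorporate the self-loop that the saturation may have placed on the last world $w_N$ (lines~\ref{fsat:ifloop}--\ref{fsat:endifloop} of Algorithm~\ref{alg:fsaturate}), which is what allows a finite structure to represent an infinite model. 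As before, I set $V(w)=L(w)|_{\mathcal P}$.

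The correctness argument then runs by contradiction exactly as in the $\Kmono$ case: if $M,w_0\not\mmodels\varphi$, some clause $\varphi_i=\boxx^s(\lambda_1\wedge\ldots\wedge\lambda_n\rightarrow\lambda)$ fails, so at the world $w_s$ on the path the body holds and the head does not, and I must show this is impossible. I first argue that the clause was inserted into $H(w_s)$ once $w_s$ was created, then that every body literal $\lambda_j$ lies in $H(w_s)\cup L(w_s)$; by the positive-clause rule of \Call{\Lmono-Saturate}{} this forces $\lambda$ into $H(w_s)$, and a final step shows $M,w_s\mmodels\lambda$, contradicting the choice of the failing clause.

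The main obstacle is the re-verification of the $\boxx$-literal bookkeeping, since the truth of a literal $\boxx^m p$ at $w_s$ is now measured against the closed relation rather than the bare path, and each closure property must be matched to its corresponding highlighted saturation rule. For $\Tmono$ I use that reflexivity makes $\boxx\xi$ at $w_k$ entail $\xi$ at $w_k$, which is exactly the forward rule of line~\ref{tsat:boxrefl}, together with the guarded backward rule of Algorithm~\ref{alg:tsaturate} whose side condition $\psi\in H(w_{k-1})\cup L(w_{k-1})$ reflects that $w_{k-1}$ now accesses itself. For $\Fmono$ I use that transitivity makes $\boxx\xi$ propagate to all later worlds and behave idempotently; this is captured by line~\ref{fsat:boxtrans}, which carries both $\xi$ and $\boxx\xi$ to the successor, and by the guarded backward rule of Algorithm~\ref{alg:fsaturate} with side condition $\boxx\psi\in H(w_k)\cup L(w_k)$, while the self-loop on $w_N$ is precisely what guarantees that $\boxx^m p$ holds at the last world exactly when the saturation records it. The case of $\SFmono$ combines both effects and is handled by Algorithm~\ref{alg:sfsaturate}. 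In each case the verification is a direct, if tedious, adaptation of the $\Kmono$ argument: because the model is pre-linear and both closures are monotone along the single path, the implications ``$M,w_s\mmodels\lambda_j \Rightarrow \lambda_j\in H(w_s)\cup L(w_s)$'' and ``$\lambda\in L(w_s)\Rightarrow M,w_s\mmodels\lambda$'' follow by induction on the modal depth of the literal, the only genuinely new ingredient being the correspondence between each relational property and its highlighted rule.
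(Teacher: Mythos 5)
Your overall strategy (build the model on the closed relation, set $V(w)=L(w)|_{\mathcal P}$, and re-run the $\Kmono$ contradiction argument using the highlighted saturation rules) is the same as the paper's, but there is a genuine gap at the very first step of the contradiction argument: you claim that if a clause $\varphi_i=\boxx^s(\lambda_1\wedge\ldots\wedge\lambda_n\rightarrow\lambda)$ fails, then it fails \emph{at the world $w_s$} on the path. That is only true for $\Kmono$, where the bare successor relation makes $\boxx^s$ reach exactly the world at distance $s$. Once the relation is replaced by its closure, the prefix $\boxx^s$ no longer points at $w_s$: under the reflexive closure ($\Tmono$) it reaches every $w_t$ with $t\le s$, under the transitive closure ($\Fmono$) every $w_t$ with $t\ge s$, and under the reflexive and transitive closure ($\SFmono$, for $s\ge 1$) every world of the model. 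The failing world may therefore be some $w_t$ with $t\neq s$, and your argument, which derives a contradiction only from a failure at $w_s$, simply does not apply to it; the counterexample world can escape your case analysis entirely. The paper's proof makes exactly this correction explicit: for $\Tmono$ it locates a failing $w_t$ with $t\le s$, and for $\Fmono$ a failing $w_t$ with $t\ge s$, and all the subsequent bookkeeping (why the inner clause was inserted into $H(w_t)$, and how literals $\boxx^m p$ are certified) is carried out relative to that $t$.

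This is not a cosmetic issue, because the rest of the argument depends on the position of the failing world. For instance, the reason the clause body $(\lambda_1\wedge\ldots\wedge\lambda_n\rightarrow\lambda)$ ends up in $H(w_t)$ for $t<s$ in the $\Tmono$ case is the reflexive unfolding rule of line~\ref{tsat:boxrefl} (each $\boxx\xi$ also deposits $\xi$ at the current world), and for $t>s$ in the $\Fmono$ case it is the transitive forward rule of line~\ref{fsat:boxtrans} (each $\boxx\xi$ deposits both $\xi$ and $\boxx\xi$ at the successor); neither mechanism is invoked if one pretends the failure sits at $w_s$. Likewise, in the $\Tmono$ case a body literal $\boxx^m p$ true at $w_t$ forces $p$ at all $w_{t+u}$ with $0\le u\le m$, not just at $w_{t+m}$, and in the $\Fmono$ looping case one must split on whether $t+m$ exceeds the index $N$ of the self-looping last world. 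Your proposal gestures at these rules in the right spirit, so the repair is local --- replace ``at $w_s$'' by ``at some $w_t$ reachable under the closed relation, with $t\le s$, $t\ge s$, or $t$ arbitrary according to $\Lmono$'' and redo the case analysis relative to $t$ --- but as written the proof does not go through.
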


\begin{proof}
To prove the claim we assume that $\varphi=\varphi_1\land\ldots\land\varphi_l$ and we consider the three cases of $\Lmono = \Tmono$, $\Lmono = \Fmono$, and $\Lmono = \SFmono$.

Let us prove, first, the case $\Lmono = \Tmono$.  Assume that \Call{HornBoxSat}{$\Tmono,\varphi$} returned True, and consider the structure $(W, H, L)$ as it has been built at the end of the execution.
We define  a model $M$ based on the frame $(W,R^\circlearrowright)$, where $R^\circlearrowright$ is the reflexive closure of the relation $R = \{(w_k, w_{k+1}),  \mid 0 \leq k < |W| - 1\}$, and such that, for every $w\in W$, $V(w)=L(w)|_{\mathcal P}$. We want to prove that $M,w_0\mmodels\varphi$, where $w_0$ is the root of the pre-linear \Tmono-model $M$. Suppose, by way of contradiction, that this is not the case, that is, suppose that $M,w_0\not\mmodels\varphi_i$ for some $\varphi_i$.
Assume that $\varphi_i=\boxx^s(\lambda_1\wedge\ldots\wedge\lambda_n\rightarrow\lambda)$, and let $w_0,w_1,\ldots$ be the enumeration of the of worlds in $W$.
By the reflexivity of $R^\circlearrowright$ we have that there exists $w_t\in W$, with $t \leq s$, such that $M, w_t \mmodels \lambda_1\wedge\ldots\wedge\lambda_n$ but $M, w_t \not\mmodels \lambda$.
Since $w_t\in W$, the {\bf for} cycle of the main procedure has run enough times to create $w_t$, and since \Call{\Tmono-Saturate}{} has been executed when  $w_t$ has been created, $(\lambda_1\wedge\ldots\wedge\lambda_n\rightarrow\lambda)$ has been inserted in $H(w_t)$ (lines \ref{tsat:ifbox}-\ref{tsat:endifbox}).  We first show that, for every positive literal $\lambda_j$ in the head of the clause, we have $\lambda_j \in H(w_t) \cup L(w_t)$. If $\lambda_j = p$, then, since $M, w_t \mmodels p$ we have that $p \in L(w_t)$ by the definition of $M$. If $\lambda_j = \boxx^m p$, since we know that $M,w_t\mmodels\lambda_j$, it must be that $M,w_{t+u}\mmodels p$ for each $0 \leq u \leq m$, that is, $p\in L(w_{t+u})$.
We distinguish between two cases: either $w_{t+m} \in W$ or $t+m>|W|-1$.
Suppose, first, that $t+m>|W|-1$. This means that the {\bf for} cycle of the main procedure has been interrupted before reaching the value $md(\varphi)$ (which is the limit also in the case of $\mathbf L=\Tmono$), and that \Call{Shorten}{} has been executed, returning True. Observe that \Call{Shorten}{} may have eliminated several worlds, but immediately before returning True, it has eliminated the world $w_{|W|}$. At precisely this moment, $\boxx p$, $\boxx\boxx p$,\ldots,$\boxx^m p$ have been inserted into $H(w_{|W|-1})$ (line~\ref{sh:addboxes}). Since $t\le |W|-1<t+m$, we have that $(|W|-1)-t\le m$, which, in turn, implies that the last execution of \Call{\Tmono-Saturate}{} must have inserted $\boxx^m p\in H(w_t)$ (lines~\ref{tsat:ifprec}-\ref{tsat:endifprec}). When $w_{t+m} \in W$, a similar argument proves, again, that $\boxx^m p\in H(w_t)$. Either way, $\lambda_j\in L(w_t)$. Since every positive literal in the body of the clause $(\lambda_1\wedge\ldots\wedge\lambda_n\rightarrow\lambda)$ belongs to $H(w_t)$, we have that some execution of \Call{\Tmono-Saturate}{} must have put $\lambda\in H(w_t)$ (lines~\ref{tsat:ifposcl}-\ref{tsat:endifposcl}).
If $\lambda=p$, within the same execution of \Call{\Tmono-Saturate}{} we would have that $p\in L(w_t)$ (lines~\ref{tsat:ifprop}-\ref{tsat:addlprop}), and thus that $M, w_t \mmodels p$, which contradicts the hypothesis that $M$ does not satisfy the formula. Now, if $\lambda=\boxx^m p$, then the executions of \Call{\Tmono-Saturate}{} that take place at the moment of creation of every point $w_{t+u}$, for $0 \leq u \leq m$, put $p\in L(w_{t+u})$. By the reflexivity of $R^\circlearrowright$, this implies that $M, w_t \mmodels \boxx^m p$, which, again, contradicts our hypothesis. In all cases we have a contradiction and thus we have proved that $M, w_0 \mmodels \varphi$.

We show  now that the claim holds for $\Lmono = \Fmono$. Assume that \Call{HornBoxSat}{$\Fmono,\varphi$} returned True, and consider the structure $(W, H, L)$ as it has been built at the end of the execution.
We define a pre-linear \Fmono-model $M$ based on the final structure by distinguishing two cases: either the {\bf for} loop of the main procedure has come to an end, or it has been interrupted and \Call{Shorten}{} has been executed, returning True.
In the former case the final $W$ has $|\varphi| + 1$ worlds, and the structure represents a looping model: $M$ is built on the frame $(W, R)$ where $R = \{(w_{k'}, w_{k''}) \mid k' \leq k''$ or $k' = k'' = |\varphi|\}$ (notice that $R$ self-loops on $w_{|\varphi|}$).
In the latter case (\Call{Shorten}{} is executed) we have that $|W| \leq  |\varphi|$, and the structure represents a finite model with no loops: $M$ is built on the frame $(W, R)$ where $R = \{(w_{i}, w_{j}) \mid i \leq j\}$.
In both cases, for every $w\in W$ we define the valuation as $V(w)=L(w)|_{\mathcal P}$. We want to prove that $M,w_0\mmodels\varphi$, where $w_0$ is the root of the pre-linear \Fmono-model $M$. We prove the correctness of the procedure only for the case of a `looping' pre-linear model.
The case of a non-looping model is simpler and can be proved in the very same way.
Suppose, by way of contradiction, that $M,w_0\not\mmodels\varphi_i$ for some clause $\varphi_i$.
Assume that $\varphi_i=\boxx^s(\lambda_1\wedge\ldots\wedge\lambda_n\rightarrow\lambda)$, and let $w_0,w_1,\ldots$ be the enumeration of the of worlds in $W$. Hence, we have that there exists $w_t \in W$, for some $t \geq s$, such that $M, w_t \mmodels \lambda_1\wedge\ldots\wedge\lambda_n$ but $M, w_t \not\mmodels \lambda$. Since $w_t\in W$, the {\bf for} cycle of the main procedure has run enough times to create $w_t$, and since \Call{\Fmono-Saturate}{} has been executed after that $w_t$ has been created, $(\lambda_1\wedge\ldots\wedge\lambda_n\rightarrow\lambda)$ has been inserted in $H(w_t)$ (lines \ref{fsat:ifbox}-\ref{fsat:endifsuccbox}). We first show that, for every positive literal $\lambda_j$ in the head of the clause, we have $\lambda_j \in H(w_t) \cup L(w_t)$. If $\lambda_j = p$, then, since $M, w_t \mmodels p$ we have that $p \in L(w_t)$ by the definition of $M$. If $\lambda_j = \boxx^m p$, since we know that $M,w_t\mmodels\lambda_j$, it must be the case that either $t + m \leq |\varphi|$ and $M,w_{u}\mmodels p$ for each $u \geq t + m$, or  $t+m > |\varphi|$ and $M,w_{|\varphi|} \mmodels p$ (because of the self-loop in the model). Suppose that we are in the latter case, and let $N = |\varphi| < t + m$: since $M,w_{N} \mmodels p$, by the definition of $M$ we have that this implies that $p \in L(w_{N})$. Hence, the last execution of \Call{\Fmono-saturate}{} must have executed lines \ref{fsat:ifloop}--\ref{fsat:endifloop}, adding $\boxx p$ to $H(w_{N})$.
But, from lines~\ref{fsat:ifloop}--\ref{fsat:endifloop}, we have that also $\boxx\boxx p, \ldots, \boxx^m p$ are in $H(w_N)$, while from lines~\ref{fsat:ifprec}--\ref{fsat:endifprec} we can conclude that $\boxx^m p \in H(w_t)$.
When $t + m \leq |\varphi|$, a similar argument allow us to conclude that also in this case $\lambda_j\in H(w_t)$.
Since every positive literal in the body of the clause $(\lambda_1\wedge\ldots\wedge\lambda_n\rightarrow\lambda)$ belongs to $H(w_t)$, we have that some execution of \Call{\Fmono-Saturate}{} must have put $\lambda\in H(w_t)$ (lines~\ref{fsat:ifposcl}-\ref{fsat:endifposcl}).
If $\lambda=p$, within the same execution of \Call{\Fmono-Saturate}{} we would have that $p\in L(w_t)$ (lines~\ref{fsat:ifprop}-\ref{fsat:addlprop}), and thus that $M, w_t \mmodels p$, which contradicts the hypothesis that $M$ does not satisfy the clause. Conversely, if $\lambda=\boxx^m p$, then, the execution of \Call{\Fmono-Saturate}{} that takes place either at the moment of creation of the point $w_{t+m}$ (when $t+m \leq N$), or at the creation of the point $w_{N}$ (when $t+m > N$) puts $p$ respectively in $L(w_{t+m})$ or $L(w_N)$, which, again, contradicts our hypothesis. In all cases we have a contradiction and thus we have proved that $M, w_0 \mmodels \varphi$.

Finally, the case of $\Lmono = \SFmono$ can be proved by combining the proofs for $\Tmono$ and $\Fmono$.
\end{proof}

The definition of compatibility of a pre-linear non-looping \Kmono-model $M$ with respect to a structure $(W, H, L)$ given above applies also to pre-linear reflexive, transitive, and reflexive and transitive models. The following lemma shows that the saturation procedures preserve compatibility also for \Tmono, \Fmono, and \SFmono.

\begin{lem}\label{lem:l-saturate}
Let $\Lmono \in \{\Tmono, \Fmono, \SFmono\}$, let $M$ be a pre-linear \Lmono-model, and let $(W, H, L)$ be a structure compatible with $M$. Then the structure obtained after the execution of \Call{\Lmono-Saturate}{$W, H, L$} is compatible with $M$.
\end{lem}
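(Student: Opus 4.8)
The plan is to prove Lemma~\ref{lem:l-saturate} by the same schema used for Lemma~\ref{lem:k-saturate}: I would show that every single iteration of the \textbf{while} loop of the relevant \Call{\Lmono-Saturate}{} procedure turns a structure compatible with $M$ into one that is still compatible with $M$, and then conclude by induction on the number of iterations. As before, I fix the world $w_k$ and the formula $\psi\in H(w_k)$ picked by the \textbf{let} statement, dispose immediately of the case in which the corresponding $w_k'$ is not a world of $M$ (nothing is required there), and otherwise use the inductive hypothesis $M,w_k'\mmodels\psi$. The branches that \Call{\Tmono-Saturate}{}, \Call{\Fmono-Saturate}{} and \Call{\SFmono-Saturate}{} share verbatim with \Call{\Kmono-Saturate}{} (the propositional-letter move, the head insertion for positive clauses, the false-return for negative clauses, and the plain forward step for boxes) are justified exactly as in Lemma~\ref{lem:k-saturate}, since the immediate-successor edge $(w_k',w_{k+1}')$ survives in $R^\circlearrowright$, $\overrightarrow R$ and $R^*$ alike. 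Hence the genuinely new work sits in the highlighted lines, which I would treat logic by logic.

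For $\Lmono=\Tmono$ the additions are the reflexivity step and a guarded backward step. The reflexivity step (line~\ref{tsat:boxrefl}) inserts $\xi$ in $H(w_k)$ when $\psi=\boxx\xi$; this is sound because $M$ is reflexive, so $w_k'\Rx w_k'$ and $M,w_k'\mmodels\boxx\xi$ give $M,w_k'\mmodels\xi$. The guarded backward step (lines~\ref{tsat:ifprec}--\ref{tsat:endifprec}) adds $\boxx\psi$ to $H(w_{k-1})$ only when $\psi$ already holds on $w_{k-1}$; since the $R^\circlearrowright$-successors of $w_{k-1}'$ are exactly $w_{k-1}'$ and $w_k'$, and $\psi$ holds on both by the guard and by the inductive hypothesis, we get $M,w_{k-1}'\mmodels\boxx\psi$. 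For $\Lmono=\Fmono$ the forward step now also pushes $\boxx\xi$ onto the successor (line~\ref{fsat:boxtrans}): sound by transitivity, because the $\overrightarrow R$-successors of $w_{k+1}'$ form a subset of those of $w_k'$. The guarded backward step (lines~\ref{fsat:ifprectrans}--\ref{fsat:endifprectrans}) fires only when both $\psi$ and $\boxx\psi$ hold on $w_k$; the $\overrightarrow R$-successors of $w_{k-1}'$ are precisely the worlds strictly beyond it, i.e.\ $w_k'$ together with its own successors, and $\psi$ holds on all of them by the inductive hypothesis ($\psi$ at $w_k'$) and by the guard $M,w_k'\mmodels\boxx\psi$. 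The case $\Lmono=\SFmono$ is then the conjunction of the two: the $R^*$-successors of $w_{k-1}'$ split into $w_{k-1}'$, $w_k'$, and everything strictly beyond, and the two nested guards of lines~\ref{sfsat:ifprec}--\ref{sfsat:endifprec} guarantee $\psi$ on each of these three zones.

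The step I expect to be the main obstacle is the self-loop rule for \Fmono\ (lines~\ref{fsat:ifloop}--\ref{fsat:endifloop}), inherited by \SFmono, which, given $\psi\in H(w_N)$ on the last world $w_N$, inserts $\boxx\psi$ there. Against a truly arbitrary compatible $M$ this is not sound, since some $\overrightarrow R$-successor of $w_N'$ in the unfolded infinite path could falsify $\psi$. The resolution I would develop is that the pre-linear transitive model tested for compatibility is the unfolding of a lasso whose loop sits exactly at the last world: unfolding a self-loop yields a tail $w_N',w_{N+1}',\ldots$ of identical copies of $w_N'$, so $M,w_N'\mmodels\psi$ propagates along the whole tail and hence $M,w_N'\mmodels\boxx\psi$, making the rule sound since $\boxx\psi$ collapses to $\psi$ at the loop world. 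Making this precise, and checking that the interaction of the loop rule with the transitive forward and backward steps never inserts a box that outruns the constant tail, is the delicate point; everything else is a routine adaptation of the argument for Lemma~\ref{lem:k-saturate}.
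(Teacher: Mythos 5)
Your proposal is correct and follows essentially the same route as the paper's proof: induction on the iterations of the \textbf{while} loop, with the branches shared with \Kmono\ handled exactly as in Lemma~\ref{lem:k-saturate}, and the highlighted lines justified by reflexivity (for \Tmono), transitivity (for \Fmono), and their combination (for \SFmono) on a pre-linear frame. Your treatment of the \Fmono\ self-loop rule is in fact more explicit than the paper's, which dismisses it in one line (``since $w_N$ is the last world of a looping model'') while tacitly assuming exactly what you spell out --- that the compatible model is either short enough that $w_N$ has no counterpart in $M$, or the unfolding of a lasso whose self-loop sits at the last world, so that the tail consists of copies of the loop world and $\psi$ there entails $\boxx\psi$.
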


\begin{proof}
We first show that the claim holds for $\Lmono = \Tmono$. We proceed as in the proof of Lemma~\ref{lem:k-saturate} to show that after every iteration of the \textbf{while} loop the structure $(W, H, L)$ remains compatible with $M$. From now on, we assume that $M$ has no loops and we identify the worlds of the structure $(W, H, L)$ as $w_0, w_1, \ldots$, and the worlds in $M$ as $w_0', w_1', \ldots$.
Let $w_k \in W$ and $\psi \in H(w_k)$ be respectively the world and the formula selected by the \textbf{let} statement of line~\ref{tsat:let}. If the corresponding $w_i'$ is not a world of $M$, then the property trivially holds. Otherwise, we proceed by following the pseudocode.
Lines~\ref{tsat:ifprec}--\ref{tsat:endifprec} insert $\boxx \psi\in H(w_{k-1})$ if the predecessor $w_{k-1}$ of $w_k$ exists, $\boxx\psi \in Cl(\varphi)$ and $\psi \in H(w_{k-1}) \cup L(w_{k-1})$.
By inductive hypothesis, since $\psi$ was already in $H(w_k)\cup L(w_k)$ and in $H(w_{k-1}) \cup L(w_{k-1})$, it was already true that $M,w_k'\mmodels\psi$ and $M,w_{k-1}'\mmodels\psi$.
Since $M$ is a reflexive pre-linear model, we also have that $M,w_{k-1}' \mmodels \boxx \psi$, as required by the fact that now $\boxx\psi \in H(w_{k-1})\cup L(w_{k-1})$.
The other steps depend on the structure of $\psi$.
If $\psi=p$, then lines~\ref{tsat:ifprop}--\ref{tsat:endifprop}  move $\psi$ from $H(w_k)$ to $L(w_k)$. By inductive hypothesis, since $p$ was already in $H(w_k)\cup L(w_k)$, it was already true that $M,w_k'\mmodels p$. If $\psi=\boxx\xi$, then line~\ref{tsat:boxrefl} inserts $\xi$ in $H(w_k)$. Moreover, if there exists a successor $w_{k+1}$ of $w_k$, then lines \ref{tsat:ifsuccbox}--\ref{tsat:endifsuccbox} move $\boxx\xi$ from $H(w_k)$ to $L(w_k)$, and insert $\xi\in H(w_{k+1})$. By inductive hypothesis, $M,w_k'\mmodels\boxx\xi$: by reflexivity, this means that $M,w_k'\mmodels\xi$ and that, if the successor $w_{k+1}'$ exists in $M$, $M,w_{k+1}'\mmodels\xi$, as required. The proof for the cases when $\psi$ is a clause is similar to Lemma \ref{lem:k-saturate}.

%

Let us prove now that the claim holds for $\Lmono = \Fmono$. We proceed as before to show that after every iteration of the \textbf{while} loop the structure $(W, H, L)$ remains compatible with $M$.
Again, let $w_k \in W$ and $\psi \in H(w_k)$ be respectively the world and the formula selected by the \textbf{let} statement of line~\ref{fsat:let}. If the corresponding $w_k'$ is not a world of $M$, then the property trivially holds.
Otherwise, we proceed by following the pseudocode by considering lines~\ref{fsat:ifloop}--\ref{fsat:endifloop} that inserts $\boxx \psi\in H(w_{N})$ if $\boxx\psi \in Cl(\varphi)$ and $\psi \in H(w_{N}) \cup L(w_{N})$. By inductive hypothesis, since $\psi$ was already in $H(w_N)\cup L(w_N)$, it was already true that $M,w_N'\mmodels\psi$. Since $w_N$ is the last world of a looping model we also have that $M,w_{N}' \mmodels \boxx \psi$, as required by the fact that now $\boxx\psi \in H(w_{N})\cup L(w_{N})$.
Then, lines~\ref{fsat:ifprec}--\ref{fsat:endifprec} insert $\boxx \psi\in H(w_{k-1})$ if the predecessor $w_{k-1}$ of $w_k$ exists, $\boxx\psi \in Cl(\varphi)$ and $\boxx\psi \in H(w_{k}) \cup L(w_{k})$.
By inductive hypothesis, since $\psi$ and $\boxx\xi$ were already in $H(w_k)\cup L(w_k)$, it was already true that $M,w_k'\mmodels\psi$ and $M,w_{k}'\mmodels\boxx\psi$.
Since $M$ is a transitive pre-linear model, we also have that $M,w_{k-1}' \mmodels \boxx \psi$, as required by the fact that now $\boxx\psi \in H(w_{k-1})\cup L(w_{k-1})$.
The other steps depend on the structure of $\psi$.
If $\psi=p$, then lines~\ref{fsat:ifprop}--\ref{fsat:endifprop} move $\psi$ from $H(w_k)$ to $L(w_k)$.  By inductive hypothesis, since $p$ was  in $H(w_k)\cup L(w_k)$, it was already true that $M,w_k'\mmodels p$.
In the case $\psi=\boxx\xi$, if there exists a successor $w_{k+1}$ of $w_k$, then lines \ref{fsat:ifsuccbox}--\ref{fsat:endifsuccbox} move $\boxx\xi$ from $H(w_k)$ to $L(w_k)$, and insert $\xi, \boxx\xi\in H(w_{k+1})$. By inductive hypothesis, $M,w_{k}'\mmodels\boxx\xi$, which imply  that $M,w_{k+1}'\mmodels\xi$ and $M_{k+1}'\mmodels\boxx\xi$, as we wanted. The rest of the cases are treated as in Lemma \ref{lem:k-saturate}.
\end{proof}

\noindent Once we have proved that \Call{\Lmono-Saturate}{} preserves compatibility, proving that \Call{HornBoxSat}{} is complete becomes straightforward.

\begin{lem}\label{lem:l-completeness}
Let $\Lmono \in \{\Tmono, \Fmono, \SFmono\}$ and let $\varphi$ be a \LHornBox-formula. If $\varphi$ is satisfiable then \Call{HornBoxSat}{$\Lmono,\varphi$} returns True.
\end{lem}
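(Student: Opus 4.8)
The plan is to reproduce verbatim the structure of the proof of Lemma~\ref{lem:k-completeness}, separately for each $\Lmono\in\{\Tmono,\Fmono,\SFmono\}$, replacing its two ingredients by their $\Lmono$-counterparts: the bounded pre-linear model property for \Kmono\ is replaced by Theorem~\ref{th:prelinear}, and the compatibility-preservation of \Call{\Kmono-Saturate}{} is replaced by Lemma~\ref{lem:l-saturate}. Concretely, given a satisfiable \LHornBox-formula $\varphi=\varphi_1\wedge\ldots\wedge\varphi_l$, Theorem~\ref{th:prelinear} supplies a satisfying pre-linear model $M$ of the appropriate type (reflexive for \Tmono, transitive for \Fmono, reflexive and transitive for \SFmono) whose size is bounded by $md(\varphi)+1$ for \Tmono\ and by $|\varphi|$ for \Fmono\ and \SFmono; these bounds coincide with the value $N$ used in the \textbf{for} loop of \Call{HornBoxSat}{}. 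I would take the largest such $M$ (with root $w_0$) and, following the definition of compatibility, unfold any loop into a single non-looping (possibly infinite) path. For the transitive logics this unfolding is the key simplification: the repeated tail carries the same valuation and, by transitivity, the same \boxx-obligations, so satisfaction of $\varphi$ is preserved while the resulting model becomes an infinite simple path.

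The core step is the invariant that the structure $(W,H,L)$ maintained by \Call{HornBoxSat}{$\Lmono,\varphi$} stays compatible with $M$ throughout the execution. The base case is immediate: lines 2--4 create the one-world structure with $H(w_0)=\{\varphi_1,\ldots,\varphi_l\}$ and $L(w_0)=\{\top\}$, which is compatible since $M,w_0\mmodels\varphi$. For the inductive step, every call to \Call{\Lmono-Saturate}{} preserves compatibility by Lemma~\ref{lem:l-saturate}, while appending a fresh world labelled only with $\top$ preserves it trivially.

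From the invariant the conclusion follows exactly as in the \Kmono\ case: \Call{\Lmono-Saturate}{} can return False only by firing a clause $\lambda_1\wedge\ldots\wedge\lambda_n\rightarrow\bot$ whose body has been derived at some $w_k$, and compatibility would then force $M,w_k'\mmodels\bot$, which is impossible unless $w_k'$ is an ``excess'' world lying beyond the path of $M$. When $M$ is the infinite unfolding of a looping model there are no excess worlds, so saturation never fails, the \textbf{for} loop runs to completion, and \Call{HornBoxSat}{} returns True without invoking \Call{Shorten}{}. When $M$ is finite, either the loop completes (returning True) or saturation first fails on an excess world, triggering \Call{Shorten}{}; as in Lemma~\ref{lem:k-completeness}, once \Call{Shorten}{} has deleted the excess worlds down to the size of $M$, the \boxx-literals it injects into the new last world (line~\ref{sh:addboxes}) are satisfied there because that world is now the last world of $M$, so the final saturation succeeds and True is returned.

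The main obstacle will be the bookkeeping at the last world, which is genuinely different across the four logics and is the only place where the \Kmono\ argument does not transcribe mechanically: for \Kmono\ the last world has no successor and every \boxx-literal holds vacuously, whereas under reflexivity the last world sees itself and under transitivity the final segment of the model repeats indefinitely. I expect the clean way to dispatch this is precisely the unfolding described above, which for \Fmono\ and \SFmono\ makes compatibility alone close the argument (so that \Call{Shorten}{} need not even be analysed in those cases), together with the extra forward/backward \boxx-propagation rules highlighted in Algorithms~\ref{alg:fsaturate}--\ref{alg:sfsaturate}. Since all of this delicate propagation is already encapsulated in the statement of Lemma~\ref{lem:l-saturate}, once that lemma is available the remaining verification is routine, which is why the completeness proof for the three axiomatic extensions reduces to a careful but essentially mechanical adaptation of Lemma~\ref{lem:k-completeness}.
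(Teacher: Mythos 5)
Your overall decomposition is exactly the paper's: Theorem~\ref{th:prelinear} for the bounded pre-linear model, Lemma~\ref{lem:l-saturate} for preservation of compatibility, and the skeleton of Lemma~\ref{lem:k-completeness} for the rest (the paper's own proof is little more than this sentence). However, the extra detail you supply contains a step that fails, and it is reachable inside your own case analysis. Since your unfolding trick is explicitly reserved for ``the transitive logics'', the whole case $\Lmono=\Tmono$ rests on the \Call{Shorten}{} argument, and there you claim that the literals of $B=\{\boxx\lambda\mid\boxx\lambda\in Cl(\varphi)\}$ injected at line~\ref{sh:addboxes} hold at the new last world ``because that world is now the last world of $M$''. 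That justification is the \Kmono-style vacuity argument: it is correct for \Kmono\ and for non-looping \Fmono-models, whose last world has no successor, but it is false under reflexivity. The last world of a pre-linear reflexive model sees itself, so $\boxx\lambda$ holds there only if $\lambda$ does, whereas $B$ contains \emph{every} box literal of $Cl(\varphi)$, satisfied or not. Injecting $B$ can therefore destroy compatibility, after which \Call{\Tmono-Saturate}{} derives $\lambda$ at that world via its reflexive rule (line~\ref{tsat:boxrefl}) and may legitimately fire a negative clause; nothing then guarantees that the final saturation succeeds, so your argument does not close for \Tmono\ (nor for \SFmono, if any of its finite models is sent through \Call{Shorten}{} rather than through the unfolding).

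The repair for the reflexive logics is to show that \Call{Shorten}{} is never invoked on satisfiable input. A pre-linear reflexive (or reflexive and transitive) model of a \LHornBox-formula remains a model when its last world is duplicated with the same valuation: under reflexivity the box formulas of the last world reduce to local ones, so an easy induction shows the duplicate has the same theory as the original last world and all other theories are unchanged. Hence $M$ may be assumed to have at least $N+1$ worlds, i.e.\ at least as many as the structure contains at any call to \Call{\Lmono-Saturate}{} inside the \textbf{for} loop; then there are never ``excess'' worlds, compatibility rules out the firing of any negative clause, every saturation call returns True, and the loop runs to completion --- this is what the paper's phrase ``builds the longest model'' is hiding. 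Two further cautions: this padding argument, not transitivity, is what makes the repeated tail carry the same obligations, so it covers \Tmono\ as well; and for \Fmono/\SFmono\ the normalized model should be taken with a self-loop at position $N$ rather than unfolded into an infinite path, because line~\ref{fsat:ifloop} of \Call{\Fmono-Saturate}{} infers $\boxx\psi$ at $w_N$ from $\psi$ at $w_N$, which is sound when $w_N'$ sees only itself but not on an infinite path --- so Lemma~\ref{lem:l-saturate}, whose proof assumes $w_N'$ is the looping last world, cannot be applied against the infinite unfolding you propose.
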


\begin{proof}
We can proceed as in the proof of Lemma~\ref{lem:k-completeness}. We assume that $\varphi$ is satisfiable, and
thanks to Theorem~\ref{th:prelinear}, we know that there exist some pre-linear models of size bounded either by $md(\varphi)+1$ (for \Tmono) or by $|\varphi|$ (for \Fmono and \SFmono). Since our procedure starts iterating from a structure $(W, H, L)$ compatible with $M$, and since \Call{\Lmono-Saturate}{} preserves compatibility (Lemma~\ref{lem:l-saturate}), we can prove, as in Lemma~\ref{lem:k-completeness}, that \Call{HornBoxSat}{$\Lmono,\varphi$} builds the longest model $M$ of $\varphi$, and therefore returns True.
\end{proof}

The following theorem summarises the above results and shows that the procedure is correct, complete and polynomial for all the considered languages.

\begin{thm}\label{th:complexity}
Let $\Lmono \in \{\Kmono, \Tmono, \Fmono, \SFmono\}$ and let $\varphi$ be a \LHornBox-formula. Then, the procedure \Call{HornBoxSat}{$\mathbf L,\varphi$} is correct, complete and terminates within $O(poly(|\varphi|))$ time.
\end{thm}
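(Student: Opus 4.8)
The plan is to assemble the theorem from the lemmas already proved, adding only a termination-and-complexity analysis, since soundness (``correctness'') and completeness are by now pure bookkeeping. For correctness I would simply invoke Lemma~\ref{lem:k-correctness} for $\Lmono=\Kmono$ together with Lemma~\ref{lem:l-correctness} for $\Lmono\in\{\Tmono,\Fmono,\SFmono\}$: jointly they state that whenever \Call{HornBoxSat}{$\mathbf L,\varphi$} returns \textbf{True}, the formula $\varphi$ is \LHornBox-satisfiable. For completeness I would dually invoke Lemma~\ref{lem:k-completeness} and Lemma~\ref{lem:l-completeness}, which guarantee that a satisfiable $\varphi$ forces a \textbf{True} answer. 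Since the procedure is deterministic and, as argued below, always halts returning \textbf{False} exactly when it does not return \textbf{True}, the contrapositive of completeness turns the \textbf{False} answers into genuine unsatisfiability certificates; these two directions then give a correct and complete decision procedure for every $\Lmono\in\{\Kmono,\Tmono,\Fmono,\SFmono\}$.

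The only substantial work is therefore the polynomial bound, which I would organize around three observations about the size of the computation state. First, every formula ever inserted into $H(w)$ or $L(w)$---a propositional letter, a subformula of a clause, a box-prefixed formula $\boxx\psi$ with $\boxx\psi\in Cl(\varphi)$, or one of the literals $\boxx^m\lambda\in Cl(\varphi)$ added by \Call{Shorten}{} (line~\ref{sh:addboxes})---lies in $Cl(\varphi)$, whose cardinality is at most $|\varphi|$. Second, the number of worlds created is bounded by $N+1$, where $N=md(\varphi)$ for $\Lmono\in\{\Kmono,\Tmono\}$ and $N=|\varphi|$ for $\Lmono\in\{\Fmono,\SFmono\}$, so in every case $N+1\le|\varphi|+1$. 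Consequently the whole state is described by at most $|\varphi|+1$ worlds, each carrying two subsets of a set of size $\le|\varphi|$.

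Next I would establish termination of each saturation call via monotonicity. Inspecting Algorithms~\ref{alg:ksaturate}, \ref{alg:fsaturate}, \ref{alg:tsaturate} and \ref{alg:sfsaturate}, every rule only adds a formula to some $H(w_k)\cup L(w_k)$ or moves a formula from $H(w_k)$ to $L(w_k)$ while keeping it in the union; hence each set $H(w_k)\cup L(w_k)$ never shrinks. Being a subset of $Cl(\varphi)$, it can grow at most $|\varphi|$ times, so across the $O(|\varphi|)$ worlds the \textbf{while} loop performs at most $O(|\varphi|^2)$ genuine updates, while one pass searching for an applicable rule is itself polynomial; thus a single \Call{\Lmono-Saturate}{} call runs in $O(poly(|\varphi|))$ time. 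The \textbf{for} loop of \Call{HornBoxSat}{} runs at most $N+1=O(|\varphi|)$ times, each time invoking one saturation, and \Call{Shorten}{}, which is entered at most once, executes its loop at most $d\le|\varphi|$ times, again calling saturation at each step. Multiplying these polynomial factors yields the claimed $O(poly(|\varphi|))$ total running time.

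I expect the main obstacle to be the complexity bookkeeping rather than the logic: one must verify that entering \Call{Shorten}{} cannot trigger a cascade that re-does unbounded work, which the same monotonicity argument settles, since the labels carried over to each shortened structure remain subsets of $Cl(\varphi)$ and re-saturation on the smaller structure can again only add polynomially many formulas; and one must check that the informal ``while something changes'' condition is realizable as a polynomial fixpoint computation over the bounded state space of the previous paragraph. Once these two points are made precise, the theorem follows by combining Lemmas~\ref{lem:k-correctness}, \ref{lem:l-correctness}, \ref{lem:k-completeness}, and \ref{lem:l-completeness} with the complexity count.
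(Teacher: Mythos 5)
Your proposal is correct and follows essentially the same route as the paper: correctness and completeness are dispatched by citing Lemmas~\ref{lem:k-correctness}, \ref{lem:l-correctness}, \ref{lem:k-completeness}, and \ref{lem:l-completeness}, and the running time is bounded by counting the $O(|\varphi|)$ worlds and the $O(|\varphi|)$ relevant formulas per world, exactly as in the paper's proof (which concludes with the bound $O(|\varphi|)\cdot 2\cdot\sum_{i=1}^{O(|\varphi|)} i = O(|\varphi|^3)$). Your explicit monotonicity argument---that each $H(w_k)\cup L(w_k)$ only grows within $Cl(\varphi)$, so the \textbf{while} loop makes at most $O(|\varphi|^2)$ genuine updates, and that \textsc{Shorten} is entered at most once---merely spells out details the paper leaves implicit, so no substantive difference or gap remains.
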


\begin{proof}
Correctness and completeness of \Call{HornBoxSat}{} follows from Lemma~\ref{lem:k-correctness} and \ref{lem:k-completeness} when $\Lmono = \Kmono$ and from Lemma~\ref{lem:l-correctness} and \ref{lem:l-completeness} when $\Lmono \in \{\Tmono, \Fmono, \SFmono\}$.

As for the time complexity of the procedure, we observe that the most external cycle runs $O(|\varphi|)$ times in the worst case. The dimension of the candidate models grows from 1 to $O(|\varphi|)$ points, and, then, down to 1 again; in the $H$ component of each point there are, at most, $O(|\varphi|)$ formulas on which \Call{Saturate}{} has effect. Therefore, the total time spent is $O(|\varphi|)\cdot 2\cdot\Sigma_{i=1}^{O(|\varphi|)} i=O(|\varphi|^3)$, which is polynomial in $|\varphi|$.
\end{proof}

We conclude this section by studying the particular case of the fragment $\KmonoCoreBox$. The results obtained so far allow us only to fix the complexity of the satisfiability problem in these cases between \NLOG\ and \Pt, the lower bound being a consequence of the \NLOG-completeness of the satisfiability problem for binary propositional clauses (i.e., the 2SAT problem~\cite
{papa2003}). We want to prove that the satisfiability problem for $\KmonoCoreBox$, the satisfiability problem is, in fact, \NLOG-complete; to this end, we extend the algorithm for 2SAT. The standard algorithm to solve a 2SAT instance exploits the properties of the \emph{implication graph} of the formula. An implication graph is a directed graph in which there is one vertex per propositional variable or negated variable, and an edge connecting one vertex to another whenever the corresponding variables are related by an implication in the formula. A propositional binary formula is unsatisfiable if and only if there exists a path in the implication graph that starts from a variable $p$, leads to its negation $\neg p$ and then goes back to $p$. The \NLOG\ algorithm simply guesses the variable $p$ and search nondeterministically for the chain of implications that corresponds to such a contradictory path. An alternative, indirect way to obtain this result would be to explore the characteristics of SLD-resolution~\cite{DBLP:journals/fuin/Nguyen03} under the Horn box restriction of \Kmono.

\medskip

In the implication graph of a \KmonoCoreBox-formula the vertexes are pairs where the first component is a positive literal $\lambda$ or its negation $\neg\lambda$, and the second component is a natural number representing the depth of the world where $\lambda$ (resp.,$\neg\lambda$) holds, measured as the distance from the root of the model; it is worth to recall that \KmonoCoreBox, as \KmonoHornBox, enjoys the pre-linear model property, so that the notion of distance from the root makes perfect sense. The edges in the graph correspond either to implications in the formula (as in the propositional case), or to {\em jumps} from a world at some depth $d$ to the world at depth $d+1$ or $d-1$. Thus, given $\varphi\in\KmonoHornBox$, where $\varphi_1,\ldots,\varphi_l$, and given a natural number $D\le md(\varphi)$,  we define the \emph{$D$-implication graph of $\varphi$} as a directed graph $G_\varphi^D = (V, E)$ where $V = \{(\lambda, d) \mid \lambda \in Cl(\varphi)$ or $\neg\lambda \in Cl(\varphi)$ and $0 \leq d < D\}$, and:
\begin{inparaenum}
	\item $((\lambda, d),(\lambda', d)),((\neg\lambda', d),(\neg\lambda, d))\in E$ if and only if  $\varphi_i=\boxx^d(\neg\lambda \lor \lambda')$ for some $i$;
	\item $((\boxx\lambda, d),(\lambda, d+1))\in E$ if and only if $d<D-1$;
	\item $((\lambda, d),(\boxx\lambda, d-1))\in E$ if and only if $d>0$;
	\item $((\neg\boxx\lambda, d),(\neg\lambda, d+1))\in E$ if and only if $d<D-1$; and
	\item $((\neg\lambda, d),(\neg\boxx\lambda, d-1))\in E$ if and only if $d>0$.
\end{inparaenum}

Notice that, in this case, we treat \KmonoCoreBox-clauses as disjunctions, instead of implications; in this way, clauses that contain a single positive (resp., negative) literal $\lambda$ (resp., $\neg \lambda$) are dealt with as disjunctions of the type $(\lambda\vee\bot)$ (resp., $(\neg\lambda\vee\bot)$). Moreover, for technical reasons, we add to each formula the clauses $\boxx^s(\bot\rightarrow\top)$ for each $s\le md(\varphi)$. The main property of the $D$-implication graph is that the formula on which built is unsatisfiable on (pre-linear) models of length $D$ if and only if there exists a cycle $\pi$ that starts and ends in some vertex $(\lambda, d)$ and visits the vertex $(\neg\lambda, d)$; in this case, $\pi$ is called {\em contradictory cycle}.

\begin{lem}\label{lem:k_horn_circuit1}
Let $\varphi$ be a formula of \KmonoHornBox. Then, if there exists a contradictory cycle $\pi$ in $G_\varphi^D$, then $\varphi$ is unsatisfiable on a linear model with precisely $D$ distinct worlds.
%
%
\end{lem}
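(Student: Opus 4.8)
The plan is to prove the statement semantically, by contraposition: I will show that any satisfying model forces \emph{every} edge of $G_\varphi^D$ to transport truth from its tail to its head, so that a cycle visiting both $(\lambda,d)$ and $(\neg\lambda,d)$ cannot coexist with a model. Concretely, suppose towards a contradiction that $\varphi$ is satisfiable on a linear model $M$ with exactly $D$ worlds, which I enumerate as $w_0,w_1,\ldots,w_{D-1}$, so that $w_{k+1}$ is the \emph{unique} $R$-successor of $w_k$ for $k<D-1$ and $w_{D-1}$ has no successor. Since $M,w_0\mmodels\varphi$ and each world $w_d$ sits at distance exactly $d$ from the root, every vertex $(\ell,d)$ of $G_\varphi^D$ receives a well-defined truth value, namely whether $M,w_d\mmodels\ell$.

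The heart of the argument is the following \emph{edge-soundness} invariant: for every edge $((\ell,d),(\ell',d'))\in E$, if $M,w_d\mmodels\ell$ then $M,w_{d'}\mmodels\ell'$. I would check this case by case over the five edge types. For (i), the clause $\boxx^d(\neg\lambda\vee\lambda')$ occurs in $\varphi$, so $M,w_0\mmodels\boxx^d(\neg\lambda\vee\lambda')$ gives $M,w_d\mmodels\lambda\rightarrow\lambda'$; this validates the forward edge and, reading the implication contrapositively, also $((\neg\lambda',d),(\neg\lambda,d))$. For (ii), if $M,w_d\mmodels\boxx\lambda$ then its successor $w_{d+1}$ (which exists because $d<D-1$) satisfies $\lambda$. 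For (iii), if $M,w_d\mmodels\lambda$ then, since $w_d$ is the \emph{only} successor of $w_{d-1}$, every successor of $w_{d-1}$ satisfies $\lambda$, hence $M,w_{d-1}\mmodels\boxx\lambda$. Items (iv) and (v) are the dual statements for $\neg\boxx\lambda\equiv\diax\neg\lambda$: if $M,w_d\not\mmodels\boxx\lambda$ then the unique successor $w_{d+1}$ must falsify $\lambda$, giving $M,w_{d+1}\mmodels\neg\lambda$; and if $M,w_d\not\mmodels\lambda$ then $w_{d-1}$ has a successor, namely $w_d$, that falsifies $\lambda$, whence $M,w_{d-1}\not\mmodels\boxx\lambda$. (The tautologies $\boxx^s(\bot\rightarrow\top)$ added to $\varphi$ are satisfied everywhere and contribute nothing here.)

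Granting edge soundness, the conclusion is immediate. Let $\pi$ be a contradictory cycle; it passes through both $(\lambda,d)$ and $(\neg\lambda,d)$, and so decomposes into a path from $(\lambda,d)$ to $(\neg\lambda,d)$ together with a path back from $(\neg\lambda,d)$ to $(\lambda,d)$. If $M,w_d\mmodels\lambda$, then applying edge soundness step by step along the first path yields $M,w_d\mmodels\neg\lambda$, a contradiction. Otherwise $M,w_d\not\mmodels\lambda$, i.e.\ $M,w_d\mmodels\neg\lambda$, and applying edge soundness along the return path yields $M,w_d\mmodels\lambda$, again a contradiction. In either case no $D$-world linear model can satisfy $\varphi$, which is exactly the claim.

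I expect the only delicate point to be the soundness of edges (iii) and (iv) — the backward $\boxx$-edge and the forward $\diax$-edge — as these are precisely the two that exploit the \emph{uniqueness} of the successor in a pre-linear structure, not merely its existence. This is where the restriction to linear models is indispensable: over a branching $\Kmono$-frame, $M,w_d\mmodels\lambda$ would not force $\boxx\lambda$ at the predecessor, and $M,w_d\mmodels\neg\boxx\lambda$ would not pin the falsifying successor to $w_{d+1}$. Everything else is a routine induction along the cycle.
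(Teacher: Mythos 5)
Your proof is correct and takes essentially the same approach as the paper's: both assume a satisfying $D$-world linear model, show that truth propagates along every edge of $G_\varphi^D$ (exploiting uniqueness of successors precisely for the backward-$\boxx$ and forward-$\neg\boxx$ edges, as you flag), and derive a contradiction from the cycle visiting both $(\lambda,d)$ and $(\neg\lambda,d)$. The only difference is organizational: the paper runs a single induction along the cycle with a case split on $d_{i+1}-d_i$ and dismisses the case $M,w_d\mmodels\neg\lambda$ by symmetry, whereas you factor out a per-edge soundness invariant and then walk whichever half of the cycle applies.
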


\begin{proof}
Let $\varphi$ be a formula of \KmonoHornBox, and suppose that there exists a vertex $(\lambda,d)$ and a contradictory cycle $\pi=(\lambda_1,d_1),(\lambda_2,d_2),\ldots,(\lambda_{|\pi|},d_{|\pi|})$ in the implication graph $G_\varphi^D$ such that $(\lambda, d) = (\lambda_1,d_1)$. Suppose by contradiction that $\varphi$ is satisfiable on a pre-linear model $M$ with precisely $D$ distinct worlds such that $M, w_0 \mmodels \varphi$. Two cases may arise: either $M, w_d \mmodels \lambda$ or $M, w_d \mmodels \neg\lambda$. We consider here only the former case (the latter one can be proved in the very same way), and we show that for every vertex $(\lambda_i,d_i)\in\pi$ it is the case that $M,w_{d_i}\mmodels\lambda_i$. We proceed by induction on $i$. When $i = 1$ then $(\lambda_1, d_1) = (\lambda, d)$, and since $M, w_d \mmodels \lambda$ the property trivially holds. Now, suppose that the property is true for the vertex $(\lambda_i,d_i)$, and consider the $i+1$-th vertex $(\lambda_{i+1}, d_{i+1})$; three cases may arise:


\medskip

\begin{compactitem}
\item $d_{i+1} = d_i$: in this case we have that the clause $\boxx^{d_i}(\neg\lambda_i \lor \lambda_{i+1})$ is a conjunct of $\varphi$. Since $M, w_0 \mmodels \boxx^{d_i}(\neg\lambda_i \lor \lambda_{i+1})$ and $w_i$ is at distance $d_i$ form $w_0$, we have that $M, w_{d_i} \mmodels \neg\lambda_i \lor \lambda_{i+1}$. Since by  inductive hypothesis we have that $M, w_{d_i} \mmodels \lambda_i$, then, as we wanted, $M, w_{d_{i+1}} \mmodels \lambda_{i+1}$.

	\item $d_{i+1} = d_i + 1$: in this case we have that either $\lambda_{i} = \boxx\lambda$ or $\lambda_{i} = \neg\boxx\lambda$. In the first case, the inductive hypothesis is that $M, w_{d_i} \mmodels \boxx\lambda$, which is to say that $M,w_{d_i+1}\mmodels\lambda$, that is, $M,w_{d_i+1}\mmodels\lambda_{i+1}$, as we wanted. In the second case the inductive hypothesis is that $M, w_{d_i} \mmodels \neg\boxx\lambda$, which is to say that $M,w_{d_i+1}\mmodels\neg\lambda$, that is, $M,w_{d_i+1}\mmodels\lambda_{i+1}$, as we wanted.

		\item $d_{i+1} = d_i - 1$: in this case we have that either $\lambda_{i+1} = \boxx\lambda$ or $\lambda_{i+1} = \neg\boxx\lambda$. In the first case,
it must be that $\lambda_i=\lambda$; therefore, by inductive hypothesis, $M, w_{d_i} \mmodels \lambda$, which is to say that $M,w_{d_i-1}\mmodels\boxx\lambda$ (since $M$ is a pre-linear model), that is, $M,w_{d_i-1}\mmodels\lambda_{i+1}$, as we wanted. In the second case it must be that $\lambda_i=\neg\lambda$; therefore, by inductive hypothesis, $M, w_{d_i} \mmodels \neg\lambda$, which is to say that $M,w_{d_i-1}\mmodels\neg\boxx\lambda$, that is, $M,w_{d_i-1}\mmodels\lambda_{i+1}$, as we wanted.
\end{compactitem}

\medskip

\noindent Since both $(\lambda, d)$ and $(\neg\lambda, d)$ belong to $\pi$, by the above property we have that $M, w_d \mmodels \lambda$ and $M, w_d \mmodels\neg\lambda$, and a contradiction is found. Hence, $\varphi$ cannot be satisfied in a pre-linear model with precisely $D$ distinct worlds.
\end{proof}

\begin{lem}\label{lem:k_horn_circuit2}
Let $\varphi$ be a formula of \KmonoHornBox. Then, if there is no contradictory cycle $\pi$ in $G_\varphi^D$, then $\varphi$ is satisfiable on a linear model with precisely $D$ distinct worlds.
\end{lem}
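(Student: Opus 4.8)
The plan is to adapt the classical 2SAT satisfiability argument to the graph $G_\varphi^D$. The starting observation is that $G_\varphi^D$ is \emph{skew-symmetric}: for every edge $(\mu,d)\to(\mu',d')$ there is a dual edge $(\neg\mu',d')\to(\neg\mu,d)$ obtained by replacing each literal by its negation while keeping the depth fixed. One verifies this family by family: the two clause edges of item~(i) are mutual duals, the forward box edge of item~(ii) is dual to the backward edge for negated boxes of item~(v), and the forward edge for negated boxes of item~(iv) is dual to the backward box edge of item~(iii). As a consequence the strongly connected components (SCCs) of $G_\varphi^D$ come in complementary pairs and the condensation is itself skew-symmetric. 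The hypothesis that there is no contradictory cycle says precisely that no SCC contains both $(\lambda,d)$ and $(\neg\lambda,d)$; together with the auxiliary clauses $\boxx^s(\bot\to\top)$ this also keeps $(\bot,d)$ and $(\top,d)$ in distinct SCCs, so $\bot$ is never forced.

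Next I would produce a truth value for every vertex. Fixing a topological order of the condensation, I assign $(\lambda,d)$ the value \emph{true} exactly when the component of $(\lambda,d)$ comes later than that of $(\neg\lambda,d)$. The standard 2SAT argument, using skew-symmetry and the fact that complementary literals lie in different SCCs, shows that this assignment respects every edge, i.e.\ no edge leads from a true vertex to a false one. I would then read off the model $M$ with worlds $w_0,\dots,w_{D-1}$, linear accessibility $w_k\,R\,w_{k+1}$, and valuation $V(w_d)=\{p\mid (p,d)\text{ is true}\}$.

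The heart of the proof is the \emph{characterisation lemma}: for every positive literal $\lambda$ and every $d<D$, $M,w_d\mmodels\lambda$ if and only if $(\lambda,d)$ is true. I would prove this by induction on the modal depth of $\lambda$. The base case $\lambda=p$ (and $\lambda=\top$) is immediate from the definition of $V$. For $\lambda=\boxx\mu$ with $d<D-1$, the edges of items~(ii) and~(iii) place $(\boxx\mu,d)$ and $(\mu,d+1)$ in the same SCC, so they receive the same truth value; since $w_{d+1}$ is the unique successor of $w_d$ in the linear model, $M,w_d\mmodels\boxx\mu$ holds iff $M,w_{d+1}\mmodels\mu$, and the inductive hypothesis closes the case. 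With the characterisation in hand, each clause $\boxx^d(\neg\lambda\vee\lambda')$ is verified: it is evaluated at $w_d$ (clauses of box-prefix length $\ge D$ are vacuous), and if $M,w_d\mmodels\lambda$ then $(\lambda,d)$ is true, the clause edge $(\lambda,d)\to(\lambda',d)$ forces $(\lambda',d)$ true, whence $M,w_d\mmodels\lambda'$; a negative single-literal clause forces its literal false through the edge into $(\bot,d)$ and the fact that $(\bot,d)$ is never true. Assembling these gives $M,w_0\mmodels\varphi$.

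The step I expect to be the main obstacle is the boundary case $d=D-1$ of the characterisation lemma, namely the boxes at the last world. There $\boxx\mu$ is \emph{vacuously} true, yet no forward jump edge pins $(\boxx\mu,D-1)$ to a successor vertex, so the bare SCC assignment need not make it true. I would resolve this by forcing every positive box-literal at depth $D-1$ to be true at the outset, reflecting its vacuous truth in a $D$-world model, and then argue that this seeding is consistent: were it to propagate through the clause edges and the $\boxx^s(\bot\to\top)$ edges to the negation of one of these literals or to $(\bot,D-1)$, the skew-symmetric dual path would complete a contradictory cycle, contrary to hypothesis. Making this seeding precise, and checking that it meshes with the edge-respecting assignment of the remaining vertices, is the delicate part of the construction.
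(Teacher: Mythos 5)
Your overall architecture is sound and is, in substance, the paper's own: the edge-by-edge skew-symmetry you verify is exactly the paper's path-duality property (if $(\lambda,d)\rightsquigarrow(\lambda',d')$ then $(\neg\lambda',d')\rightsquigarrow(\neg\lambda,d)$), and your SCC-condensation assignment is the standard equivalent of the paper's reachability-closure valuation; your remark that the jump edges of items (ii) and (iii) place $(\boxx\mu,d)$ and $(\mu,d+1)$ in one SCC when $d<D-1$ is a clean way to get the inductive step of the characterisation. The genuine gap sits exactly where you predicted, but your proposed repair is unsound. Seeding every $(\boxx\mu,D-1)$ true is indeed what the semantics of the last world demands, but the consistency argument you sketch fails: skew-symmetry turns a path out of a seeded vertex into a \emph{disjoint} dual path, and since no edge of $G_\varphi^D$ ever connects a vertex with its own negation, the two paths never concatenate into a cycle. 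Concretely, take the \KmonoCoreBox-formula $\varphi = p \wedge \neg\boxx p$, whose clauses in implicative form are $\top\rightarrow p$ and $\boxx p\rightarrow\bot$, and let $D=1$. Then $G_\varphi^1$ consists of the two chains $(\boxx p,0)\to(\bot,0)\to(\top,0)\to(p,0)$ and $(\neg p,0)\to(\neg\top,0)\to(\neg\bot,0)\to(\neg\boxx p,0)$; it is acyclic, so there is no contradictory cycle, yet your seeding of $(\boxx p,0)$ propagates immediately to $(\bot,0)$, and the contradictory cycle you invoke to rule this out simply does not materialise.

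The deeper point is that this gap cannot be closed by any argument, because the same example refutes the lemma as stated: $p\wedge\neg\boxx p$ has no model with precisely one world (if $w_0$ has no successor then $\boxx p$ holds vacuously; if $w_0$ carries a self-loop then $p$ forces $\boxx p$), although $G_\varphi^1$ contains no contradictory cycle. (The formula is satisfiable with two worlds, so this does not directly refute Theorem~\ref{th:khornbox_nlog}, but it voids the lemma, and the ``main property'' stated just before it, on which that theorem's proof rests.) You should also know that the paper's own proof glosses over precisely this point: run on the same example, its valuation sets $\boxx p$ true (there is no path from $(\boxx p,0)$ to $(\neg\boxx p,0)$) and then, by reachability closure, sets $\bot$ true, and the projected one-world structure satisfies $p$ but not $\neg\boxx p$, so it is not a model. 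Your instinct about the last level was therefore correct, but the right conclusion is that the definition of $G_\varphi^D$ must be amended rather than the proof: adding the edges $((\neg\boxx\lambda,D-1),(\boxx\lambda,D-1))$, which are sound implications at a final world with no successor, makes your seeding a consequence of the ordinary 2SAT assignment, turns any propagation from $(\boxx\mu,D-1)$ to its negation into a genuine contradictory cycle, and leaves Lemma~\ref{lem:k_horn_circuit1} intact.
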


\begin{proof}
Let $\varphi$ be a formula of \KmonoHornBox, and suppose that no contradictory cycle is in $G_\varphi^D$. We show how to build a pre-linear model for $\varphi$ with exactly $D$ distinct worlds. Let $W=\{w_0,\ldots,w_{D-1}\}$, and let $(w_i,w_{i+1})\in R$ for each $i\le D-1$. Now, we have to show how to build a consistent evaluation for each world. As a preliminary step, we show the following property on the $D$-implication graph for $\varphi$: if $(\lambda,d)\rightsquigarrow (\lambda',d')$ (where $\rightsquigarrow$ means that there is a path from $(\lambda,d)$ to $(\lambda',d')$), then $(\neg\lambda',d')\rightsquigarrow (\neg\lambda,d)$. We can prove this by induction on the length of the path. The base case is $|\pi|=2$ (the case in which $|\pi|=1$ is trivial), where three cases may arise:

\medskip

\begin{compactitem}
\item $d = d'$. In this case, we have that the clause $\boxx^{d}(\neg\lambda \lor \lambda')$ is a conjunct of $\varphi$; but the rules for the construction of the implication graph include the edge $(\neg\lambda',d),(\neg\lambda,d)$ as well, and, thus, we have the claim.

\item $d' = d + 1$. In this case we have that either $\lambda = \boxx\lambda'$, or $\lambda=\neg\boxx\overline\lambda$ and $\lambda'=\neg\overline \lambda$. In the first case, the rules for the construction of the implication graph include the edge $(\neg\lambda',d+1),(\neg\boxx\lambda',d)$, implying the claim; in the second case the rules for the construction of the implication graph include the edge $(\overline\lambda,d+1),(\boxx\overline\lambda,d)$, implying, again, the claim.

		\item $d'=d-1$. This case is completely symmetric to the above one.
\end{compactitem}

\medskip

\noindent The inductive case can be proved by using precisely the same rules; the property holds up to a certain length $i$, and, then, a single application of the above cases leads to prove that it holds for the length $i+1$. Now, we can build a function $V^*:W\rightarrow 2^{Cl(\varphi)}$, as follows. Starting from the world $w_0$, we choose $\lambda\in Cl(\varphi)$, and we set $\lambda\in V(w_0)$ if and only if there is no path from $(\lambda,0)$ to $(\neg\lambda,0)$ in $G_\varphi^D$. Then, for each $(\lambda',d)$ reachable from $(\lambda,0)$ in $G_\varphi^D$, we set $\lambda'\in V(w_d)$. We now repeat the same process for each $\lambda'\in Cl(\varphi)$ for which a decision for $w_0$ has not been taken yet. Finally, we repeat all three steps for each $w_d\in W$. At this point, we can synthesize an evaluation $V:W\rightarrow 2^{\mathcal P}$ from $V^*$ by simply projecting $V^*$ over the propositional letters. Since the above property implies that such a labeling is contradiction-free, it is easy to see that $M,w_0\mmodels\varphi$.
\end{proof}

\medskip

The above lemmas can be then immediately used to devise a non-deterministic algorithm for $\KmonoCoreBox$: we simply guess the length of the candidate model $D\le md(\varphi)$, and we check that no contradictory cycle exists, which proves that $\varphi$ has a pre-linear model of dimension $D$. Then, the following result holds.

\begin{thm}\label{th:khornbox_nlog}
The satisfiability problem for $\KmonoCoreBox$ is \NLOG-complete.
\end{thm}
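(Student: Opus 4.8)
The plan is to establish the two halves of \NLOG-completeness separately, building on the circuit characterization of Lemmas~\ref{lem:k_horn_circuit1} and~\ref{lem:k_horn_circuit2} together with the bounded pre-linear model property of Theorem~\ref{th:prelinear}. First I would record the decision problem in the right logical form: by Theorem~\ref{th:prelinear} a \KmonoCoreBox-formula $\varphi$ is satisfiable if and only if it has a pre-linear model of size at most $md(\varphi)+1$, and by the two circuit lemmas a model of precisely $D$ worlds exists if and only if the $D$-implication graph $G_\varphi^D$ contains no contradictory cycle. Hence $\varphi$ is satisfiable if and only if there is some $D$ with $1\le D\le md(\varphi)+1$ for which $G_\varphi^D$ has no contradictory cycle.

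For membership in \NLOG I would turn this characterization into a space-bounded procedure. The algorithm nondeterministically guesses $D$; since $D\le md(\varphi)+1\le|\varphi|+1$, this costs only $O(\log|\varphi|)$ bits. It then verifies that $G_\varphi^D$ has no contradictory cycle. The crucial observation is that $G_\varphi^D$ never has to be stored: each of its vertices is a pair consisting of a (possibly negated) literal from $Cl(\varphi)$ and a depth $d<D$, hence is describable in $O(\log|\varphi|)$ space, and whether an edge is present can be decided on the fly directly from the clauses of $\varphi$ and the depth counter. Consequently, deciding \emph{whether} a contradictory cycle exists---that is, guessing a vertex $(\lambda,d)$ and checking by reachability that $(\neg\lambda,d)$ is reachable from $(\lambda,d)$ and conversely---is an \NLOG problem. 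Since satisfiability requires the \emph{absence} of such a cycle, I would invoke the Immerman--Szelepcs\'enyi theorem, which gives $\NLOG=\mathrm{co}\text{-}\NLOG$, to conclude that the check ``$G_\varphi^D$ has no contradictory cycle'' is itself in \NLOG; prefixing it with the logarithmic existential guess of $D$ keeps the whole procedure in \NLOG.

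For \NLOG-hardness I would reduce from the satisfiability problem for propositional core (binary Horn) clauses, which is \NLOG-hard as the strengthening of the classical \NLOG-hardness of $2$SAT already noted in the introduction~\cite{papa2003}. Any propositional core formula is literally a \KmonoCoreBox-formula of modal depth $0$, and for modal-depth-$0$ formulas truth at the root coincides with ordinary propositional satisfaction; thus the identity map is a logarithmic-space reduction from propositional core satisfiability to \KmonoCoreBox-satisfiability, yielding the lower bound.

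The step I expect to be the main obstacle is the upper bound's reliance on closure of \NLOG under complementation: the condition that makes $\varphi$ satisfiable is the \emph{nonexistence} of a contradictory cycle, which is naturally universal, so one must argue carefully that cycle detection is a genuine \NLOG reachability question over a graph that is only implicitly presented, and then appeal to Immerman--Szelepcs\'enyi to flip it. Getting the implicit, log-space navigation of $G_\varphi^D$ right---vertices as literal/depth pairs and edges read off locally from $\varphi$---is what makes both the reachability test and its complement fit within logarithmic space.
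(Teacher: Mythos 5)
Your proposal is correct and follows essentially the same route as the paper: membership by guessing the depth bound $D$ and checking on-the-fly, over the implicitly represented $D$-implication graph, that no contradictory cycle exists (the paper's appeal to ``non-existence of a cycle is in \NLOG'' is exactly your Immerman--Szelepcs\'enyi step, which you merely make explicit), and hardness from the propositional core fragment, whose \NLOG-hardness both you and the paper extract from the 2SAT reduction of~\cite{papa2003}. The only cosmetic difference is that the paper re-inspects that reduction (noting it uses only clauses of the form $\neg p \lor q$, $p$, $\neg p$) where you cite the fact and add the identity embedding of depth-$0$ formulas; the substance is identical.
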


\begin{proof}
The fact that the satisfiability problem for \LCoreBox\ is in \NLOG\ is a consequence of the above argument and the fact that checking the non-existence of a cycle in a direct graph is \NLOG. To avoid the explicit construction of the entire graph, that would require a polynomial amount of space, the \NLOG\ algorithm simulates the construction and checks for the non-existence of contradictory cycles on-the-fly. In~\cite{papa2003} \NLOG-hardness of 2SAT is proved by reducing the unreachability problem on acyclic graphs to 2SAT. Since the formula used for the reduction includes only clauses of the form $(\neg p \lor q)$, $p$, and $\neg p$, it is indeed a formula of the core fragment of propositional logic. Hence, \NLOG-hardness of \LCoreBox\ immediately follows.
\end{proof}

\medskip

The complexity results of this of this section can be seen in Fig.~\ref{fig:complexity}.

\begin{figure}[t!]

\begin{tikzpicture}[>=latex,line join=bevel,scale=0.62,thick]
  \node (Core) at (336.0bp,106.0bp) [draw,draw=none] {$\LCore$};
  \node (Bool) at (335.0bp,250.0bp) [draw,draw=none] {$\LBool$};
  \node (CoreBox) at (285.0bp,34.0bp) [draw,draw=none] {$\LCoreBox$};
  \node (Krom) at (177.0bp,178.0bp) [draw,draw=none] {$\LKrom$};
  \node (HornBox) at (500.0bp,106.0bp) [draw,draw=none] {$\LHornBox$};
  \node (Horn) at (494.0bp,178.0bp) [draw,draw=none] {$\LHorn$};
  \node (KromDia) at (172.0bp,106.0bp) [draw,draw=none] {$\LKromDia$};
  \node (HornDia) at (606.0bp,106.0bp) [draw,draw=none] {$\LHornDia$};
  \node (KromBox) at (63.0bp,106.0bp) [draw,draw=none] {$\LKromBox$};
  \node (CoreDia) at (389.0bp,34.0bp) [draw,draw=none] {$\LCoreDia$};
  \draw [->,solid] (Krom) ..controls (175.21bp,151.98bp) and (174.55bp,142.71bp)  .. (KromDia);
  \draw [->,solid] (Horn) ..controls (496.14bp,151.98bp) and (496.94bp,142.71bp)  .. (HornBox);
  \draw [->,solid] (Horn) ..controls (535.91bp,150.81bp) and (553.92bp,139.55bp)  .. (HornDia);
  \draw [->,solid] (KromDia) ..controls (215.2bp,85.116bp) and (222.26bp,82.327bp)  .. (229.0bp,80.0bp) .. controls (273.42bp,64.667bp) and (288.24bp,65.91bp)  .. (CoreDia);
  \draw [->,solid] (HornDia) ..controls (566.65bp,85.093bp) and (560.19bp,82.312bp)  .. (554.0bp,80.0bp) .. controls (517.07bp,66.216bp) and (474.06bp,54.626bp)  .. (CoreDia);
  \draw [->,solid] (Bool) ..controls (279.09bp,224.23bp) and (246.9bp,209.97bp)  .. (Krom);
  \draw [->,solid] (KromBox) ..controls (104.67bp,85.136bp) and (111.49bp,82.341bp)  .. (118.0bp,80.0bp) .. controls (155.13bp,66.659bp) and (198.15bp,55.157bp)  .. (CoreBox);
\draw [->,solid] (Krom) ..controls (234.17bp,151.83bp) and (263.35bp,138.99bp)  .. (Core);
  \draw [->,solid] (Horn) ..controls (438.57bp,152.44bp) and (408.8bp,139.25bp)  .. (Core);
  \draw [->,solid] (Core) ..controls (317.4bp,79.474bp) and (310.12bp,69.483bp)  .. (CoreBox);
 \draw [->,solid] (Bool) ..controls (391.5bp,224.12bp) and (425.12bp,209.32bp)  .. (Horn);
  \draw [->,solid] (Krom) ..controls (134.21bp,150.72bp) and (115.67bp,139.34bp)  .. (KromBox);
  \draw [->,solid] (Core) ..controls (355.39bp,79.389bp) and (363.05bp,69.277bp)  .. (CoreDia);
  \draw [->,solid] (HornBox) ..controls (456.57bp,85.077bp) and (449.63bp,82.322bp)  .. (443.0bp,80.0bp) .. controls (400.18bp,64.998bp) and (386.05bp,65.476bp)  .. (CoreBox);
%
%
\node[rotate=2.5] at (600bp,207bp) {\footnotesize{$\in\Pt$}~\cite{nguyen2004complexity}, $\mathbf L=\SFive$};
\node[rotate=2.5] at (600bp,155bp)  {\begin{tabular}{l}\footnotesize{$\PSpace$-complete}~\cite{ChenLin94},\\ \quad\quad\footnotesize{$\mathbf L=\Kmono,\Tmono,\Fmono,\SFmono$}\end{tabular}};
\node[rotate=2.5] at (570bp,245bp) {\footnotesize{$\NP$-complete}~\cite{nguyen2004complexity}, $\mathbf L=\SFive$};
\node[rotate=6] at (120bp,20bp) {\begin{tabular}{l}\footnotesize{$\NLOG$-hard, $\in \Pt$},\\
                                                   \footnotesize{$\mathbf L=\Tmono,\Fmono,\SFmono$}, \\
                                                   \footnotesize{$\NLOG$-complete, $\mathbf L=\Kmono$}
                                  \end{tabular}};
\node at (535bp,30bp) {\begin{tabular}{l}\footnotesize{$\Pt$-complete},\\ \footnotesize{$\mathbf L=\Kmono,\Tmono,\Fmono,\SFmono$}\end{tabular}};
\draw [dashed] plot [smooth] coordinates {(280bp,240bp) (380bp,140bp) (675bp,140bp) };
\draw [dashed] plot [smooth] coordinates {(415bp,20bp) (478bp,115bp) (520bp,115bp) (650bp,20bp) };
\draw [dashed] plot [smooth] coordinates {(250bp,260bp) (320bp,220bp) (655bp,235bp) };
\draw [dashed] plot [smooth] coordinates {(230bp,20bp) (370bp,190bp) (655bp,220bp) };
\draw [dashed] plot [smooth] coordinates {(50bp,40bp) (270bp,60bp) (340bp,20bp) };
\end{tikzpicture}
	\caption{Complexity results}
   \label{fig:complexity}
\end{figure}

\section{Conclusions}

In this paper, we considered the expressive power and complexity of sub-propositional fragments of  the basic modal logics of all directed graphs (\Kmono) and its axiomatic reflexive (\Tmono), transitive (\Fmono), and reflexive and transitive (\SFmono) extensions. The relative expressive power and the complexity results of the sub-propositional fragments of modal logic studied in this paper is depicted in Figures~\ref{fig:exprpowerComplete} and \ref{fig:complexity}, respectively. In most cases relative expressivity coincides with syntactical containment, with the notable exception of the Krom fragments, that are expressively equivalent, but not weakly expressively equivalent. Because of our very general approach for comparing the expressive power of languages, most of our result can be transferred to other sub-propositional modal logics such as the fragments of \LTL without Since and Until studied in~\cite{DBLP:conf/lpar/ArtaleKRZ13} and the sub-propositional fragments of \HS~\cite{artale2015,bresolin2017horn,DBLP:conf/jelia/BresolinMS14}. To the best of our knowledge, this is the first work where sub-Krom and sub-Horn fragments of these modal logics have been considered. Concerning complexity, starting from the known result that every modal logic between \Kmono and \SFmono is still \PSpace-complete even under the Horn restriction~\cite{nguyen2004complexity} (in opposition to the case of \SFive, which goes from \NP-complete to \Pt-complete), we proved that eliminating the use of diamonds in Horn formulas reduces the complexity of the satisfiability problem in all considered cases; some of our results can be also proved by using the machinery in~\cite{Nguyen2000,Nguyen2016}, which, on the other hand, does not immediately provide an implementable satisfiability-checking procedure. We proved that the satisfiability problem becomes \Pt-complete in the Horn fragment without diamonds, between \NLOG\ and \Pt\ in the core fragment without diamonds (and \NLOG-complete in the special case of \KmonoCoreBox), and we devised a fast satisfiability-checking algorithms for all considered cases. 

\medskip

While this study has been inspired from similar results in the temporal case, such as for fragments of \LTL\ without Since and Until and fragment of \HS, several natural questions concerning sub-Horn temporal logics remain open, and our results can certainly offer a solid starting point to their tackling. To mention one interesting case, consider \HS under the Horn restriction without diamonds: it is known that it is still undecidable in the discrete case~\cite{bresolin2017horn}, but such a proof makes extensive use of many different temporal operators; what would happen by considering simple fragments such as $\mathrm A$ or $\mathrm D$ (i.e., the fragments with a single modal operator corresponding to, respectively, Allen's relation {\em meets} and Allen's relation {\em during}) is unclear. Our technique, based on the intrinsic inability of \SFmono to force the non-linearity of a model under the considered restriction, seems, at least in principle, applicable to such a case. Other interesting cases include the behaviour of \LTL with Since/Until under the restrictions considered in this paper.

\section*{Acknowledgements}
The authors acknowledge the support from the Italian INdAM -- GNCS Project 2018 ``\emph{Formal methods for verification and  synthesis of discrete and hybrid systems}'' (D. Bresolin and G. Sciavicco), and the Spanish Project TIN15-70266-C2-P-1 (E. Mu\~noz-Velasco).


%
%

\bibliographystyle{alpha}
\bibliography{biblio}




\end{document}